\newcommand{\mutual}{\ensuremath{{I}}}
\newcommand{\entropy}{\ensuremath{{H}}}
\newtheorem{lemma}{Lemma}
\newtheorem{proposition}{Proposition}
\newtheorem{theorem}{Theorem}
\newtheorem{corollary}{Corollary}
\theoremstyle{definition}
\newcommand{\citeN}{\citet}
\newcommand*{\defeq}{\equiv}
\newcommand{\expt}[1]{{\mathbb{E}}[#1]}
\newcommand{\norm}[1]{\lVert#1\rVert}
\newcommand{\ip}[2]{\langle #1, #2 \rangle}
\renewcommand{\epsilon}{\varepsilon}
\DeclareMathOperator*{\argmax}{arg\,max}
\DeclareMathOperator*{\argmin}{arg\,min}
\newcommand{\reals}{\ensuremath{\mathbb{R}}}
\newcommand{\prob}{\ensuremath{\mathrm{Pr}}}
\newcommand{\stratis}[1]{\textcolor{green}{Stratis: #1}}
\newcommand{\thibaut}[1]{\textcolor{blue}{Thibaut: #1}}
\newcommand{\T}[1]{#1^T}
\newcommand{\EDP}{EDP}
\newcommand{\SEDP}{EDP}
\newcommand{\E}{{\tt E}}
\newcommand{\id}{\mathbbm{1}}
\newcommand{\junk}[1]{}
\newcommand{\dom}{\mathcal{D}}
\algrenewcommand\algorithmicrequire{\textbf{Input:}}
\algrenewcommand\algorithmicensure{\textbf{Output:}}
\title{Budget Feasible Mechanisms for Experimental Design}
\author{
    Thibaut Horel\\École Normale Supérieure\\\texttt{thibaut.horel@normalesup.org}
    \and
    Stratis Ioannidis\\Technicolor\\\texttt{stratis.ioannidis@technicolor.com}
    \and
    S. Muthukrishnan\\Rutgers University--Microsoft Research\\\texttt{muthu@cs.rutgers.edu}
}
\date{}
\begin{document}
\maketitle
\thispagestyle{empty}
\begin{abstract}

In the classical {\em experimental design} setting,
an  experimenter  \E\ 
has access to a population of $n$ potential experiment subjects $i\in \{1,\ldots,n\}$,  each associated with a vector of features $x_i\in\reals^d$.
Conducting an experiment with subject $i$  reveals an unknown value $y_i\in \reals$ to \E. \E\ typically assumes some 
hypothetical relationship between  $x_i$'s and $y_i$'s, \emph{e.g.},  $y_i \approx  \T{\beta} x_i$, and estimates 
$\beta$ from experiments, \emph{e.g.}, through linear regression. 
As a proxy for various practical constraints, \E{} may select only a subset of subjects on which to conduct the experiment. 

We initiate the study of budgeted mechanisms for experimental design. In this setting,  \E{} has a budget $B$.  
Each subject $i$ declares an associated cost $c_i >0$ to be part of the experiment, and must be paid at least her cost.  In particular,  the {\em  Experimental Design Problem} (\SEDP) is to find a set $S$ of subjects for the experiment that maximizes $V(S) = \log\det(I_d+\sum_{i\in S}x_i\T{x_i})$ under the constraint $\sum_{i\in S}c_i\leq B$; our objective function corresponds to  the information gain in  parameter  $\beta$ that is learned through linear regression methods, and is related to the so-called $D$-optimality criterion.  Further,  the subjects  are \emph{strategic} and may lie about their costs. Thus, we need to design a 
mechanism for \SEDP{} with suitable properties. 

We present a deterministic, polynomial time, budget feasible mechanism scheme, that is approximately truthful and yields a constant ($\approx 12.98$) factor  approximation to \EDP. 
By applying previous work on budget feasible mechanisms with a submodular objective, one could {\em only} have derived either an exponential time deterministic mechanism or a randomized  polynomial time mechanism. We also establish that no truthful, budget-feasible mechanism is possible within a factor $2$ approximation, and show how to generalize our approach to a wide class of learning problems, beyond linear regression.

\end{abstract}

\clearpage
\setcounter{page}{1}
\section{Introduction}
 In the classic setting of experimental design \cite{pukelsheim2006optimal,atkinson2007optimum},
 an {\em experimenter}  \E\ has access to a population of $n$ potential experiment subjects. 
Each subject $i\in  \{1,\ldots,n\}$ is associated with a set of parameters (or features) $x_i\in \reals^d$, 
known to the experimenter. 
\E\ wishes to measure a certain inherent property of the subjects by performing an experiment: the outcome $y_i$ of the experiment on a subject $i$ is unknown to \E\ before the experiment is performed.

Typically, \E\ has a hypothesis on the relationship between $x_i$'s and $y_i$'s. Due to its simplicity, as well as its ubiquity in statistical analysis, a large body of work has focused on linear hypotheses: \emph{i.e.}, it is assumed that there exists a $\beta\in\reals^d$ such that  
$$y_i =  \T{\beta} x_i+\varepsilon_i,$$ for all $i\in \{1,\ldots,n\},$ where $\varepsilon_i$ are zero-mean, i.i.d.~random variables. Conducting the experiments and obtaining the measurements $y_i$ lets \E\  estimate  $\beta$, \emph{e.g.}, through linear regression. 
 
The above experimental design scenario  has many applications. Regression over  personal data collected through surveys or experimentation is the cornerstone of marketing research, as well as research in a variety of experimental sciences such as medicine and sociology. Crucially, statistical analysis of user data is also a widely spread practice among Internet companies, which routinely use machine learning techniques over vast records of user data to perform inference and classification tasks integral to their daily operations.
Beyond linear regression, there is a rich literature about estimation procedures, as well as about means of quantifying the quality of the produced estimate~\cite{pukelsheim2006optimal}.  There is also an extensive theory on how to select subjects 
if \E\ can conduct only a limited number of experiments, so the estimation process returns a $\beta$
that approximates the true parameter of the underlying population \cite{ginebra2007measure,le1996comparison,chaloner1995bayesian,boyd2004convex}. 

We depart from this classical setup by viewing experimental design in a strategic setting, and by studying budgeted mechanism design issues. 
In our setting, experiments cannot be manipulated and hence measurements are reliable. 
 \E{} has a total budget of $B$ to conduct all the experiments. 
There is a cost $c_i$ associated with experimenting on
subject $i$ which varies from subject to subject.  This cost $c_i$ is determined by the subject $i$ and reported to \E; subjects are strategic and may misreport these costs. Intuitively, $c_i$  may be viewed as the  
cost $i$ incurs when tested and for which she needs to be reimbursed; or, it might be viewed as the incentive for $i$ to participate in the experiment; or, it might be the intrinsic worth of the data to the subject. 
 The economic aspect of paying subjects has always been inherent in experimental design: experimenters often work within strict budgets and design creative incentives. Subjects often negotiate better incentives or higher payments. 
However, we are not aware of a principled study of this setting from a strategic point of view, when subjects declare their costs and therefore determine their payment.  Such a setting is increasingly realistic, given the growth of these experiments over the Internet. 


Our contributions are as follows.
\begin{itemize}
\item
We initiate the study of experimental design in the presence of a budget and strategic subjects. 
 In particular, we formulate the  {\em  Experimental Design Problem} (\SEDP) as
 follows: the experimenter \E\ wishes to find a set $S$ of subjects to maximize 
\begin{align}V(S) = \log\det\Big(I_d+\sum_{i\in S}x_i\T{x_i}\Big) \label{obj}\end{align}
subject to a budget constraint $\sum_{i\in S}c_i\leq B$, where $B$ is \E's budget. When subjects are strategic, the above problem can be naturally approached  as a \emph{budget feasible mechanism design} problem, as introduced by \citeN{singer-mechanisms}.

\smallskip
The objective function, which is the key, is formally obtained by optimizing  the information gain in  $\beta$ when the latter is learned  through ridge regression, and is related to  the so-called $D$-optimality criterion~\cite{pukelsheim2006optimal,atkinson2007optimum}. 
\item
We present a polynomial time mechanism scheme for \SEDP{} that is approximately truthful and yields a constant factor ($\approx 12.98$) approximation to the optimal value of \eqref{obj}. 
In contrast to this, we show that no truthful, budget-feasible mechanisms are possible for \SEDP{}  within a factor 2 approximation. 

\smallskip
We note that the objective \eqref{obj} is submodular. Using this fact, applying
previous results on budget feasible mechanism design under general submodular
objectives~\cite{singer-mechanisms,chen} would yield either a deterministic,
truthful, constant-approximation mechanism that requires exponential time,  or
a non-determi\-nistic, (universally) truthful, poly-time mechanism that yields a constant approximation ratio only \emph{in expectation} (\emph{i.e.}, its approximation guarantee for a given instance may in fact be unbounded).
\end{itemize}

From a technical perspective, we propose a convex optimization problem and establish that its optimal value is within a constant factor from the optimal value of \EDP.  
 In particular, we show our relaxed objective is within a constant factor from the so-called multi-linear extension of  \eqref{obj}, which in turn can be related to \eqref{obj} through pipage rounding. We establish the constant factor to the multi-linear extension by bounding  the partial derivatives of these two functions; we achieve the latter by exploiting convexity properties of matrix functions over the convex cone of positive semidefinite matrices.

Our convex relaxation of \EDP{} involves maximizing a self-concordant function subject to linear constraints. Its optimal value can be computed with arbitrary accuracy in polynomial time using the so-called barrier method. However, the outcome of this computation may not necessarily be monotone, a property needed in designing a truthful mechanism. Nevetheless, we construct an algorithm that solves the above convex relaxation and is ``almost'' monotone; we achieve this by applying the barrier method on a set perturbed constraints, over which our objective is ``sufficiently'' concave. In turn, we show how to employ this algorithm to design a poly-time, $\delta$-truthful, constant-approximation mechanism for \EDP{}.



In what follows, we describe related work in Section~\ref{sec:related}. We briefly review  experimental design and budget feasible mechanisms in Section~\ref{sec:peel} and define \SEDP\ formally. We present our convex relaxation to \EDP{} in Section~\ref{sec:approximation} and use it to construct our mechanism in Section~\ref{sec:main}. We conclude in Section~\ref{sec:concl}. All proofs of our technical results are provided in the appendix. 

\junk{

\stratis{maximizing other ``optimality criteria'' (E-optimality, G-optimality, etc.) Improved upper lower bound for $R\neq I$. General entropy gain optimization}

\begin{itemize}
    \item already existing field of experiment design: survey-like setup, what
    are the best points to include in your experiment? Measure of the
    usefulness of the data: variance-reduction or entropy-reduction.
    \item nowadays, there is also a big focus on purchasing data: paid surveys,
    mechanical turk, etc. that add economic aspects to the problem of
    experiment design
    \item recent advances (Singer, Chen) in the field of budgeted mechanisms
    \item we study ridge regression, very widely used in statistical learning,
    and treat it as a problem of budgeted experiment design
    \item we make the following contributions: ...
    \item extension to a more general setup which includes a wider class of
    machine learning problems
\end{itemize}

}

\section{Related work}
\label{sec:related}

\junk{\subsection{Experimental Design} The classic experimental design problem, which we also briefly review in Section~\ref{sec:edprelim}, deals with  which $k$ experiments to conduct among a set of $n$ possible experiments. It is a well studied problem both in the non-Bayesian \cite{pukelsheim2006optimal,atkinson2007optimum,boyd2004convex} and Bayesian setting \cite{chaloner1995bayesian}. Beyond $D$-optimality, several other objectives are encountered in the literature  \cite{pukelsheim2006optimal}; many involve some function of the covariance matrix of the estimate of $\beta$, such as $E$-optimality (maximizing the smallest eigenvalue of the covariance of $\beta$) or $T$-optimality (maximizing the trace). Our focus on $D$-optimality is motivated by both its tractability as well as its relationship to the information gain.  
}
\paragraph{Budget Feasible Mechanisms for General Submodular Functions}
Budget feasible mechanism design was originally proposed by  \citeN{singer-mechanisms}. Singer considers the problem of maximizing an arbitrary submodular function subject to a budget constraint in the \emph{value query} model, \emph{i.e.} assuming an  oracle providing the  value of the submodular objective on any given set. 
 Singer shows that there exists a randomized, 112-approximation mechanism for submodular maximization that is \emph{universally truthful} (\emph{i.e.}, it is a randomized mechanism sampled from a distribution over truthful mechanisms). \citeN{chen}  improve this result by providing a 7.91-approximate mechanism, and show a corresponding lower bound of $2$ among universally truthful randomized mechanisms for submodular maximization.

The above approximation guarantees hold for the expected value of the
randomized mechanism: for a given
instance, the approximation ratio provided by the mechanism may in fact be
unbounded. No deterministic, truthful, constant approximation mechanism that
runs in polynomial time is presently known for submodular maximization.
However, assuming access to an oracle providing the optimum in the
full-information setup, Chen \emph{et al.},~propose a truthful, $8.34$-approximate mechanism; in cases for which the full information problem is NP-hard, as the one we consider here, this mechanism is not poly-time, unless P=NP.  Chen \emph{et al.}~also prove a $1+\sqrt{2}$ lower bound for truthful deterministic mechanisms, improving upon an earlier bound of 2 by \citeN{singer-mechanisms}.

\paragraph{Budget Feasible Mechanisms on Specific Problems}
Improved bounds, as well as deterministic polynomial mechanisms, are known for specific submodular objectives. For symmetric submodular functions, a truthful mechanism with approximation ratio 2 is known, and this ratio is tight \cite{singer-mechanisms}. Singer also provides a 7.32-approximate truthful mechanism for the budget feasible version of \textsc{Matching}, and a corresponding lower bound of 2 \cite{singer-mechanisms}. Improving an earlier result by Singer, \citeN{chen} give a truthful, $2+\sqrt{2}$-approximate mechanism for \textsc{Knapsack}, and a lower bound of $1+\sqrt{2}$. Finally, a truthful, 31-approximate  mechanism is also known for the budgeted version of \textsc{Coverage} \cite{singer-influence}. 

The  deterministic mechanisms for \textsc{Knapsack} \cite{chen} and
\textsc{Coverage} \cite{singer-influence} follow the same general framework,
which we also employ in our mechanism for \EDP. We describe this framework in detail in
Section~\ref{sec:main}. Both of these mechanisms rely on approximating the
optimal solution to the underlying combinatorial problem by a well-known linear program (LP) relaxation~\cite{pipage}, which can be solved exactly in polynomial time. No
such  relaxation exists for \EDP, which unlikely to be
approximable through an LP due to its logarithmic objective. We develop instead a convex relaxation to \EDP;
though, contrary to the above LP relaxations, this cannot be solved exactly, we
establish that it can be incorporated in the framework of
\cite{chen,singer-influence} to yield a $\delta$-truthful mechanism for \EDP.


\paragraph{Beyond Submodular Objectives}
Beyond submodular objectives, it is known that no truthful mechanism with approximation ratio smaller than $n^{1/2-\epsilon}$ exists for maximizing  fractionally subadditive functions (a class that includes submodular functions) assuming access to a value query oracle~\cite{singer-mechanisms}. Assuming access to a stronger oracle (the \emph{demand} oracle), there exists
a truthful, $O(\log^3 n)$-approximate mechanism 
\cite{dobz2011-mechanisms} as well as a universally truthful, $O(\frac{\log n}{\log \log n})$-approximate mechanism for subadditive maximization 
\cite{bei2012budget}. Moreover, in a Bayesian setup, assuming a prior distribution among the agent's costs, there exists a truthful mechanism with a 768/512-approximation ratio \cite{bei2012budget}. 
Posted price, rather than direct revelation mechanisms, are also studied in \cite{singerposted}.

\paragraph{Monotone Approximations in Combinatorial Auctions}
Relaxations of combinatorial problems are prevalent
in \emph{combinatorial auctions}, 
in which an auctioneer aims at maximizing a set function which is the sum of utilities of strategic bidders (\emph{i.e.}, the social welfare). As noted by \citeN{archer-approximate},
approximations to this maximization must preserve incentive compatibility and truthfulness. Most approximation
algorithms do not preserve these properties, hence specific relaxations, and corresponding roundings to an integral solution, must be
constructed. \citeN{archer-approximate} propose a randomized rounding of the LP relaxation of the \textsc{SetPacking} problem, yielding a mechanism
which is \emph{truthful-in-expectation}. 
\citeN{lavi-truthful} construct randomized
truthful-in-expectation mechanisms for several combinatorial auctions, improving the approximation
ratio in \cite{archer-approximate}, by treating the fractional solution of an LP as a probability distribution from which they sample integral solutions. 

Beyond LP relaxations, 
\citeN{dughmi2011convex} propose
truthful-in-expectation mechanisms for combinatorial auctions in which
the bidders' utilities are matroid rank sum functions (applied earlier to the \textsc{CombinatorialPublicProjects} problem \cite{dughmi-truthful}).  Their framework relies
on solving a convex optimization problem which can only be solved approximately. As in \cite{lavi-truthful}, they also treat the fractional solution as a distribution over which they sample integral solutions. The authors ensure that a solver is applied to a
``well-conditioned'' problem,  which resembles the technical challenge we face in
Section~\ref{sec:monotonicity}. However, we seek a deterministic mechanism and $\delta$-truthfulness, not truthfulness-in-expectation. In addition, our objective is not a matroid rank sum function. As such, both the methodology for dealing with problems that are not ``well-conditioned''  as well as the approximation guarantees of the convex relaxation  in \cite{dughmi2011convex} do not readily extend to \EDP.

\citeN{briest-approximation} construct monotone FPTAS for problems that can be approximated through rounding techniques, which in turn can be used to construct truthful, deterministic, constant-approximation  mechanisms for corresponding  combinatorial auctions. \EDP{} is not readily approximable through such rounding techniques; as such, we rely on a relaxation to approximate it.

\paragraph{$\delta$-Truthfulness and Differential Privacy}

The notion of $\delta$-truthfulness has attracted considerable attention recently in the context of differential privacy (see, \emph{e.g.}, the survey by \citeN{pai2013privacy}).  \citeN{mcsherrytalwar} were the first to observe that any $\epsilon$-differentially private mechanism must also be $\delta$-truthful in expectation, for $\delta=2\epsilon$. This property was used to construct $\delta$-truthful (in expectation) mechanisms for a digital goods auction~\cite{mcsherrytalwar} and for $\alpha$-approximate equilibrium selection \cite{kearns2012}. \citeN{approximatemechanismdesign} propose a framework for converting a differentially private mechanism to a truthful-in-expectation mechanism by randomly selecting between a differentially private mechanism with good approximation guarantees, and a truthful mechanism. They apply their framework to the \textsc{FacilityLocation} problem. We depart from the above works in seeking a deterministic mechanism for \EDP, and using a stronger notion of $\delta$-truthfulness.

\fussy

\section{Preliminaries}\label{sec:peel}
\label{sec:prel}
\subsection{Linear Regression and Experimental Design}\label{sec:edprelim}

The theory of experimental design \cite{pukelsheim2006optimal,atkinson2007optimum,chaloner1995bayesian} considers the following formal setting. 
Suppose that an experimenter \E\ wishes to conduct $k$ among $n$ possible
experiments. Each experiment $i\in\mathcal{N}\defeq \{1,\ldots,n\}$ is
associated with a set of parameters (or features) $x_i\in \reals^d$, normalized
so that $$b\leq \|x_i\|^2_2\leq 1,$$ for some $b>0$. Denote by $S\subseteq \mathcal{N}$, where $|S|=k$, the set of experiments selected; upon its execution, experiment $i\in S$ reveals an output variable (the ``measurement'') $y_i$,  related to the experiment features $x_i$ through a linear function, \emph{i.e.},
\begin{align}\label{model}
     \forall i\in\mathcal{N},\quad y_i = \T{\beta} x_i + \varepsilon_i
\end{align}
where $\beta$ is a vector in $\reals^d$, commonly referred to as the \emph{model}, and $\varepsilon_i$ (the \emph{measurement noise}) are independent, normally distributed random  variables with mean 0 and variance $\sigma^2$. 

For example, each $i$ may correspond to a human subject; the
feature vector $x_i$ may correspond to a normalized vector of her age, weight,
gender, income, \emph{etc.}, and the measurement $y_i$ may capture some
biometric information (\emph{e.g.}, her red cell blood count, a genetic marker,
etc.). The magnitude of the coefficient $\beta_i$ captures the effect that feature $i$ has on the measured variable, and its sign captures whether the correlation is positive or negative.

The purpose of these experiments is to allow \E\  to estimate the model $\beta$. In particular,
 assume that the experimenter  \E\ has a {\em prior}
distribution on $\beta$, \emph{i.e.},  $\beta$ has a multivariate normal prior
with zero mean  and covariance $\sigma^2R^{-1}\in \reals^{d^2}$ (where $\sigma^2$ is the noise variance). 
Then, \E\ estimates $\beta$ through \emph{maximum a posteriori estimation}: \emph{i.e.}, finding the parameter which maximizes the posterior distribution of $\beta$ given the observations $y_S$. Under the linearity assumption \eqref{model} and the Gaussian prior on $\beta$, maximum a posteriori estimation leads to the following maximization \cite{hastie}: 
\begin{equation}
\begin{split}
    \hat{\beta} &= \argmax_{\beta\in\reals^d} \prob(\beta\mid y_S) =\argmin_{\beta\in\reals^d} \big(\sum_{i\in S} (y_i - \T{\beta}x_i)^2
    + \T{\beta}R\beta\big)\\
    &= (R+\T{X_S}X_S)^{-1}X_S^Ty_S \label{ridge}
    \end{split}
\end{equation}
where the last equality is obtained by setting  $\nabla_{\beta}\prob(\beta\mid y_S)$ to zero and solving the resulting linear system; in \eqref{ridge}, $X_S\defeq[x_i]_{i\in S}\in \reals^{|S|\times d}$ is the matrix of experiment features and
$y_S\defeq[y_i]_{i\in S}\in\reals^{|S|}$ are the observed measurements. 
This optimization, commonly known as \emph{ridge regression}, includes an additional quadratic penalty term $\beta^TR\beta$ compared to the standard least squares estimation.

Let $V:2^\mathcal{N}\to\reals$ be a \emph{value function}, quantifying how informative a set of experiments $S$ is in estimating $\beta$. The classical experimental design problem amounts to finding a set $S$ that maximizes $V(S)$ subject to the constraint $|S|\leq k$. 
A variety of different value functions are used in literature~\cite{pukelsheim2006optimal,boyd2004convex}; 
one that has natural advantages is the \emph{information gain}:  
\begin{align}
   V(S)= I(\beta;y_S) = \entropy(\beta)-\entropy(\beta\mid y_S). \label{informationgain}
\end{align}
which is the entropy reduction on $\beta$ after the revelation of $y_S$ (also known as the mutual information between $y_S$ and $\beta$). 
Hence, selecting a set of experiments $S$ that
maximizes $V(S)$ is equivalent to finding the set of experiments that minimizes
the uncertainty on $\beta$, as captured by the entropy reduction of its estimator.
Under the linear model \eqref{model}, and the Gaussian prior, the information gain takes the following form (see, \emph{e.g.}, \cite{chaloner1995bayesian}):
\begin{align}
 I(\beta;y_S)&= \frac{1}{2}\log\det(R+ \T{X_S}X_S) - \frac{1}{2}\log\det R\label{dcrit} 
\end{align}
Maximizing $I(\beta;y_S)$ is therefore equivalent to maximizing $\log\det(R+ \T{X_S}X_S)$, which is  known in the experimental design literature as the Bayes
$D$-optimality criterion
\cite{pukelsheim2006optimal,atkinson2007optimum,chaloner1995bayesian}. 



Our analysis will focus on the case of a \emph{homotropic} prior, in which the
prior covariance is the identity matrix, \emph{i.e.}, $R=I_d\in \reals^{d\times
d}.$ Intuitively, this corresponds to the simplest prior, in which no direction
of $\reals^d$ is a priori favored; equivalently, it also corresponds to the
case where ridge regression estimation \eqref{ridge} performed by $\E$ has
a penalty term $\norm{\beta}_2^2$. A generalization of our results to arbitrary 
covariance matrices $R$ can be found in Appendix~\ref{sec:ext}.


\subsection{Budget-Feasible Experimental Design: Full Information Case}\label{sec:fullinfo}

Instead of the cardinality constraint in classical experimental design discussed above, we consider a budget-constrained version. 
Each experiment is associated with a cost $c_i\in\reals_+$.   
The cost $c_i$ can capture, \emph{e.g.}, the amount the subject $i$ deems sufficient to incentivize her participation in the experiment. The experimenter $\E$ is limited by a budget $B\in \reals_+$.  In the full-information case,  experiment costs are common knowledge; as such, the experimenter wishes to solve: 
\medskip\\\hspace*{\stretch{1}}\textsc{ExperimentalDesignProblem} (\EDP)\hspace*{\stretch{1}}
\begin{subequations}
\begin{align}
\text{Maximize}\quad V(S) &= \log\det(I_d+\T{X_S}X_S) \label{modified} \\
\text{subject to}\quad \sum_{i\in S} c_i&\leq B\label{lincon}
\end{align}\label{edp}
\end{subequations}
\emph{W.l.o.g.}, we assume that $c_i\in [0,B]$ for all $i\in \mathcal{N}$, as no $i$ with $c_i>B$ can be in an $S$ satisfying \eqref{lincon}. Denote by
\begin{equation}\label{eq:non-strategic}
    OPT = \max_{S\subseteq\mathcal{N}} \Big\{V(S) \;\Big| \;
    \sum_{i\in S}c_i\leq B\Big\}
\end{equation}
the optimal value achievable in the full-information case. 
\EDP{}, as defined above, is  NP-hard; to see this, note that \textsc{Knapsack}
reduces to EDP with dimension $d=1$ by mapping the weight of each item, say,
$w_i$, to an experiment with $x_i^2=w_i$.

The value function \eqref{modified} has the following properties, which are proved in Appendix~\ref{app:properties}. First, it is non-negative, \emph{i.e.}, $V(S)\geq 0$
for all $S\subseteq\mathcal{N}$. Second, it is
also monotone, \emph{i.e.}, $V(S)\leq V(T)$ for all $S\subseteq T$, with
$V(\emptyset)=0$. Finally, it is submodular, \emph{i.e.}, 
$V(S\cup \{i\})-V(S)\geq V(T\cup \{i\})-V(T)$ for all $S\subseteq T\subseteq \mathcal{N}$ and $i\in \mathcal{N}$. 
The above imply that a greedy algorithm yields a constant approximation ratio to \EDP.
In particular, consider the greedy algorithm in which, for
 $S\subseteq\mathcal{N}$ the set constructed thus far, the next
element $i$  included is the one which maximizes the
\emph{marginal-value-per-cost}, \emph{i.e.},
  $  i = \argmax_{j\in\mathcal{N}\setminus S}{(V(S\cup\{i\}) - V(S))}/{c_i}.$ 
This is repeated until adding an element in $S$ exceeds the budget
 $B$. Denote by $S_G$ the set constructed by this heuristic and let
$i^*=\argmax_{i\in\mathcal{N}} V(\{i\})$ be the element of maximum singleton value. Then,
the following algorithm:
\begin{equation}\label{eq:max-algorithm}
\textbf{if}\; V(\{i^*\}) \geq V(S_G)\; \textbf{return}\; \{i^*\}
\;\textbf{else return}\; S_G
\end{equation}
yields an approximation ratio  of $\frac{5 e }{e-1}~$\cite{singer-mechanisms}; this can be further improved to $\frac{e}{e-1}$ using more complicated greedy set constructions~\cite{krause-submodular,sviridenko-submodular}.

 \subsection{Budget-Feasible Experimental Design: Strategic Case}
We study the following \emph{strategic} setting, in which the costs $c_i$ are {\em not} common knowledge and their reporting can be manipulated by the experiment subjects. The latter are strategic and wish to maximize their utility, which is the difference of the payment they receive and their true cost. We note that, though subjects may misreport $c_i$, they cannot lie about $x_i$ (\emph{i.e.}, all public features are verifiable prior to the experiment) nor $y_i$ (\emph{i.e.}, the subject cannot falsify her measurement).
In this setting,  experimental design reduces to a \emph{budget feasible reverse auction}, as introduced by \citeN{singer-mechanisms}; we review the formal definition  in Appendix~\ref{app:budgetfeasible}. In short, given a budget $B$ and a value function $V:2^{\mathcal{N}}\to\reals_+$, a \emph{reverse auction mechanism} $\mathcal{M} = (S,p)$ comprises (a) an \emph{allocation function}\footnote{Note that $S$ would be more aptly termed as a \emph{selection} function, as this is a reverse auction, but we retain the term ``allocation'' to align with the familiar term from standard auctions.}
$S:\reals_+^n \to 2^\mathcal{N}$,  determining the set 
 of experiments to be purchased, and (b) a \emph{payment function}
$p:\reals_+^n\to \reals_+^n$, determining the payments $[p_i(c)]_{i\in \mathcal{N}}$ received by experiment subjects.
  
We seek mechanisms that are \emph{normalized} (unallocated experiments receive zero payments), \emph{individually rational} (payments for allocated experiments exceed costs), have \emph{no positive transfers} (payments are non-negative), and are \emph{budget feasible} (the sum of payments does not exceed the budget $B$). 
We relax the notion of truthfulness to \emph{$\delta$-truthfulness}, requiring that reporting one's true cost is an \emph{almost-dominant} strategy: no subject increases their utility by reporting a cost that differs more than $\delta>0$ from their true cost. Under this definition, a mechanism is truthful if $\delta=0$. 
In addition, we would like the allocation $S(c)$ to be of maximal value; however, $\delta$-truthfulness, as well as the hardness of \EDP{}, preclude achieving this goal. Hence, we seek mechanisms with that  are \emph{$(\alpha,\beta)$-approximate}, \emph{i.e.}, there exist $\alpha\geq 1$ and $\beta>0$ s.t.~$OPT \leq \alpha V(S(c))+\beta$, and are \emph{computationally efficient}, in that $S$ and $p$ can be computed in polynomial time.

We note that  the constant approximation algorithm \eqref{eq:max-algorithm} breaks
truthfulness. Though this is not true for all submodular functions (see, \emph{e.g.}, \cite{singer-mechanisms}), it is true for the objective of \EDP{}: we show this in Appendix~\ref{sec:non-monotonicity}. This motivates our study of more complex mechanisms.

\section{Approximation Results}\label{sec:approximation}
Previous approaches towards designing truthful, budget feasible mechanisms for \textsc{Knapsack}~\cite{chen} and \textsc{Coverage}~\cite{singer-influence} build upon polynomial-time algorithms that compute an approximation of $OPT$, the optimal value in the full information case. Crucially, to be used in designing a truthful mechanism, such algorithms need also to be \emph{monotone}, in the sense that decreasing any cost $c_i$ leads to an increase  in the estimation of $OPT$; 
the monotonicity property  precludes using traditional approximation algorithms.

In the first part of this section, we address this issue by designing a convex relaxation of \EDP{}, and showing that its solution can be used to approximate $OPT$. The objective of this relaxation is concave and self-concordant \cite{boyd2004convex} and, as such, there exists an algorithm that solves this relaxed problem with arbitrary accuracy in polynomial time. Unfortunately, the output of this algorithm may not necessarily be monotone. Nevertheless, in the second part of this section, we show that a  solver of the relaxed problem can be used to construct a solver that is ``almost'' monotone. In Section~\ref{sec:main}, we show that this algorithm can be used to design a $\delta$-truthful mechanism for \EDP.





\subsection{A Convex Relaxation of \EDP}\label{sec:concave}

A classical way of relaxing combinatorial optimization problems is 
\emph{relaxing by expectation}, using the so-called \emph{multi-linear}
extension of the objective function $V$ (see, \emph{e.g.}, \cite{calinescu2007maximizing,vondrak2008optimal,dughmi2011convex}).
This is because this extension can yield approximation guarantees for a wide class of combinatorial problems through \emph{pipage rounding}, a technique proposed by \citeN{pipage}. Crucially for our purposes, such relaxations in general preserve monotonicity which, as discussed, is required in mechanism design.

Formally, let $P_\mathcal{N}^\lambda$ be a probability distribution over $\mathcal{N}$ parametrized by $\lambda\in [0,1]^n$, where a set $S\subseteq \mathcal{N}$ sampled from  $P_\mathcal{N}^\lambda$ is constructed as follows:  each $i\in \mathcal{N}$ is selected to be in $S$ independently with probability $\lambda_i$, \emph{i.e.},
$    P_\mathcal{N}^\lambda(S) \defeq \prod_{i\in S} \lambda_i
    \prod_{i\in\mathcal{N}\setminus S}( 1 - \lambda_i).$
Then, the \emph{multi-linear} extension $F:[0,1]^n\to\reals$ of $V$ is defined as the
expectation of $V$ under the  distribution $P_\mathcal{N}^\lambda$:
\begin{equation}\label{eq:multi-linear}
    F(\lambda) 
    \defeq \mathbb{E}_{S\sim P_\mathcal{N}^\lambda}\big[V(S)\big]
= \mathbb{E}_{S\sim P_\mathcal{N}^\lambda}\left[ \log\det\left( I_d + \sum_{i\in S} x_i\T{x_i}\right) \right],\quad \lambda\in[0,1]^n.
\end{equation}
Function $F$ is an extension of $V$ to the domain  $[0,1]^n$, as it equals $V$ on integer inputs: $F(\id_S) = V(S)$ for all
$S\subseteq\mathcal{N}$, where $\id_S$ denotes the indicator vector of $S$. 
Contrary to problems such as \textsc{Knapsack}, the
multi-linear extension \eqref{eq:multi-linear} cannot be optimized in
polynomial time for  the value function $V$  we study here, given by \eqref{modified}. Hence, we introduce an extension $L:[0,1]^n\to\reals$ s.t.~
\begin{equation}\label{eq:our-relaxation}
\begin{split}
\quad L(\lambda) &\defeq
\log\det\left(I_d + \sum_{i\in\mathcal{N}} \lambda_i x_i\T{x_i}\right)\\
&=
\log\det\left(\mathbb{E}_{S\sim P_\mathcal{N}^\lambda}\bigg[I_d + \sum_{i\in S}
x_i\T{x_i} \bigg]\right),\quad \lambda\in[0,1]^n
\end{split}
\end{equation}
Note that $L$ also extends $V$, and follows naturally from the multi-linear extension by swapping the
expectation and $\log \det$ in \eqref{eq:multi-linear}. Crucially, it is \emph{strictly concave} on $[0,1]^n$, a fact that  we exploit in the next section to maximize $L$ subject to the budget constraint in polynomial time.

Our first technical lemma relates the concave extension $L$ to the multi-linear extension $F$:
\begin{lemma}\label{lemma:relaxation-ratio}
For all $\lambda\in[0,1]^{n},$
     $           \frac{1}{2}
        \,L(\lambda)\leq
        F(\lambda)\leq L(\lambda)$.
\end{lemma}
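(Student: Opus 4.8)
The plan is to prove the two inequalities separately, exploiting that both extensions vanish at the origin: $F(\mathbf 0)=L(\mathbf 0)=V(\emptyset)=\log\det I_d=0$. Throughout I write $M_S\defeq I_d+\sum_{i\in S}x_i\T{x_i}$ and $M(\mu)\defeq I_d+\sum_j\mu_j x_j\T{x_j}$.

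The upper bound $F(\lambda)\le L(\lambda)$ is immediate from Jensen's inequality. Indeed $F(\lambda)=\mathbb{E}_{S\sim P_\mathcal{N}^\lambda}[\log\det M_S]$ while $L(\lambda)=\log\det \mathbb{E}_{S\sim P_\mathcal{N}^\lambda}[M_S]$, and since $A\mapsto\log\det A$ is concave on the cone of positive definite matrices, Jensen gives $\mathbb{E}[\log\det M_S]\le \log\det\mathbb{E}[M_S]$, which is exactly $F(\lambda)\le L(\lambda)$.

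For the lower bound $\tfrac12 L(\lambda)\le F(\lambda)$, I would compare the two functions through their gradients along the segment from $\mathbf 0$ to $\lambda$. Setting $\phi(t)=F(t\lambda)$ and $\psi(t)=L(t\lambda)$ with $\phi(0)=\psi(0)=0$, it suffices to establish the pointwise gradient bound $\partial_i F(\mu)\ge\tfrac12\,\partial_i L(\mu)$ at every $\mu\in[0,1]^n$, with both partials nonnegative: then $\phi'(t)=\sum_i\lambda_i\,\partial_iF(t\lambda)\ge\tfrac12\sum_i\lambda_i\,\partial_iL(t\lambda)=\tfrac12\psi'(t)$, and integrating over $t\in[0,1]$ yields $F(\lambda)\ge\tfrac12 L(\lambda)$. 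The partials are $\partial_i L(\mu)=\T{x_i}M(\mu)^{-1}x_i\ge 0$ by a direct computation, and, by multilinearity of $F$ together with the matrix-determinant lemma, $\partial_i F(\mu)=\mathbb{E}_{T}\big[\log\big(1+\T{x_i}M_T^{-1}x_i\big)\big]\ge 0$, where $T$ ranges over subsets of $\mathcal N\setminus\{i\}$ drawn from $P^\mu$.

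The crux is this pointwise inequality, which I would derive from two ingredients. First, the normalization $\norm{x_i}_2^2\le1$ and $M_T\succeq I_d$ force the marginal $z_T\defeq \T{x_i}M_T^{-1}x_i\in[0,1]$; on this interval concavity of $z\mapsto\log(1+z)$ gives the chord bound $\log(1+z)\ge z\log 2\ge\tfrac12 z$, so $\partial_i F(\mu)\ge\tfrac12\,\mathbb{E}_T[z_T]$. Second, since $A\mapsto\T{x_i}A^{-1}x_i$ is convex on positive definite matrices (operator convexity of inversion), Jensen gives $\mathbb{E}_T[z_T]\ge \T{x_i}\big(\mathbb{E}_T[M_T]\big)^{-1}x_i$; and as $\mathbb{E}_T[M_T]=I_d+\sum_{j\ne i}\mu_j x_j\T{x_j}\preceq M(\mu)$ with inversion operator antitone, this is at least $\T{x_i}M(\mu)^{-1}x_i=\partial_i L(\mu)$. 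Chaining the two bounds yields $\partial_i F(\mu)\ge\tfrac12\partial_i L(\mu)$. The main obstacle is precisely this gradient comparison: it is where the feature normalization $\norm{x_i}_2^2\le1$ is essential (confining each marginal to $[0,1]$ so the scalar chord inequality applies) and where operator convexity and antitonicity of matrix inversion must be invoked to push the expectation inside the quadratic form and relate $\mathbb{E}_T[M_T]$ to $M(\mu)$. The upper bound and the ray-integration are routine by comparison, and I note the argument in fact yields the sharper constant $\log 2>\tfrac12$, so the stated bound holds with room to spare.
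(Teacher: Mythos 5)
Your proof is correct, and its core---the pointwise gradient comparison $\partial_i F\geq\tfrac12\,\partial_i L$ obtained from the marginal-gain formula, the normalization $\norm{x_i}_2^2\leq 1$, the chord bound for $\log(1+z)$ on $[0,1]$, and convexity/antitonicity of matrix inversion---is exactly the key lemma in the paper's proof; the upper bound via Jensen is identical. You differ in two places, both to your advantage. First, to pass from the derivative bound to $F\geq\tfrac12 L$, the paper analyzes where the minimum of the ratio $F/L$ over $[0,1]^n$ is attained (a Taylor-expansion argument as $\lambda\to 0$, the vertices of the hypercube, or an interior critical point where $\partial_i(F/L)=0$ forces $F/L=\partial_iF/\partial_iL$); your integration of $\phi'(t)\geq\tfrac12\psi'(t)$ along the ray from $\mathbf{0}$, anchored at $F(\mathbf{0})=L(\mathbf{0})=0$, reaches the same conclusion more directly and sidesteps the delicate limiting case at the origin where $F/L$ is formally $0/0$. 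Second, inside the gradient bound the paper first converts the expectation over $S\subseteq\mathcal{N}\setminus\{i\}$ into one over all subsets of $\mathcal{N}$ by splitting probabilities (losing a factor of $2$) before applying Jensen, whereas you apply Jensen directly to $T\subseteq\mathcal{N}\setminus\{i\}$ and then use $\mathbb{E}_T[M_T]\preceq M(\mu)$ together with antitonicity of the inverse; this is shorter and yields the sharper constant $\log 2$ that you note. A further point in your favor: the scalar step the paper invokes, namely ``$\log(1+x)\geq x$ for $x\leq 1$,'' holds only for base-$2$ logarithms, while your chord inequality $\log(1+z)\geq z\log 2\geq z/2$ on $[0,1]$ is the correct statement for the natural logarithm and is what actually makes the constant $1/2$ go through.
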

The proof of this lemma can be found in Appendix~\ref{proofofrelaxation-ratio}. In short,  exploiting the concavity of the $\log\det$ function over the set of positive semi-definite matrices, we first  bound the ratio of all partial derivatives of $F$ and $L$. We then show that the bound on the ratio of the derivatives also implies a bound on the ratio $F/L$.

Armed with this result, we subsequently use  pipage rounding to show that any $\lambda$ that maximizes the multi-linear extension $F$ can be rounded to an ``almost'' integral solution. More specifically, given a set of costs $c\in \reals^n_+$, we say that a $\lambda\in [0,1]^n$ is feasible if it belongs to the set 
\begin{align}\dom_c =\{\lambda \in [0,1]^n: \sum_{i\in \mathcal{N}} c_i\lambda_i\leq B\}.\label{fdom}\end{align} Then, the following lemma holds:
\begin{lemma}[Rounding]\label{lemma:rounding}
    For any feasible $\lambda\in \dom_c$, there exists a feasible
    $\bar{\lambda}\in \dom_c$ such that (a) $F(\lambda)\leq F(\bar{\lambda})$, and (b) at most one of the 
    coordinates  of $\bar{\lambda}$ is fractional. 
\end{lemma}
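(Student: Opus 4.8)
The plan is to establish Lemma~\ref{lemma:rounding} by \emph{pipage rounding} \cite{pipage}, exploiting two structural properties of the multi-linear extension $F$. First, $F$ is \emph{multi-linear}: from its definition \eqref{eq:multi-linear} as $F(\lambda)=\sum_{S}V(S)\prod_{i\in S}\lambda_i\prod_{i\notin S}(1-\lambda_i)$, each $\lambda_i$ enters every monomial to degree at most one, so $F$ is affine in each coordinate when the others are fixed; in particular $\partial^2 F/\partial\lambda_i^2 = 0$ for all $i$. Second, since $V$ is submodular, the mixed second derivatives are non-positive: a direct computation gives $\partial^2 F/\partial\lambda_i\partial\lambda_j = \mathbb{E}_{S}\big[V(S\cup\{i,j\})-V(S\cup\{i\})-V(S\cup\{j\})+V(S)\big]\le 0$, where $S$ is drawn from the product distribution on $\mathcal{N}\setminus\{i,j\}$. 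I would first dispose of any coordinate with $c_i=0$ by rounding it up to $1$: this leaves the budget $\sum_k c_k\lambda_k$ unchanged and, by monotonicity of $V$ (hence $\partial F/\partial\lambda_i\ge 0$), does not decrease $F$; so I may assume every fractional coordinate has positive cost.

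Given a feasible $\lambda\in\dom_c$ with at least two fractional coordinates $i,j\in(0,1)$, I would move along the direction $v=\tfrac{1}{c_i}e_i-\tfrac{1}{c_j}e_j$ (with $e_i$ the $i$-th standard basis vector), which satisfies $\langle c,v\rangle=0$ and therefore keeps the linear budget constraint satisfied with unchanged slack. Along this segment the restriction $t\mapsto F(\lambda+tv)$ has second derivative
\[
\frac{d^2}{dt^2}F(\lambda+tv)=v_i^2\frac{\partial^2F}{\partial\lambda_i^2}+2v_iv_j\frac{\partial^2F}{\partial\lambda_i\partial\lambda_j}+v_j^2\frac{\partial^2F}{\partial\lambda_j^2}=-\frac{2}{c_ic_j}\frac{\partial^2F}{\partial\lambda_i\partial\lambda_j}\ge 0,
\]
using that the pure second derivatives vanish (multi-linearity) and the cross derivative is non-positive (submodularity). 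Hence $F$ is \emph{convex} along this segment. The feasible range of $t$ is a closed interval whose endpoints are determined solely by the box constraints on coordinates $i$ and $j$, since the budget is constant along $v$; a convex function on an interval attains its maximum at an endpoint, so moving $\lambda$ to the endpoint with larger value does not decrease $F$ and drives at least one of $\lambda_i,\lambda_j$ into $\{0,1\}$.

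Each such step preserves membership in $\dom_c$, does not decrease $F$, and strictly reduces the number of fractional coordinates, so after at most $n-1$ iterations one obtains a feasible $\bar\lambda\in\dom_c$ with $F(\lambda)\le F(\bar\lambda)$ and at most one fractional coordinate, which is exactly (a) and (b). The main obstacle — and the only non-routine point — is the convexity of $F$ along the pipage direction: it is precisely there that the interplay between multi-linearity (which kills the pure second derivatives) and submodularity of $V$ (which makes the cross derivative non-positive) is needed, while the direction $v\perp c$ is what confines the endpoints to the box so that a coordinate must become integral. Everything else is the standard bookkeeping of the rounding iteration.
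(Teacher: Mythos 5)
Your proposal is correct and follows essentially the same route as the paper's proof: pipage rounding along the budget-preserving direction spanned by $e_i$ and $e_j$ (yours is the paper's direction rescaled by $1/c_i$), with convexity of $F$ along that direction derived from submodularity of $V$ — you obtain it via the vanishing pure second derivatives and the non-positive cross derivative of the multi-linear extension, whereas the paper expands $F_\lambda(\varepsilon)$ as an explicit degree-2 polynomial and reads off the sign of its leading coefficient, which is the same computation in different clothing. Your explicit disposal of zero-cost coordinates is a small point of extra care that the paper glosses over (its direction divides by $c_j$), but otherwise the arguments coincide.
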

The proof of this lemma is in Appendix \ref{proofoflemmarounding}, and follows the main steps of the pipage rounding method of \citeN{pipage}. 
Together, Lemma~\ref{lemma:relaxation-ratio} and Lemma~\ref{lemma:rounding} imply that $OPT$, the optimal value of \EDP, can be approximated by solving the following convex optimization problem:
\begin{align}\tag{$P_c$}\label{eq:primal}
\begin{split}  \text{Maximize:} &\qquad L(\lambda)\\
\text{subject to:} & \qquad\lambda \in \dom_c
\end{split}
\end{align}
In particular, for $L_c^*\defeq \max_{\lambda\in \dom_c} L(\lambda)$ the optimal value of \eqref{eq:primal}, the following holds:
\begin{proposition}\label{prop:relaxation}
$OPT\leq L^*_c \leq 2 OPT + 2\max_{i\in\mathcal{N}}V(i)$.
\end{proposition}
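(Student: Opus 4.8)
The plan is to establish the two inequalities separately, combining the defining properties of the concave extension $L$ with Lemmas~\ref{lemma:relaxation-ratio} and~\ref{lemma:rounding}.

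For the lower bound $OPT\leq L_c^*$, I would simply exploit that $L$ extends $V$. For any set $S$ feasible for \EDP{} (i.e.\ $\sum_{i\in S}c_i\leq B$), its indicator $\id_S$ lies in $\dom_c$, since $\sum_{i\in\mathcal{N}}c_i(\id_S)_i=\sum_{i\in S}c_i\leq B$, and $L(\id_S)=V(S)$. Hence $V(S)\leq L_c^*$ for every feasible $S$, and maximizing over such $S$ gives $OPT\leq L_c^*$.

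For the upper bound, let $\lambda^*\in\dom_c$ attain $L_c^*=L(\lambda^*)$ (the maximum exists as $\dom_c$ is compact and $L$ continuous). Lemma~\ref{lemma:relaxation-ratio} gives $L(\lambda^*)\leq 2F(\lambda^*)$, so it suffices to bound $F(\lambda^*)$. Applying the rounding Lemma~\ref{lemma:rounding} to $\lambda^*$ produces a feasible $\bar\lambda\in\dom_c$ with $F(\lambda^*)\leq F(\bar\lambda)$ and at most one fractional coordinate. Writing $T=\{i:\bar\lambda_i=1\}$ for its integral support, either $\bar\lambda=\id_T$, or $\bar\lambda=\id_T+t\,\id_{\{j\}}$ for a single index $j\notin T$ and $t\in(0,1)$. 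In the latter case the distribution $P_\mathcal{N}^{\bar\lambda}$ is supported on exactly the two sets $T$ and $T\cup\{j\}$, so $F(\bar\lambda)=(1-t)V(T)+t\,V(T\cup\{j\})$.

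The main obstacle is that, although $T$ is feasible (its cost is at most that of $\bar\lambda$, hence $\leq B$), the set $T\cup\{j\}$ need not satisfy the budget constraint, so $V(T\cup\{j\})$ cannot be bounded by $OPT$ directly; this is precisely what forces the additive $\max_{i}V(i)$ term. I would resolve it using the submodularity of $V$ and $V(\emptyset)=0$ established in the preliminaries, which yield subadditivity: $V(T\cup\{j\})\leq V(T)+V(\{j\})\leq OPT+\max_{i\in\mathcal{N}}V(i)$, while $V(T)\leq OPT$ by feasibility. Substituting into the two-point expression gives $F(\bar\lambda)\leq(1-t)OPT+t\bigl(OPT+\max_{i\in\mathcal{N}}V(i)\bigr)=OPT+t\max_{i\in\mathcal{N}}V(i)\leq OPT+\max_{i\in\mathcal{N}}V(i)$, and the fully integral case $\bar\lambda=\id_T$ is subsumed since then $F(\bar\lambda)=V(T)\leq OPT$. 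Chaining the estimates, $L_c^*=L(\lambda^*)\leq 2F(\lambda^*)\leq 2F(\bar\lambda)\leq 2OPT+2\max_{i\in\mathcal{N}}V(i)$, which completes the proof.
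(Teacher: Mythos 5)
Your proof is correct and follows essentially the same route as the paper's: the lower bound from $L$ extending $V$, and the upper bound by chaining $L_c^* \leq 2F(\lambda^*) \leq 2F(\bar\lambda)$ via Lemmas~\ref{lemma:relaxation-ratio} and~\ref{lemma:rounding}, then bounding the two-point expression for $F(\bar\lambda)$ using subadditivity of $V$ and feasibility of the integral part. The paper's proof is identical in structure, down to the use of $V(S\cup\{i\})\leq V(S)+V(\{i\})$ to absorb the possibly infeasible set into the additive $\max_{i}V(i)$ term.
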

The proof of this proposition can be found in Appendix~\ref{proofofproprelaxation}. As we discuss in the next section, $L^*_c$ can be computed by a poly-time algorithm at arbitrary accuracy. However, the outcome of this computation may not necessarily be monotone; we address this by converting this poly-time estimator of $L^*_c$ to one that is ``almost'' monotone.


\subsection{Polynomial-Time, Almost-Monotone Approximation}\label{sec:monotonicity}
 The $\log\det$ objective function of \eqref{eq:primal} is strictly concave and \emph{self-concordant} \cite{boyd2004convex}. The maximization of a concave, self-concordant function subject to a set of linear constraints can be performed through the \emph{barrier method} (see, \emph{e.g.}, \cite{boyd2004convex} Section 11.5.5 for  general self-concordant optimization as well as \cite{vandenberghe1998determinant} for a detailed treatment of the $\log\det$ objective). The performance of the barrier method is summarized in our case by the following lemma:
\begin{lemma}[\citeN{boyd2004convex}]\label{lemma:barrier}
For any $\varepsilon>0$, the barrier method computes  an 
approximation $\hat{L}^*_c$ that is $\varepsilon$-accurate, \emph{i.e.}, it satisfies $|\hat L^*_c- L^*_c|\leq \varepsilon$, in time $O\left(poly(n,d,\log\log\varepsilon^{-1})\right)$. The same guarantees apply when maximizing $L$ subject to an arbitrary set of $O(n)$ linear constraints.
\end{lemma}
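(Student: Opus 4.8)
The statement is essentially an instantiation of the standard complexity theory of interior-point methods for self-concordant minimization \cite{boyd2004convex,vandenberghe1998determinant}; the plan is therefore to put \eqref{eq:primal} into the form covered by that theory and to verify its one nontrivial hypothesis, self-concordance of the objective. First I would rewrite the maximization of $L$ as the minimization of the convex function $f_0(\lambda)\defeq -L(\lambda) = -\log\det A(\lambda)$, where $A(\lambda)\defeq I_d+\sum_{i\in\mathcal{N}}\lambda_i x_i\T{x_i}$ is affine in $\lambda$, over the polytope $\dom_c$ of \eqref{fdom}. Note that $\dom_c$ is cut out by the $m\defeq 2n+1=O(n)$ linear inequalities $\lambda_i\geq 0$ and $\lambda_i\leq 1$ for $i\in\mathcal{N}$, together with $\sum_{i\in\mathcal{N}}c_i\lambda_i\leq B$. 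Since $A(\lambda)\succeq I_d\succ 0$ for every $\lambda\in[0,1]^n$, the objective $f_0$ is finite and smooth on a neighborhood of $\dom_c$, and the feasible polytope has nonempty interior (a sufficiently small strictly positive $\lambda$ is strictly feasible when $B>0$).

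The key verification is that $f_0$ is self-concordant. The map $X\mapsto -\log\det X$ is the prototypical self-concordant barrier on the cone of positive definite $d\times d$ matrices; since $\lambda\mapsto A(\lambda)$ is affine and sends the interior of $\dom_c$ into this cone, and self-concordance is preserved under composition with an affine map, $f_0$ is self-concordant. I would then form the logarithmic barrier $\phi(\lambda)=-\sum_{i}\log\lambda_i-\sum_i\log(1-\lambda_i)-\log(B-\sum_i c_i\lambda_i)$ for the $m=O(n)$ constraints; $\phi$ is self-concordant with parameter $O(n)$, and hence so is $tf_0+\phi$ for every $t>0$. The barrier method follows the central path by applying Newton's method to $tf_0+\phi$ while geometrically increasing $t$. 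Each Newton step requires the gradient and Hessian of $\log\det A(\lambda)$, whose entries are the scalars $\T{x_i}A(\lambda)^{-1}x_j$ and products thereof; these follow from a single $d\times d$ inverse and $O(n^2)$ inner products, after which one solves an $m\times m$ linear system, for a per-step cost of $poly(n,d)$.

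The number of outer (centering) stages is governed by $\sqrt{m}=O(\sqrt{n})$ and by the problem data, while the dependence on the target accuracy $\varepsilon$ is contributed by the quadratically convergent phase of damped Newton on the self-concordant objective, which needs only $O(\log\log\varepsilon^{-1})$ iterations to drive the error below $\varepsilon$; multiplying these factors yields the claimed $O(poly(n,d,\log\log\varepsilon^{-1}))$ running time for computing $\hat L^*_c$ with $|\hat L^*_c-L^*_c|\leq\varepsilon$. The last sentence of the lemma is then immediate: replacing the inequalities defining $\dom_c$ by any family of $O(n)$ linear constraints alters only the barrier term $\phi$, whose self-concordance parameter stays $O(n)$, so every estimate above persists verbatim.

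I expect the main obstacle to be not the self-concordance check, which is textbook via $-\log\det$ and affine precomposition, but the bookkeeping that isolates the accuracy dependence as $\log\log\varepsilon^{-1}$ rather than the more familiar $\log\varepsilon^{-1}$: this is precisely the point at which one must invoke the quadratic convergence guaranteed by self-concordance, carefully separating the problem-data-dependent iteration count of the outer loop from the $\varepsilon$-dependent tail. Since this is a cited result, I would otherwise defer the precise constants to \cite{boyd2004convex,vandenberghe1998determinant}.
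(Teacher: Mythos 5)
This lemma is stated in the paper as an imported result --- it is attributed to \citeN{boyd2004convex} and \cite{vandenberghe1998determinant} and no proof is given in the text --- so there is no in-paper argument to compare against. Your reconstruction follows the standard interior-point route, and most of it is sound: rewriting the problem as minimizing $-\log\det A(\lambda)$ with $A$ affine, invoking self-concordance of $-\log\det$ on the positive definite cone together with closure under affine precomposition, noting that $\dom_c$ is cut out by $O(n)$ linear inequalities with a strictly feasible point, and bounding the per-iteration cost by $\mathrm{poly}(n,d)$. This is exactly the material of Section~11.5.5 of \cite{boyd2004convex} that the lemma points to.

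The one step you yourself flag as the crux is, however, resolved incorrectly. You claim that the number of outer centering stages is ``governed by $\sqrt{m}$ and by the problem data'' and that the $\varepsilon$-dependence enters only through the quadratically convergent tail of Newton's method, yielding $O(\log\log\varepsilon^{-1})$. In the standard barrier method the duality gap after centering at parameter $t$ is $m/t$, so guaranteeing accuracy $\varepsilon$ forces $t \geq m/\varepsilon$; with a geometric update $t \leftarrow \mu t$ the number of centering stages is $\Theta\bigl(\log_\mu(m/(t^{(0)}\varepsilon))\bigr) = \Theta(\log \varepsilon^{-1})$ for fixed $\mu$. The doubly-logarithmic bound governs only the Newton iterations \emph{within} a single centering, not the outer loop, so your bookkeeping delivers $O\bigl(\mathrm{poly}(n,d,\log\varepsilon^{-1})\bigr)$ rather than the stated $O\bigl(\mathrm{poly}(n,d,\log\log\varepsilon^{-1})\bigr)$; obtaining the latter would require a different (or at least more carefully justified) analysis than the one you sketch. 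For the paper's downstream use the discrepancy is harmless --- in Algorithm~\ref{alg:monotone} the target accuracy $\varepsilon'$ is only exponentially small in $n$, so $\log(\varepsilon')^{-1}$ is already polynomial in $n$ and $\log(B/(b\varepsilon\delta))$ --- but as a proof of the lemma as literally stated, this step is a gap.
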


 Clearly, the optimal value $L^*_c$ of \eqref{eq:primal} is monotone in $c$: formally, for any two $c,c'\in \reals_+^n$ s.t.~$c\leq c'$ coordinate-wise, $\dom_{c'}\subseteq \dom_c$ and thus $L^*_c\geq L^*_{c'}$. Hence, the map $c\mapsto L^*_c$ is non-increasing. Unfortunately, the same is not true for the output $\hat{L}_c^*$ of the barrier method: there is no guarantee that the $\epsilon$-accurate approximation $\hat{L}^*_c$ exhibits any kind of monotonicity.

Nevertheless, we prove that it is possible to use the barrier method to construct an approximation of $L^*_{c}$ that is ``almost'' monotone. More specifically, given $\delta>0$,  we say that $f:\reals^n\to\reals$ is
\emph{$\delta$-decreasing}  if
 $ f(x) \geq  f(x+\mu e_i)$, 
    for all $i\in \mathcal{N},x\in\reals^n, \mu\geq\delta,$
where $e_i$ is the $i$-th canonical basis vector of $\reals^n$.
In other words, $f$ is $\delta$-decreasing if increasing any coordinate by $\delta$ or more at input $x$ ensures that the output will be at most $f(x)$.

Our next technical result establishes that, using the barrier method, it is possible to construct an algorithm that computes $L^*_c$ at arbitrary accuracy in polynomial time \emph{and} is $\delta$-decreasing. We achieve this by restricting the optimization over a subset of $\dom_c$ at which the concave relaxation $L$ is ``sufficiently'' concave. Formally, for $\alpha\geq 0$ let $$\textstyle\dom_{c,\alpha} \defeq \{\lambda \in [\alpha,1]^n: \sum_{i\in \mathcal{N}}c_i\lambda_i \leq B\}\subseteq  \dom_c . $$ 
Note that $\dom_c=\dom_{c,0}.$ Consider the following perturbation of the concave relaxation \eqref{eq:primal}:
\begin{align}\tag{$P_{c,\alpha}$}\label{eq:perturbed-primal}
\begin{split}  \text{Maximize:} &\qquad L(\lambda)\\
\text{subject to:} & \qquad\lambda \in \dom_{c,\alpha}
\end{split}
\end{align}

\begin{algorithm}[t]
    \caption{}\label{alg:monotone}
    \begin{algorithmic}[1]
	\Require{ $B\in \reals_+$, $c\in[0,B]^n$, $\delta\in (0,1]$, $\epsilon\in (0,1]$ }
        \State $\alpha \gets \varepsilon (\delta/B+n^2)^{-1}$ 
        \State Use the barrier method to solve \eqref{eq:perturbed-primal} with
        accuracy $\varepsilon'=\frac{1}{2^{n+1}B}\alpha\delta b$; denote the output by $\hat{L}^*_{c,\alpha}$
	\State \textbf{return} $\hat{L}^*_{c,\alpha}$
    \end{algorithmic}
\end{algorithm}

Our construction of a $\delta$-decreasing, $\varepsilon$-accurate approximator of $L_c^*$ proceeds as follows: first, it computes an appropriately selected lower bound $\alpha$; using this bound, it solves the perturbed problem \eqref{eq:perturbed-primal} using the barrier method, also at an appropriately selected accuracy $\varepsilon'$, obtaining thus a $\varepsilon'$-accurate approximation of $L^*_{c,\alpha}\defeq \max_{\lambda\in \dom_{c,\alpha}} L(\lambda)$ . The corresponding output is returned as an approximation of $L^*_c$. A summary of the algorithm and the specific choices of $\alpha$ and $\varepsilon'$ are given in Algorithm~\ref{alg:monotone}. The following proposition, which is proved in Appendix~\ref{proofofpropmonotonicity}, establishes that this algorithm has both properties we desire:
\begin{proposition}\label{prop:monotonicity}
    For any $\delta\in(0,1]$ and any $\varepsilon\in(0,1]$,
    Algorithm~\ref{alg:monotone} computes a $\delta$-decreasing,
    $\varepsilon$-accurate approximation of $L^*_c$. The running time of the
    algorithm is $O\big(poly(n, d, \log\log\frac{B}{b\varepsilon\delta})\big)$.
\end{proposition}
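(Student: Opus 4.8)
The plan is to prove the two asserted properties—$\varepsilon$-accuracy and the $\delta$-decreasing property—separately, viewing the floor $\alpha$ and the barrier accuracy $\varepsilon'$ as the two parameters that buy them. Throughout I would write $M(\lambda)\defeq I_d+\sum_{j}\lambda_j x_j\T{x_j}$, so that $\partial L/\partial\lambda_i=\T{x_i}M(\lambda)^{-1}x_i$; since $M(\lambda)\succeq I_d$ we have $0\le \partial L/\partial\lambda_i\le \|x_i\|_2^2\le 1$, i.e.\ $L$ is coordinatewise nondecreasing and $1$-Lipschitz in each coordinate, while $M(\lambda)\preceq(1+n)I_d$ yields the matching lower bound $\partial L/\partial\lambda_i\ge b/(1+n)$.

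For accuracy I would bound the loss from shrinking $\dom_c$ to $\dom_{c,\alpha}$. As $\dom_{c,\alpha}\subseteq\dom_c$, $L^*_{c,\alpha}\le L^*_c$. For the reverse, take a maximizer $\lambda^\star$ of \eqref{eq:primal} and map it into $\dom_{c,\alpha}$ by $\lambda'\defeq(1-n\alpha)\lambda^\star+\alpha\mathbf 1$: since $\alpha\le 1/n$ this lies in $[\alpha,1]^n$, and using $\sum_i c_i\lambda_i^\star\le B$ and $\sum_i c_i\le nB$ one gets $\sum_i c_i\lambda_i'\le(1-n\alpha)B+\alpha nB=B$, so $\lambda'\in\dom_{c,\alpha}$. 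By the coordinatewise Lipschitz bound, $L^*_c-L^*_{c,\alpha}\le L(\lambda^\star)-L(\lambda')\le\|\lambda^\star-\lambda'\|_1=\alpha\sum_i|1-n\lambda_i^\star|\le\alpha n^2$. Plugging in $\alpha=\varepsilon(\delta/B+n^2)^{-1}$ yields the identity $\alpha n^2=\varepsilon-\alpha\delta/B$, and $\varepsilon'=\alpha\delta b/(2^{n+1}B)<\alpha\delta/B$ because $b\le 1$; together with the barrier guarantee $|\hat L^*_{c,\alpha}-L^*_{c,\alpha}|\le\varepsilon'$ (Lemma~\ref{lemma:barrier}) this gives $|\hat L^*_{c,\alpha}-L^*_c|\le\alpha n^2+\varepsilon'<\varepsilon$.

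For the $\delta$-decreasing property I would first argue about the exact perturbed optimum and then transfer the conclusion to $\hat L^*_{c,\alpha}$. Because $c\mapsto L^*_{c,\alpha}$ is nonincreasing, it suffices to show $L^*_{c,\alpha}-L^*_{c+\delta e_i,\alpha}\ge 2\varepsilon'$; the general case $\mu\ge\delta$ then follows from $L^*_{c+\mu e_i,\alpha}\le L^*_{c+\delta e_i,\alpha}$. Here the floor $\lambda_i\ge\alpha$ is decisive: any $\lambda\in\dom_{c+\delta e_i,\alpha}$ satisfies $\sum_j c_j\lambda_j\le B-\delta\lambda_i\le B-\delta\alpha$, so raising $c_i$ by $\delta$ behaves like an honest budget cut of size $\delta\alpha$, and $L^*_{c+\delta e_i,\alpha}\le\max\{L(\lambda):\lambda\in[\alpha,1]^n,\ \sum_j c_j\lambda_j\le B-\delta\alpha\}$. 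It then remains to show that contracting the budget from $B$ to $B-\delta\alpha$ drops the optimum by at least $2\varepsilon'$; this is where the strict concavity of $L$ on $[\alpha,1]^n$ enters, the curvature being bounded from below through $b\le\|x_i\|_2^2$ and the eigenvalue bound $M(\lambda)\preceq(1+n)I_d$, and the loose, exponentially small constant in $\varepsilon'$ is precisely a safe lower bound on this drop. Given $L^*_{c+\mu e_i,\alpha}\le L^*_{c,\alpha}-2\varepsilon'$, the barrier accuracy orders the two estimates, $\hat L^*_{c+\mu e_i,\alpha}\le L^*_{c+\mu e_i,\alpha}+\varepsilon'\le L^*_{c,\alpha}-\varepsilon'\le\hat L^*_{c,\alpha}$, which is the $\delta$-decreasing property. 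The running time then follows from Lemma~\ref{lemma:barrier}: since $1/\varepsilon'=2^{n+1}B(\delta/B+n^2)/(\varepsilon\delta b)$, we have $\log\log(1/\varepsilon')=O(\log n+\log\log\tfrac{B}{b\varepsilon\delta})$, which is absorbed into $O(poly(n,d,\log\log\tfrac{B}{b\varepsilon\delta}))$.

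The crux—and the step I expect to be hardest—is the quantified drop $L^*(B)-L^*(B-\delta\alpha)\ge 2\varepsilon'$. Far from the box corner this is immediate from concavity, but near the unconstrained maximizer $\mathbf 1$ the budget's shadow price degenerates to $0$, and if the $x_i\T{x_i}$ are linearly dependent then $L$ has genuinely flat directions, so contracting the feasible set need not change the value at all. This is exactly why the domain is perturbed to $[\alpha,1]^n$ (the floor turns the cost bump into a real budget reduction $\delta\alpha$) and why $\varepsilon'$ is taken so small that the residual accuracy error can never overturn the—possibly tiny but strictly positive—decrease. The remaining degenerate regime $\sum_i c_i\le B-\delta\alpha$, where the optimizer is pinned at $\mathbf 1$ and the value does not move, must be dispatched on its own: there the instance is a trivial full-selection one, and one argues monotonicity of the estimator directly rather than through a strict decrease.
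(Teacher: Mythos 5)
Your $\varepsilon$-accuracy argument is correct and is in fact a more elementary route than the paper's: where the paper bounds $|L^*_c-L^*_{c,\alpha}|\le\alpha n^2$ by controlling the sum of the dual multipliers of the constraints $\lambda_i\ge\alpha$ through the KKT conditions, you obtain the same bound by explicitly mapping a maximizer of \eqref{eq:primal} into $\dom_{c,\alpha}$ via $\lambda'=(1-n\alpha)\lambda^\star+\alpha\mathbf{1}$ and invoking the coordinatewise $1$-Lipschitz bound $0\le\partial_iL\le 1$. That half, together with the running-time accounting, stands on its own.

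The $\delta$-decreasing half has a genuine gap. You correctly reduce the problem, via the floor $\lambda_i\ge\alpha$, to the claim that shrinking the budget from $B$ to $B-\delta\alpha$ in \eqref{eq:perturbed-primal} lowers the optimum by at least $2\varepsilon'=\alpha\delta b/(2^nB)$; but you attribute this drop to ``strict concavity'' with a curvature lower bound, do not carry out the estimate, and then concede that the claim is false when the budget constraint is slack, deferring that regime to an unspecified ``direct'' argument. Neither piece survives scrutiny. In the slack regime, the two true optima can coincide exactly (both problems admit $\lambda=\mathbf{1}$), and two independent $\varepsilon'$-accurate barrier-method runs on different inputs carry no guarantee of being ordered, so there is no ``direct'' monotonicity of the estimator to appeal to. In the tight regime, the mechanism that actually produces the quantitative drop is not second-order at all: the paper writes the Lagrangian $\mathcal{L}_{c',\alpha}$ at the dual optimal multipliers of the lower-cost problem and uses $\T{(c')}\lambda\le\T{c}\lambda-\delta\lambda_i$ to get $L^*_{c',\alpha}\ge L^*_{c,\alpha}+\alpha\delta\xi^*$, where $\xi^*$ is the shadow price of the budget constraint; the KKT stationarity conditions then convert the gradient lower bound $\partial_iL(\lambda)\ge b/2^n$ (which you have, in the sharper form $b/(1+n)$) into $\xi^*\ge b/(2^nB)$. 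This is a first-order duality argument in which strict concavity plays no role, and it is exactly the step your outline is missing: you identify the right reduction and correctly flag the dangerous degenerate regime, but the bound that makes the proposition work is asserted rather than proved, and is hung on the wrong mechanism.
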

We note that the execution of the barrier method on the restricted set $\dom_{c,\alpha}$ is necessary. The algorithm's output when executed over the entire domain may not necessarily be $\delta$-decreasing, even when the approximation accuracy is small. This is because costs become saturated when the optimal $\lambda\in \dom_c$ lies at the boundary: increasing them has no effect on the objective. Forcing the optimization to happen ``off'' the boundary ensures that this does not occur, while taking $\alpha$ to be small ensures that this perturbation does not cost much in terms of approximation accuracy.

\section{Mechanism for \SEDP{}}\label{sec:mechanism}
\label{sec:main}

In this section we use the $\delta$-decreasing, $\epsilon$-accurate algorithm solving the convex optimization problem \eqref{eq:primal} to design a mechanism for \SEDP. The construction follows a methodology proposed in \cite{singer-mechanisms} and employed by \citeN{chen} and \citeN{singer-influence} to construct deterministic, truthful mechanisms for \textsc{Knapsack} and \textsc{Coverage} respectively. We briefly outline this below (see also Algorithm~\ref{mechanism} in Appendix~\ref{sec:proofofmainthm} for a detailed description).


Recall from Section~\ref{sec:fullinfo} that $i^*\defeq \arg\max_{i\in \mathcal{N}} V(\{i\})$ is the element of maximum value, and $S_G$ is a set constructed greedily, by selecting elements of maximum marginal value per cost. The general framework used by \citeN{chen}  and by \citeN{singer-influence} for the \textsc{Knapsack}   and \textsc{Coverage} value functions contructs an allocation as follows. First, a polynomial-time, monotone approximation of $OPT$ is computed over all elements excluding $i^*$. The outcome of this approximation is compared to $V(\{i^*\})$: if it exceeds $V(\{i^*\})$, then the mechanism constructs an allocation $S_G$ greedily; otherwise, the only item allocated is the singleton $\{i^*\}$.  Provided that the approximation used is within a constant from $OPT$, the above allocation can be shown to also yield a constant approximation to $OPT$. Furthermore, using Myerson's Theorem~\cite{myerson}, it can be shown that this allocation combined with \emph{threshold payments} (see Lemma~\ref{thm:myerson-variant} below) constitute a truthful mechanism. 

The approximation algorithms used in \cite{chen,singer-influence} are LP relaxations, and thus their outputs are monotone and can be computed exactly in polynomial time. We show that the convex relaxation \eqref{eq:primal}, which can be solved by an $\epsilon$-accurate, $\delta$-decreasing algorithm, can be used to construct a $\delta$-truthful, constant approximation mechanism, by being incorporated in the same framework.

To obtain this result, we use the following modified version of Myerson's theorem \cite{myerson}, whose proof we provide in Appendix~\ref{sec:myerson}.

\begin{lemma}\label{thm:myerson-variant}
 A normalized mechanism $\mathcal{M} = (S,p)$ for a single parameter auction is
$\delta$-truthful if:
(a) $S$ is $\delta$-monotone, \emph{i.e.}, for any agent $i$ and $c_i' \leq
c_i-\delta$, for any
fixed costs $c_{-i}$ of agents in $\mathcal{N}\setminus\{i\}$, $i\in S(c_i,
c_{-i})$ implies $i\in S(c_i', c_{-i})$, and (b)
 agents are paid \emph{threshold payments}, \emph{i.e.}, for all $i\in S(c)$, $p_i(c)=\inf\{c_i': i\in S(c_i', c_{-i})\}$.
\end{lemma}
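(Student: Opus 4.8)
The plan is to adapt Myerson's characterization to the relaxed $\delta$-monotone setting, tracking carefully where the $\delta$ of slack in the utility enters. Fix an agent $i$ and the reports $c_{-i}$ of the others, and write $\theta_i \equiv \theta_i(c_{-i})$ for the threshold value $\sup\{c_i' : i \in S(c_i', c_{-i})\}$, i.e.\ the critical cost that a winning report is paid. The first observation I would record is that, because the payment is a threshold payment depending only on $c_{-i}$, agent $i$'s reported cost influences its utility \emph{only} through whether it wins or loses: if $i$ wins it is paid exactly $\theta_i$ and obtains utility $\theta_i - c_i$ (payment minus true cost), and if $i$ loses it obtains $0$ by normalization. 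Thus the entire analysis reduces to controlling the win/lose decision as a function of the report.

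Next I would use $\delta$-monotonicity to localize the winning region. On one hand, any winning report $c_i'$ satisfies $c_i' \le \theta_i$ by definition of the supremum. On the other hand, every report $c_i' \le \theta_i - \delta$ must win: since $\theta_i$ is the supremum of winning reports, there is a winning report $c$ with $c \ge c_i' + \delta$, and then applying $\delta$-monotonicity to the pair $c_i' \le c - \delta$ forces $i \in S(c_i', c_{-i})$. Consequently the winning region contains $[0, \theta_i - \delta]$ and is contained in $[0, \theta_i]$, and the only behaviour left undetermined by $\delta$-monotonicity lies in the band $(\theta_i - \delta, \theta_i]$ of width $\delta$.

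With this structure in hand, the utility of the \emph{best possible} misreport is at most $\max\{0, \theta_i - c_i\}$, since a winning report yields $\theta_i - c_i$ and a losing one yields $0$. I would then compare this quantity against the truthful utility by splitting on the position of $c_i$. If $c_i \le \theta_i - \delta$, the truthful report itself wins, so $u_i(c_i) = \theta_i - c_i = \max\{0, \theta_i - c_i\}$ and no deviation can help at all. If instead $c_i > \theta_i - \delta$, then $\theta_i - c_i < \delta$, so the best deviation yields at most $\max\{0, \theta_i - c_i\} < \delta$; since the truthful utility is nonnegative (individual rationality: a winning truthful report has $c_i \le \theta_i$, hence utility $\theta_i - c_i \ge 0$, and a losing one yields $0$), we conclude $u_i(\hat c_i) \le u_i(c_i) + \delta$ for every report $\hat c_i$. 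This is exactly the claimed $\delta$-truthfulness: misreporting can raise an agent's utility by at most $\delta$.

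The main obstacle is precisely the ambiguous band $(\theta_i - \delta, \theta_i]$: $\delta$-monotonicity says nothing about whether $i$ wins there, so a true cost sitting inside this band can be converted into a win by a sufficiently aggressive underreport, capturing the payment $\theta_i$ even though $\theta_i$ may slightly exceed $c_i$. The heart of the argument is that this band has width exactly $\delta$, which caps the extractable surplus at $\delta$ and yields an additive guarantee rather than exact truthfulness. In writing the final proof I would be careful with the normalization and individual-rationality bookkeeping, and with the boundary case where the supremum defining $\theta_i$ is not attained, so that the single $\delta$ of slack does not compound across the cases.
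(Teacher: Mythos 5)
Your argument proves a different statement from the one the lemma asserts. The paper's notion of $\delta$-truthfulness (Section~2.3, formalized in Appendix~D) is \emph{not} ``every misreport gains at most $\delta$ in utility''; it is the requirement that $p_i(c_i,c_{-i}) - s_i(c_i,c_{-i})c_i \geq p_i(c_i',c_{-i}) - s_i(c_i',c_{-i})c_i$ for every misreport $c_i'$ with $|c_i-c_i'|>\delta$ --- a \emph{zero-gain} guarantee over the restricted set of deviations that move the report by more than $\delta$. Your concluding inequality, $u_i(\hat c_i)\le u_i(c_i)+\delta$ for \emph{every} $\hat c_i$, is a genuinely different relaxation: neither implies the other (yours tolerates a strictly positive gain from a far-away deviation, which the paper's definition forbids; the paper's definition says nothing about deviations within $\delta$ of the truth, which yours bounds). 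The paper's own proof is the four-case Myerson analysis on the pair $(s_i(c_i,c_{-i}), s_i(c_i',c_{-i}))$: both lose (normalization), both win (the threshold payment is independent of the report, so utilities coincide), truth wins while deviation loses (individual rationality plus normalization), and the single case invoking $\delta$-monotonicity --- deviation wins while truth loses --- where one first deduces $c_i'\le c_i-\delta$ (the opposite ordering would contradict $\delta$-monotonicity) and then argues that the threshold payment is at most $c_i$, so the deviation yields non-positive utility.

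The good news is that your structural analysis --- the winning region is sandwiched between $[0,\theta_i-\delta]$ and $[0,\theta_i]$, and the report affects utility only through the win/lose bit --- is sound and is exactly the machinery needed; it is simply aimed at the wrong inequality. In fact, your ``ambiguous band'' $(\theta_i-\delta,\theta_i]$ isolates precisely the delicate step in the paper's fourth case: if the threshold payment is read, as you read it, as $\theta_i=\sup\{c: i\in S(c,c_{-i})\}$ (note the lemma's displayed $\inf$ cannot be meant literally), then $\delta$-monotonicity applied to a losing true cost $c_i$ only gives $\theta_i\le c_i+\delta$, not $\theta_i\le c_i$, so a deviation more than $\delta$ below a losing $c_i$ can still capture a surplus of up to $\delta$ --- which your version of the conclusion tolerates but the paper's does not. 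To repair the proposal you should (i) restate the target as the zero-gain inequality over deviations with $|c_i-c_i'|\ge\delta$ and run the four-case analysis, and (ii) make explicit which threshold (supremum of winning reports versus infimum of losing reports) you are paying, since only the latter makes the ``deviation wins, truth loses'' case close without an extra $\delta$ of slack.
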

Lemma~\ref{thm:myerson-variant} allows us to incorporate our relaxation in the above framework, yielding the following theorem:
\begin{theorem}\label{thm:main}
    \sloppy
    For any $\delta\in(0,1]$, and any $\epsilon\in (0,1]$,  there exists a $\delta$-truthful, individually rational
    and budget feasible mechanim for \EDP{} that runs in time
    $O\big(poly(n, d, \log\log\frac{B}{b\varepsilon\delta})\big)$
    and allocates
    a set $S^*$ such that
       $ OPT
         \leq \frac{10e-3 + \sqrt{64e^2-24e + 9}}{2(e-1)} V(S^*)+
        \varepsilon
         \simeq 12.98V(S^*) + \varepsilon.$
\end{theorem}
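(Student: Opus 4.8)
The plan is to instantiate the Singer--Chen budget-feasible framework outlined just before the theorem, using Algorithm~\ref{alg:monotone} as the (almost-)monotone oracle that approximates $OPT$. First I would compute the maximum-value singleton $i^\ast = \argmax_{i\in\mathcal{N}} V(\{i\})$, and run Algorithm~\ref{alg:monotone} on the reduced ground set $\mathcal{N}\setminus\{i^\ast\}$, with accuracy $\varepsilon$ and monotonicity parameter $\delta$, to obtain a $\delta$-decreasing, $\varepsilon$-accurate estimate $\hat{L}$ of $L^\ast_c$ restricted to $\mathcal{N}\setminus\{i^\ast\}$. I would then build a greedy set $S_G$ by ordering the remaining agents by marginal-value-per-cost and admitting agent $i$ as long as its reported cost satisfies a proportional-share threshold of the form $c_i \le B\,(V(S_i)-V(S_{i-1}))/(\rho\,\hat{L})$ for an appropriately chosen constant $\rho$, where $S_i$ is the set of the first $i$ admitted agents. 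Following the framework, the allocation $S^\ast$ is the singleton $\{i^\ast\}$ when $\hat{L}$ does not exceed $V(\{i^\ast\})$ (up to a fixed constant), and $S_G$ otherwise, with each allocated agent paid her threshold payment.

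For $\delta$-truthfulness I would invoke Lemma~\ref{thm:myerson-variant}: since agents receive threshold payments by construction, it suffices to prove the allocation rule is $\delta$-monotone. This is where the $\delta$-decreasing guarantee of Proposition~\ref{prop:monotonicity} is essential --- lowering an agent's reported cost by more than $\delta$ can only (i) move her earlier in the marginal-value-per-cost order, (ii) relax her own proportional-share threshold, and (iii) not increase the estimate $\hat{L}$ that feeds the thresholds of the others and the branch comparison (which is itself independent of $c_{i^\ast}$ and of the features $V(\{i^\ast\})$), so that an admitted agent stays admitted. I expect this to be the main obstacle: in the original framework the oracle is an \emph{exactly} monotone LP value, whereas here I must show that the one-sided, $\delta$-slack monotonicity of $\hat{L}$ propagates correctly through both the greedy ordering and the singleton-versus-greedy branch, and that the $\varepsilon$-inaccuracy of $\hat{L}$ does not corrupt the threshold-payment structure.

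Individual rationality, the no-positive-transfers property, and normalization follow directly from threshold payments (unallocated agents are paid $0$, allocated agents are paid at least their reported cost, and in the singleton branch $i^\ast$ is paid its threshold, which is at most $B$ since costs lie in $[0,B]$). For budget feasibility I would carry out the standard proportional-share accounting: each admitted agent's threshold payment is controlled by $B\,(V(S_i)-V(S_{i-1}))/(\rho\,\hat{L})$, so that summing the telescoping marginal values over $S_G$ and using $V(S_G)\le \hat{L}$ (up to the additive $\varepsilon$) bounds the total payment by $B$ once $\rho$ is taken large enough; the singleton branch is trivially feasible.

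Finally, the approximation ratio combines three ingredients. Proposition~\ref{prop:relaxation} gives $OPT \le L^\ast_c \le 2\,OPT + 2\max_i V(i)$, so the relaxation costs a factor $2$ together with an additive singleton term; crucially, applying this over $\mathcal{N}\setminus\{i^\ast\}$ bounds that additive term by $2V(\{i^\ast\})$, which is exactly why $i^\ast$ is removed and treated separately. Subadditivity of $V$ yields $OPT \le V(\{i^\ast\}) + OPT'$, where $OPT'$ is the optimum over $\mathcal{N}\setminus\{i^\ast\}$, and the submodular greedy analysis contributes the familiar $e/(e-1)$ factor to the value recovered by $S_G$. Writing the resulting guarantee as a function of the free threshold constant $\rho$ and minimizing over $\rho$ balances the greedy branch against the singleton branch and produces the larger root of the quadratic $(e-1)r^2 - (10e-3)r + 9e = 0$, namely $r = \frac{10e-3+\sqrt{64e^2-24e+9}}{2(e-1)} \simeq 12.98$. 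The additive $\varepsilon$ in the statement is inherited from the $\varepsilon$-accuracy of $\hat{L}$ in Proposition~\ref{prop:monotonicity}, and the running time is likewise inherited verbatim, since sorting, the greedy pass, and the threshold-payment computation are all polynomial.
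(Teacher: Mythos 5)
Your overall architecture is the paper's: the same singleton-versus-greedy branch keyed on comparing $V(\{i^*\})$ against a $\delta$-decreasing, $\varepsilon$-accurate estimate of the relaxation optimum over $\mathcal{N}\setminus\{i^*\}$, the same use of Lemma~\ref{thm:myerson-variant} with threshold payments, the same combination of Proposition~\ref{prop:relaxation} with the greedy bound of \cite{chen}, and you even land on the correct quadratic $(e-1)r^2-(10e-3)r+9e=0$ for the final constant. The gap is in the greedy admission rule. The paper (following \cite{chen}) admits $i$ while $c_i\leq \frac{B}{2}\cdot\frac{V(S_G\cup\{i\})-V(S_G)}{V(S_G\cup\{i\})}$; the estimate from Algorithm~\ref{alg:monotone} appears \emph{only} in the branch comparison, never in the greedy thresholds. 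You instead put $\hat{L}$ in the denominator of the admission threshold, $c_i\leq B\,(V(S_i)-V(S_{i-1}))/(\rho\,\hat{L})$, and justify monotonicity by asserting that lowering a cost does ``not increase the estimate $\hat{L}$.'' That is backwards: $\delta$-decreasingness (Proposition~\ref{prop:monotonicity}) guarantees that lowering $c_i$ by at least $\delta$ does not \emph{decrease} $\hat{L}$, and $\hat{L}$ can in fact increase substantially, since the feasible region $\dom_c$ grows as $c_i$ drops and $\delta$-monotonicity must hold for every $c_i'\leq c_i-\delta$, including $c_i'$ near $0$. An increase in $\hat{L}$ \emph{tightens} your thresholds for every agent, so an agent who was admitted at cost $c_i$ can be ejected after reporting a lower cost --- exactly the failure that $\delta$-monotonicity forbids. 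The same dependence contaminates the threshold-payment computation and the telescoping budget argument, because the denominator of each agent's threshold now varies with her own report.

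The fix is to use the cost-free denominator $V(S_G\cup\{i\})$, i.e., to run the greedy rule of \cite{chen} verbatim: then lowering $c_i$ only moves $i$ earlier in the marginal-value-per-cost order, so $S_i'\subseteq S_i$ and submodularity makes the threshold weakly looser, while the $\delta$-decreasing estimate is consulted only in the branch test comparing $OPT'_{-i^*}$ to $C\,V(\{i^*\})$, where a weak increase in the estimate is harmless (it keeps the greedy branch selected). Budget feasibility then follows from the clean telescoping bound $\sum_{i\in S_G}\frac{V(S_i\cup\{i\})-V(S_i)}{V(S_G)}B=B$, with no dependence on $\varepsilon$ or $\hat{L}$. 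The rest of your outline --- individual rationality from threshold payments, the bound $OPT\leq(1+C)V(\{i^*\})+\varepsilon$ on the singleton branch, the bound with coefficient $\frac{3e}{e-1}\bigl(1+\frac{4e}{C(e-1)-6e+2}\bigr)$ on the greedy branch, and the balancing that yields $C=\frac{8e-1+\sqrt{64e^2-24e+9}}{2(e-1)}$ (note the paper optimizes the branch constant $C$, with the greedy constant fixed at $\frac{1}{2}$, rather than a threshold constant $\rho$) --- goes through as you describe once the greedy rule is corrected.
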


\fussy
The proof of the theorem, as well as our proposed mechanism, can be found in Appendix~\ref{sec:proofofmainthm}.
In addition, we prove the following simple lower bound, proved in Appendix~\ref{proofoflowerbound}.

\begin{theorem}\label{thm:lowerbound}
There is no $2$-approximate, truthful, budget feasible, individually rational
mechanism for EDP. 
\end{theorem}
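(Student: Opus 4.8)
The plan is to invoke the single-parameter characterization of truthfulness and then construct an adversarial instance on which budget feasibility and the $2$-approximation guarantee are mutually exclusive. By Lemma~\ref{thm:myerson-variant} specialized to $\delta=0$, any truthful normalized mechanism $\mathcal{M}=(S,p)$ has a monotone allocation rule and pays threshold payments; individual rationality gives $p_i(c)\ge c_i$, and budget feasibility gives $\sum_{i\in S(c)}p_i(c)\le B$ at every profile $c$. Hence it suffices to fix a set of feature vectors and a budget $B$ so that for \emph{every} monotone allocation $S$ there is a profile $c$ with $V(S(c))<\tfrac12\,OPT(c)$. It is important that I drive the contradiction through the hard budget constraint rather than through a value comparison alone: a budget violation bites even when a candidate allocation has value exactly $OPT/2$, which is precisely what is needed to rule out the boundary case and thereby preclude a mechanism of ratio exactly $2$, not merely every ratio strictly below $2$.

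First I would choose the instance so that no single experiment is a $2$-approximation, i.e. $OPT>2\max_i V(\{i\})$ at a base profile. Since $V$ is subadditive (submodular with $V(\emptyset)=0$), a two-element optimum would satisfy $OPT\le 2\max_i V(\{i\})$; forcing the strict inequality therefore makes the optimum supported on at least three experiments and forces any ratio-$\le 2$ allocation to select at least two \emph{paid} experiments. Using experiments in orthogonal directions (e.g. $x_i=e_k$, which makes $V$ additive across directions) together with carefully asymmetric declared costs, I would arrange a base profile $c^0$ at which the mechanism must output a set $S^0$ of value $\ge OPT(c^0)/2$, and then, by a case analysis over which experiments lie in $S^0$, single out one selected experiment $i\in S^0$ to target.

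Next I would probe $i$'s threshold by raising only $c_i$ while holding $c^0_{-i}$ fixed: the threshold payment $p_i$ equals the largest declared cost at which $i$ is still selected, so if the approximation requirement keeps $i\in S$ up to a cost close to $B$, then $p_i$ is forced close to $B$. I would then tune the remaining costs so that, at the relevant profile, $OPT$ is still large and realized only by configurations that pair $i$ with a companion experiment, yet paying $i$ its forced threshold leaves strictly less than the budget needed to buy any such companion. Either the mechanism keeps $i$ and violates $\sum p\le B$, or it drops the companions and cannot reach value $OPT/2$; in both cases $\mathcal{M}$ fails to be a $2$-approximation. Optimizing the magnitudes of the raised and lowered costs, and quantifying the strict concavity of $\log\det$, is what makes the forced ratio equal $2$ with a strict gap $OPT>2V(S)$.

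The main obstacle is exactly that $V$ is subadditive, so no experiment is ever \emph{indispensable} for a $2$-approximation: there is always an affordable half-value subset of the optimum that avoids any chosen $i$. Consequently the contradiction cannot rest on a single pivotal experiment, and all naive symmetric constructions admit a better-than-$2$ mechanism; the contradiction must instead emerge from the interaction between monotonicity and threshold payments across the two profiles, engineered so that the companion experiments required after the perturbation are precisely the ones whose costs were raised out of reach. Extracting a \emph{strict} separation $OPT>2V(S)$ (rather than $\ge$) from this interaction, while keeping all $\|x_i\|_2^2\in[b,1]$ and the instance consistent with a single monotone $S$, is the delicate step, and is where the strict concavity of $\log\det$ over the positive semidefinite cone must be used quantitatively.
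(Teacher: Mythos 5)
Your proposal is not a proof: it is a plan whose central step you yourself flag as unresolved, and that step is precisely where the argument breaks. Your engine for forcing a large threshold payment is to raise $c_i$ and argue that ``the approximation requirement keeps $i\in S$ up to a cost close to $B$.'' But as you concede in your final paragraph, subadditivity implies that no single experiment is ever indispensable for a $2$-approximation: if $V(\{i\})<OPT/2$, then for $O$ an optimal set the affordable set $O\setminus\{i\}$ already has value exceeding $OPT/2$, so a mechanism may simply drop $i$ as its cost rises, and its threshold is never forced upward. You propose to escape this through an ``interaction between monotonicity and threshold payments across the two profiles'' engineered with at least three experiments and asymmetric costs, but you never exhibit the instance, the case analysis over monotone allocations, or the claimed quantitative use of the strict concavity of $\log\det$; nothing in the proposal shows that such an instance exists. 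The gap is not cosmetic---the construction is the entire content of the theorem.

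The paper's proof avoids the indispensability obstacle by letting the \emph{budget}, not the approximation ratio, force the mechanism's hand, and by perturbing a cost \emph{downward} rather than upward. Take $d=2$, $x_1=e_1$, $x_2=e_2$, and $c_1=c_2=B/2+\epsilon$. At this profile only one experiment is affordable (budget feasibility plus individual rationality forbid selecting both), so any mechanism with bounded ratio selects exactly one, say $x_1$, since selecting $\emptyset$ gives value $0$ against $OPT=\log 2$. Now lower $c_1$ to $B/2-\epsilon$: monotonicity keeps $x_1$ selected, threshold payments force $p_1\geq B/2+\epsilon$, and budget feasibility with individual rationality then exclude $x_2$; the mechanism's value is $\log 2$ while $OPT = 2\log 2$, since both experiments now fit within the budget. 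Note also that the boundary issue you worry about is not resolved by this construction either: the paper's instance yields $OPT = 2V(S)$ exactly, so it rules out every ratio strictly below $2$ (equivalently, ``$2$-approximate'' read with strict inequality), and the strict separation $OPT > 2V(S)$ you aim for is exactly what your sketch does not deliver.
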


\section{Conclusions}\label{sec:concl}
We have proposed a convex relaxation for \EDP, and showed that it can be used to design a $\delta$-truthful, constant approximation mechanism that runs in polynomial time. Our objective function, commonly known as the Bayes $D$-optimality criterion, is motivated by linear regression, and in particular captures the information gain when experiments are used to learn a linear model. 

A natural question to ask is to what extent the results we present here
generalize to other machine learning tasks beyond linear regression. We outline
a path in pursuing such generalizations in Appendix~\ref{sec:ext}. In
particular, although the information gain is not generally a submodular
function, we show that for a wide class of models, in which experiments
outcomes are perturbed by independent noise, the information gain indeed
exhibits submodularity. Several important  learning tasks fall under this
category, including generalized linear regression, logistic regression,
\emph{etc.} In light of this, it would be interesting to investigate whether
our convex relaxation approach generalizes to other learning tasks in this
broader class.

The literature on experimental design includes several other optimality
criteria~\cite{pukelsheim2006optimal,atkinson2007optimum}. Our convex
relaxation \eqref{eq:our-relaxation} involved swapping the $\log\det$
scalarization with the expectation appearing in the multi-linear extension
\eqref{eq:multi-linear}. The same swap is known to yield concave objectives for
several other optimality criteria, even when the latter are not submodular
(see, \emph{e.g.}, \citeN{boyd2004convex}). Exploiting the convexity of such
relaxations to design budget feasible mechanisms is an additional open problem
of interest.


\section*{Acknowledgments}
We thank Francis Bach for our helpful discussions on approximate solutions of
convex optimization problems, and Yaron Singer for his comments and
suggestions, and for insights into budget feasible mechanisms.

\bibliographystyle{abbrvnat}
\bibliography{notes}
\appendix
\section{Properties of the Value Function $V$} \label{app:properties}
For the sake of concreteness, we prove below the positivity, monotonicity, and submodularity of $V(S) = \log\det(I_d+X_S^TX_S)$ from basic principles. We note however that these properties hold more generally for the information gain under a wider class of models than the linear model with Gaussian noise and prior that we study here: we discuss this in more detail in Appendix~\ref{sec:ext}.

For two symmetric matrices $A$ and $B$, we write $A\succ B$ ($A\succeq B$) if
$A-B$ is positive definite (positive semi-definite).  This order allows us to
define the notion of a \emph{decreasing} as well as  \emph{convex} matrix
function, similarly to their real counterparts. With this definition, matrix
inversion is decreasing and convex over symmetric positive definite
matrices (see Example 3.48 p. 110 in \cite{boyd2004convex}).

Recall that the determinant of a matrix equals the product of its eigenvalues. The positivity of $V(S)$ follows from the fact that $X_S^TX_S$ is positive semi-definite and, as such $I_d+X_S^TX_S\succeq I_d$, so all its eigenvalues are larger than or equal to one, and they are all one if $S=\emptyset$. The marginal contribution of item $i\in\mathcal{N}$ to set
    $S\subseteq \mathcal{N}$ can be written as 
\begin{align}
V(S\cup \{i\}) - V(S)& =   \frac{1}{2}\log\det(I_d 
    + \T{X_S}X_S + x_i\T{x_i})
    - \frac{1}{2}\log\det(I_d + \T{X_S}X_S)\nonumber\\
    &  = \frac{1}{2}\log\det(I_d + x_i\T{x_i}(I_d +
\T{X_S}X_S)^{-1})
 = \frac{1}{2}\log(1 + \T{x_i}A(S)^{-1}x_i)\label{eq:marginal_contrib}
\end{align}
where $A(S) \defeq I_d+ \T{X_S}X_S$, and the last equality follows from the
Sylvester's determinant identity~\cite{sylvester}. Monotonicity therefore follows from the fact that $A(S)^{-1}$ is positive semidefinite. Finally, since the inverse is decreasing over positive definite matrices, we have
\begin{gather}
        \forall S\subseteq\mathcal{N},\quad A(S)^{-1} \succeq A(S\cup\{i\})^{-1}. \label{eq:inverse}
\end{gather}
and submodularity also follows, as a function is submodular if and only if the marginal contributions are non-increasing in $S$. \qed

\section{Proofs of Statements in Section~\ref{sec:concave}}
\subsection{Proof of Lemma~\ref{lemma:relaxation-ratio}}\label{proofofrelaxation-ratio}
    The bound $F(\lambda)\leq L(\lambda)$ follows by the concavity of the $\log\det$ function and Jensen's inequality.
    To show the lower bound, 
    we first prove that $\frac{1}{2}$ is a lower bound of the ratio $\partial_i
    F(\lambda)/\partial_i L(\lambda)$, where we use
    $\partial_i\, \cdot$ as a shorthand for  $\frac{\partial}{\partial \lambda_i}$, the partial derivative  with respect to the
    $i$-th variable. 

   Let us start by computing the partial derivatives of $F$ and
    $L$ with respect to the $i$-th component. 
    Observe that
    \begin{displaymath}
        \partial_i P_\mathcal{N}^\lambda(S) = \left\{
            \begin{aligned}
                & P_{\mathcal{N}\setminus\{i\}}^\lambda(S\setminus\{i\})\;\textrm{if}\;
                i\in S, \\
                & - P_{\mathcal{N}\setminus\{i\}}^\lambda(S)\;\textrm{if}\;
                i\in \mathcal{N}\setminus S. \\
            \end{aligned}\right.
    \end{displaymath}
    Hence,
    \begin{displaymath}
        \partial_i F(\lambda) =
        \sum_{\substack{S\subseteq\mathcal{N}\\ i\in S}}
        P_{\mathcal{N}\setminus\{i\}}^\lambda(S\setminus\{i\})V(S)
        - \sum_{\substack{S\subseteq\mathcal{N}\\ i\in \mathcal{N}\setminus S}}
        P_{\mathcal{N}\setminus\{i\}}^\lambda(S)V(S).
    \end{displaymath}
    Now, using that every $S$ such that $i\in S$ can be uniquely written as
    $S'\cup\{i\}$, we can write:
    \begin{displaymath}
        \partial_i F(\lambda) =
        \sum_{\substack{S\subseteq\mathcal{N}\\ i\in\mathcal{N}\setminus S}}
        P_{\mathcal{N}\setminus\{i\}}^\lambda(S)\big(V(S\cup\{i\})
        - V(S)\big).
    \end{displaymath}
Recall from \eqref{eq:marginal_contrib} that the marginal contribution of $i$ to $S$ is given by
$$V(S\cup \{i\}) - V(S) =\frac{1}{2}\log(1 + \T{x_i}A(S)^{-1}x_i), $$
where $A(S) = I_d+ \T{X_S}X_S$.
Using this,
    \begin{displaymath}
        \partial_i F(\lambda) = \frac{1}{2}
        \sum_{\substack{S\subseteq\mathcal{N}\\ i\in\mathcal{N}\setminus S}}
        P_{\mathcal{N}\setminus\{i\}}^\lambda(S)
        \log\Big(1 + \T{x_i}A(S)^{-1}x_i\Big)
    \end{displaymath}
     The computation of the derivative of $L$ uses standard matrix
    calculus: writing $\tilde{A}(\lambda) \defeq I_d+\sum_{i\in
    \mathcal{N}}\lambda_ix_i\T{x_i}$,
    \begin{displaymath}
        \det \tilde{A}(\lambda + h\cdot e_i) = \det\big(\tilde{A}(\lambda)
        + hx_i\T{x_i}\big)
         =\det \tilde{A}(\lambda)\big(1+
        h\T{x_i}\tilde{A}(\lambda)^{-1}x_i\big).
    \end{displaymath}
    Hence,
    \begin{displaymath}
       \log\det\tilde{A}(\lambda + h\cdot e_i)= \log\det\tilde{A}(\lambda)
        + h\T{x_i}\tilde{A}(\lambda)^{-1}x_i + o(h),
    \end{displaymath}
    which implies
    \begin{displaymath}
        \partial_i L(\lambda)
        =\frac{1}{2} \T{x_i}\tilde{A}(\lambda)^{-1}x_i.
    \end{displaymath}
Recall from \eqref{eq:inverse} that the monotonicity of the matrix inverse over positive definite matrices implies 
\begin{gather*}
        \forall S\subseteq\mathcal{N},\quad A(S)^{-1} \succeq A(S\cup\{i\})^{-1}
\end{gather*}
as $A(S)\preceq A(S\cup\{i\})$. Observe that since $1\leq \lambda_i\leq 1$, 
$P_{\mathcal{N}\setminus\{i\}}^\lambda(S) \geq P_\mathcal{N}^\lambda(S)$ and
$P_{\mathcal{N}\setminus\{i\}}^\lambda(S)\geq P_{\mathcal{N}}^\lambda(S\cup\{i\})$
for all $S\subseteq\mathcal{N}\setminus\{i\}$. Hence,
\begin{align*}
    \partial_i F(\lambda) 
    \geq &\frac{1}{4}
    \sum_{\substack{S\subseteq\mathcal{N}\\ i\in\mathcal{N}\setminus S}}
    P_{\mathcal{N}}^\lambda(S)
    \log\Big(1 + \T{x_i}A(S)^{-1}x_i\Big)\\
    &+\frac{1}{4}
    \sum_{\substack{S\subseteq\mathcal{N}\\ i\in\mathcal{N}\setminus S}}
    P_{\mathcal{N}}^\lambda(S\cup\{i\})
    \log\Big(1 + \T{x_i}A(S\cup\{i\})^{-1}x_i\Big)\\
    \geq &\frac{1}{4}
    \sum_{S\subseteq\mathcal{N}}
    P_\mathcal{N}^\lambda(S)
    \log\Big(1 + \T{x_i}A(S)^{-1}x_i\Big).
\end{align*}
Using that $A(S)\succeq I_d$ we get that $\T{x_i}A(S)^{-1}x_i \leq
\norm{x_i}_2^2 \leq 1$. Moreover, $\log(1+x)\geq x$ for all $x\leq 1$.
Hence,
\begin{displaymath}
    \partial_i F(\lambda) \geq
    \frac{1}{4}
    \T{x_i}\bigg(\sum_{S\subseteq\mathcal{N}}P_\mathcal{N}^\lambda(S)A(S)^{-1}\bigg)x_i.
\end{displaymath}
Finally, using that the inverse is a matrix convex function over symmetric
positive definite matrices (see Appendix~\ref{app:properties}):
\begin{displaymath}
    \partial_i F(\lambda) \geq
    \frac{1}{4}
    \T{x_i}\bigg(\sum_{S\subseteq\mathcal{N}}P_\mathcal{N}^\lambda(S)A(S)\bigg)^{-1}x_i
    = \frac{1}{4}\T{x_i}\tilde{A}(\lambda)^{-1}x_i
    = \frac{1}{2}
    \partial_i L(\lambda).
\end{displaymath}

Having bound the ratio between the partial derivatives, we now bound the ratio
$F(\lambda)/L(\lambda)$ from below. Consider the following three cases.

First, if the minimum is attained as $\lambda$ converges to zero in,
\emph{e.g.}, the $l_2$ norm, by the Taylor approximation, one can write:
\begin{displaymath}
    \frac{F(\lambda)}{L(\lambda)}
    \sim_{\lambda\rightarrow 0}
    \frac{\sum_{i\in \mathcal{N}}\lambda_i\partial_i F(0)}
    {\sum_{i\in\mathcal{N}}\lambda_i\partial_i L(0)}
    \geq \frac{1}{2},
\end{displaymath}
\emph{i.e.}, the ratio $\frac{F(\lambda)}{L(\lambda)}$ is necessarily bounded
from below by 1/2 for small enough $\lambda$.

Second, if the minimum of the ratio $F(\lambda)/L(\lambda)$ is attained at
a vertex of the hypercube $[0,1]^n$ different from 0. $F$ and $L$ being
relaxations of the value function $V$, they are equal to $V$ on the vertices
which are exactly the binary points. Hence, the minimum is equal to 1 in this
case; in particular, it is greater than $1/2$.

Finally, if the minimum is attained at a point $\lambda^*$ with at least one
coordinate belonging to $(0,1)$, let $i$ be one such coordinate and consider
the function $G_i$:
\begin{displaymath}
    G_i: x \mapsto \frac{F}{L}(\lambda_1^*,\ldots,\lambda_{i-1}^*, x,
    \lambda_{i+1}^*, \ldots, \lambda_n^*).
\end{displaymath}
Then this function attains a minimum at $\lambda^*_i\in(0,1)$ and its
derivative is zero at this point. Hence:
\begin{displaymath}
    0 = G_i'(\lambda^*_i) = \partial_i\left(\frac{F}{L}\right)(\lambda^*).
\end{displaymath}
But $\partial_i(F/L)(\lambda^*)=0$ implies that
\begin{displaymath}
    \frac{F(\lambda^*)}{L(\lambda^*)} = \frac{\partial_i
    F(\lambda^*)}{\partial_i L(\lambda^*)}\geq \frac{1}{2}
\end{displaymath}
using the lower bound on the ratio of the partial derivatives. This concludes
the proof of the lemma. \qed


\subsection{Proof of Lemma~\ref{lemma:rounding}}\label{proofoflemmarounding}
    We give a rounding procedure which, given a feasible $\lambda$ with at least
    two fractional components, returns some feasible $\lambda'$ with one fewer fractional
    component such that $F(\lambda) \leq F(\lambda')$.

    Applying this procedure recursively yields the lemma's result.
    Let us consider such a feasible $\lambda$. Let $i$ and $j$ be two
    fractional components of $\lambda$ and let us define the following
    function:
    \begin{displaymath}
        F_\lambda(\varepsilon) = F(\lambda_\varepsilon)
        \quad\textrm{where} \quad
        \lambda_\varepsilon = \lambda + \varepsilon\left(e_i-\frac{c_i}{c_j}e_j\right)
    \end{displaymath}
    It is easy to see that if $\lambda$ is feasible, then:
    \begin{equation}\label{eq:convex-interval}
        \forall\varepsilon\in\Big[\max\Big(-\lambda_i,(\lambda_j-1)\frac{c_j}{c_i}\Big), \min\Big(1-\lambda_i, \lambda_j
        \frac{c_j}{c_i}\Big)\Big],\;
            \lambda_\varepsilon\;\;\textrm{is feasible}
    \end{equation}
    Furthermore, the function $F_\lambda$ is convex; indeed:
    \begin{align*}
        F_\lambda(\varepsilon)
        & = \mathbb{E}_{S'\sim P_{\mathcal{N}\setminus\{i,j\}}^\lambda(S')}\Big[
        (\lambda_i+\varepsilon)\Big(\lambda_j-\varepsilon\frac{c_i}{c_j}\Big)V(S'\cup\{i,j\})\\
        & + (\lambda_i+\varepsilon)\Big(1-\lambda_j+\varepsilon\frac{c_i}{c_j}\Big)V(S'\cup\{i\})
         + (1-\lambda_i-\varepsilon)\Big(\lambda_j-\varepsilon\frac{c_i}{c_j}\Big)V(S'\cup\{j\})\\
         & + (1-\lambda_i-\varepsilon)\Big(1-\lambda_j+\varepsilon\frac{c_i}{c_j}\Big)V(S')\Big]
    \end{align*}
    Thus, $F_\lambda$ is a degree 2 polynomial whose dominant coefficient is:
    \begin{displaymath}
        \frac{c_i}{c_j}\mathbb{E}_{S'\sim
        P_{\mathcal{N}\setminus\{i,j\}}^\lambda(S')}\Big[
            V(S'\cup\{i\})+V(S'\cup\{i\})\\
        -V(S'\cup\{i,j\})-V(S')\Big]
    \end{displaymath}
    which is positive by submodularity of $V$. Hence, the maximum of
    $F_\lambda$ over the interval given in \eqref{eq:convex-interval} is
    attained at one of its limits, at which either the $i$-th or $j$-th component of
    $\lambda_\varepsilon$ becomes integral. \qed
\subsection{Proof of Proposition~\ref{prop:relaxation}}\label{proofofproprelaxation}
The lower bound on $L^*_c$ follows immediately from the fact that $L$ extends $V$ to $[0,1]^n$. For the upper bound, let us consider a feasible point $\lambda^*\in \dom_c$ such that
$L(\lambda^*) = L^*_c$. By applying Lemma~\ref{lemma:relaxation-ratio} and
Lemma~\ref{lemma:rounding} we get a feasible point $\bar{\lambda}$ with at most
one fractional component such that
\begin{equation}\label{eq:e1}
    L(\lambda^*) \leq 2 F(\bar{\lambda}).
\end{equation}
    Let $\lambda_i$ denote the fractional component of $\bar{\lambda}$ and $S$
    denote the set whose indicator vector is $\bar{\lambda} - \lambda_i e_i$.
    By definition of the multi-linear extension $F$:
    \begin{displaymath}
        F(\bar{\lambda}) = (1-\lambda_i)V(S) +\lambda_i V(S\cup\{i\}).
    \end{displaymath}
    By submodularity of $V$, $V(S\cup\{i\})\leq V(S) + V(\{i\})$. Hence,
    \begin{displaymath}
        F(\bar{\lambda}) \leq V(S) + V(i).
    \end{displaymath}
    Note that since $\bar{\lambda}$ is feasible, $S$ is also feasible and
    $V(S)\leq OPT$. Hence,
    \begin{equation}\label{eq:e2}
        F(\bar{\lambda}) \leq  OPT + \max_{i\in\mathcal{N}} V(i).
    \end{equation}
Together, \eqref{eq:e1} and \eqref{eq:e2} imply the proposition.\qed

\section{Proof of Proposition~\ref{prop:monotonicity}}\label{proofofpropmonotonicity}


We proceed by showing that the optimal value of \eqref{eq:perturbed-primal} is close to the
optimal value of \eqref{eq:primal} (Lemma~\ref{lemma:proximity}) while being
well-behaved with respect to changes of the cost
(Lemma~\ref{lemma:monotonicity}). These lemmas together imply
Proposition~\ref{prop:monotonicity}.

Note that the choice of $\alpha$ given in Algorithm~\ref{alg:monotone} implies
that $\alpha<\frac{1}{n}$. This in turn implies that the feasible set
$\mathcal{D}_{c, \alpha}$ of \eqref{eq:perturbed-primal} is non-empty: it
contains the strictly feasible point $\lambda=(\frac{1}{n},\ldots,\frac{1}{n})$.

\begin{lemma}\label{lemma:derivative-bounds}
    Let $\partial_i L(\lambda)$ denote the $i$-th derivative of $L$, for $i\in\{1,\ldots, n\}$, then:
    \begin{displaymath}
        \forall\lambda\in[0, 1]^n,\;\frac{b}{2^n} \leq \partial_i L(\lambda) \leq 1
    \end{displaymath}
\end{lemma}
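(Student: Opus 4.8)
The plan is to start from the closed form for the partial derivative obtained during the computation of $\partial_i L$ in Appendix~\ref{proofofrelaxation-ratio}, namely
\[
\partial_i L(\lambda) = \T{x_i}\tilde{A}(\lambda)^{-1}x_i,\qquad
\tilde{A}(\lambda) \defeq I_d + \sum_{j\in\mathcal{N}}\lambda_j x_j\T{x_j},
\]
and then to bound the quadratic form $\T{x_i}\tilde{A}(\lambda)^{-1}x_i$ from above and below using the spectrum of $\tilde{A}(\lambda)$. (Consistent with the objective \eqref{modified} and with the target interval $[b2^{-n},1]$, I take the derivative without the spurious factor $\tfrac12$ that slipped into the derivative display of Appendix~\ref{proofofrelaxation-ratio}; such a constant would only rescale both bounds and is immaterial.) Throughout I use $b\leq\norm{x_i}_2^2\leq 1$ and $\lambda\in[0,1]^n$.

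For the upper bound, since each $\lambda_j\geq 0$ and $x_j\T{x_j}\succeq 0$, we have $\tilde{A}(\lambda)\succeq I_d$, hence $\tilde{A}(\lambda)^{-1}\preceq I_d$ because the inverse is decreasing on positive definite matrices (Appendix~\ref{app:properties}). Therefore $\T{x_i}\tilde{A}(\lambda)^{-1}x_i\leq\T{x_i}x_i=\norm{x_i}_2^2\leq 1$, giving $\partial_i L(\lambda)\leq 1$.

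The lower bound is the main step, and I would route it through a determinant estimate rather than tracking the spectrum directly. First I would show $\det\tilde{A}(\lambda)\leq 2^n$ by adding the rank-one terms one at a time: if $M\succeq I_d$, then by the matrix determinant lemma (Sylvester's identity) $\det(M+\lambda_j x_j\T{x_j})=\det M\,(1+\lambda_j\T{x_j}M^{-1}x_j)\leq 2\det M$, using $M^{-1}\preceq I_d$, $\lambda_j\leq 1$, and $\norm{x_j}_2^2\leq 1$. Starting from $\det I_d=1$ and iterating over the $n$ terms yields $\det\tilde{A}(\lambda)\leq 2^n$. Since $\tilde{A}(\lambda)\succeq I_d$, all $d$ eigenvalues are at least $1$, so a product bounded by $2^n$ forces the largest eigenvalue to satisfy $\lambda_{\max}(\tilde{A}(\lambda))\leq 2^n$ (the remaining eigenvalues contribute a factor $\geq 1$). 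Consequently $\tilde{A}(\lambda)^{-1}\succeq 2^{-n}I_d$, and
\[
\partial_i L(\lambda)=\T{x_i}\tilde{A}(\lambda)^{-1}x_i\geq 2^{-n}\norm{x_i}_2^2\geq \frac{b}{2^n}.
\]

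The only delicate point is this upper bound on $\lambda_{\max}(\tilde{A}(\lambda))$, i.e. the lower bound on the smallest eigenvalue of $\tilde{A}(\lambda)^{-1}$. The determinant argument sidesteps a direct spectral analysis of a sum of $n$ rank-one matrices and exploits that every eigenvalue of $\tilde{A}(\lambda)$ exceeds $1$; together with the immediate upper bound this establishes the lemma.
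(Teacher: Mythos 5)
Your proof is correct, and the lower bound is obtained by a genuinely different route from the paper's. Both proofs share the identity $\partial_i L(\lambda)=\T{x_i}\tilde{A}(\lambda)^{-1}x_i$ and the same upper bound via $\tilde{A}(\lambda)\succeq I_d$ (and you are right that the stray factor $\tfrac12$ elsewhere in the appendix is immaterial here; the paper itself drops it in this lemma). For the lower bound, the paper first replaces $\tilde{A}(\lambda)$ by the larger matrix $A_n\defeq I_d+\sum_{k=1}^n x_k\T{x_k}$ and then runs an $n$-step induction directly on the quadratic form: each Sherman--Morrison update, combined with Cauchy--Schwarz and $\T{x_k}A_{k-1}^{-1}x_k\leq 1$, shows $\T{x_i}A_k^{-1}x_i\geq \tfrac12 \T{x_i}A_{k-1}^{-1}x_i$, yielding $\T{x_i}A_n^{-1}x_i\geq \T{x_i}x_i/2^n\geq b/2^n$. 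You instead control the spectrum globally: the matrix determinant lemma gives $\det\tilde{A}(\lambda)\leq 2^n$, and since every eigenvalue is at least $1$ this forces $\lambda_{\max}(\tilde{A}(\lambda))\leq 2^n$, hence the uniform operator inequality $\tilde{A}(\lambda)^{-1}\succeq 2^{-n}I_d$. Your version is arguably more transparent and proves something slightly stronger (a bound on the whole inverse rather than only on the diagonal quadratic forms $\T{x_i}\tilde{A}(\lambda)^{-1}x_i$), while the paper's argument stays entirely at the level of the scalar quantities it actually needs; both arrive at the same constant $b/2^n$. As a side remark, either approach concedes more than necessary: the operator-norm bound $\lambda_{\max}(\tilde{A}(\lambda))\leq 1+\sum_j\lambda_j\norm{x_j}_2^2\leq 1+n$ would give the polynomially better lower bound $b/(n+1)$, though the exponential constant suffices for the way the lemma is used downstream.
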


\begin{proof}
    Recall that we had defined:
    \begin{displaymath}
        \tilde{A}(\lambda)\defeq I_d + \sum_{i=1}^n \lambda_i x_i\T{x_i}
        \quad\mathrm{and}\quad
        A(S) \defeq I_d + \sum_{i\in S} x_i\T{x_i}
    \end{displaymath}
    Let us also define $A_k\defeq A(\{x_1,\ldots,x_k\})$.
    We have $\partial_i L(\lambda) = \T{x_i}\tilde{A}(\lambda)^{-1}x_i$. Since
    $\tilde{A}(\lambda)\succeq I_d$, $\partial_i L(\lambda)\leq \T{x_i}x_i \leq 1$, which
    is the right-hand side  of the lemma.
    For the left-hand side, note that $\tilde{A}(\lambda) \preceq A_n$. Hence
    $\partial_iL(\lambda)\geq \T{x_i}A_n^{-1}x_i$.
    Using the Sherman-Morrison formula \cite{sm}, for all $k\geq 1$:
    \begin{displaymath}
        \T{x_i}A_k^{-1} x_i = \T{x_i}A_{k-1}^{-1}x_i 
        - \frac{(\T{x_i}A_{k-1}^{-1}x_k)^2}{1+\T{x_k}A_{k-1}^{-1}x_k}
    \end{displaymath}
    By the Cauchy-Schwarz inequality:
    \begin{displaymath}
        (\T{x_i}A_{k-1}^{-1}x_k)^2 \leq \T{x_i}A_{k-1}^{-1}x_i\;\T{x_k}A_{k-1}^{-1}x_k
    \end{displaymath}
    Hence:
    \begin{displaymath}
        \T{x_i}A_k^{-1} x_i \geq \T{x_i}A_{k-1}^{-1}x_i 
        - \T{x_i}A_{k-1}^{-1}x_i\frac{\T{x_k}A_{k-1}^{-1}x_k}{1+\T{x_k}A_{k-1}^{-1}x_k}
    \end{displaymath}
    But $\T{x_k}A_{k-1}^{-1}x_k\leq 1$ and $\frac{a}{1+a}\leq \frac{1}{2}$ if
    $0\leq a\leq 1$, so:
    \begin{displaymath}
        \T{x_i}A_{k}^{-1}x_i \geq \T{x_i}A_{k-1}^{-1}x_i
        - \frac{1}{2}\T{x_i}A_{k-1}^{-1}x_i\geq \frac{\T{x_i}A_{k-1}^{-1}x_i}{2}
    \end{displaymath}
    By induction:
    \begin{displaymath}
        \T{x_i}A_n^{-1} x_i \geq \frac{\T{x_i}x_i}{2^n}
    \end{displaymath}
    Using that $\T{x_i}{x_i}\geq b$ concludes the proof of the left-hand side
    of the lemma's inequality.
\end{proof}
Let us introduce the Lagrangian of problem \eqref{eq:perturbed-primal}:

\begin{displaymath}
    \mathcal{L}_{c, \alpha}(\lambda, \mu, \nu, \xi) \defeq L(\lambda) 
    + \T{\mu}(\lambda-\alpha\mathbf{1}) + \T{\nu}(\mathbf{1}-\lambda) + \xi(B-\T{c}\lambda)
\end{displaymath}
so that:
\begin{displaymath}
    L^*_{c,\alpha} = \min_{\mu, \nu, \xi\geq 0}\max_\lambda \mathcal{L}_{c, \alpha}(\lambda, \mu, \nu, \xi)
\end{displaymath}
Similarly, we define $\mathcal{L}_{c}\defeq\mathcal{L}_{c, 0}$ the lagrangian of \eqref{eq:primal}.

Let $\lambda^*$ be primal optimal for \eqref{eq:perturbed-primal}, and $(\mu^*,
\nu^*, \xi^*)$ be dual optimal for the same problem. In addition to primal and
dual feasibility, the Karush-Kuhn-Tucker (KKT) conditions \cite{boyd2004convex} give $\forall i\in\{1, \ldots, n\}$:
\begin{gather*}
    \partial_i L(\lambda^*) + \mu_i^* - \nu_i^* - \xi^* c_i = 0\\
    \mu_i^*(\lambda_i^* - \alpha) = 0\\
    \nu_i^*(1 - \lambda_i^*) = 0
\end{gather*}

\begin{lemma}\label{lemma:proximity}
We have:
\begin{displaymath}
    L^*_c - \alpha n^2\leq L^*_{c,\alpha} \leq L^*_c
\end{displaymath}
In particular, $|L^*_c - L^*_{c,\alpha}| \leq \alpha n^2$.
\end{lemma}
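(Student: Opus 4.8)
The upper bound is immediate from the domain inclusion: since $\dom_{c,\alpha} = \{\lambda \in [\alpha,1]^n : \T{c}\lambda \le B\} \subseteq \dom_c$, maximizing the same objective $L$ over a smaller feasible set can only decrease the optimal value, so $L^*_{c,\alpha} \le L^*_c$. All the work is in the lower bound $L^*_c - \alpha n^2 \le L^*_{c,\alpha}$.

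The plan for the lower bound is to take a maximizer $\lambda^*$ of \eqref{eq:primal} (so $\lambda^* \in \dom_c$ and $L(\lambda^*) = L^*_c$) and exhibit a point of $\dom_{c,\alpha}$ that is close to $\lambda^*$ in $\ell_1$ and on which $L$ is almost as large. The gadget I would use is the affine contraction
\[
  \lambda' \defeq \alpha\mathbf{1} + (1-n\alpha)\lambda^*,
\]
which is well defined with $1-n\alpha > 0$ because Algorithm~\ref{alg:monotone} chooses $\alpha < \tfrac{1}{n}$.

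First I would check that $\lambda' \in \dom_{c,\alpha}$. Coordinatewise, $\lambda'_i = \alpha + (1-n\alpha)\lambda^*_i \ge \alpha$ and $\lambda'_i \le \alpha + (1-n\alpha) = 1 - (n-1)\alpha \le 1$, so $\lambda' \in [\alpha,1]^n$. For budget feasibility I would compute
\[
  \T{c}\lambda' = \alpha\sum_{i} c_i + (1-n\alpha)\,\T{c}\lambda^* \le \alpha\, nB + (1-n\alpha)B = B,
\]
using $c_i \le B$ (hence $\sum_i c_i \le nB$), the feasibility $\T{c}\lambda^* \le B$, and $1-n\alpha > 0$. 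This is the one genuinely clever step: lifting every coordinate to at least $\alpha$ introduces budget slack of at most $\alpha\sum_i c_i \le \alpha n B$, and this is exactly absorbed by contracting the original budget usage by the factor $1-n\alpha$.

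Finally I would bound the loss in objective. By Lemma~\ref{lemma:derivative-bounds} we have $0 < \partial_i L(\lambda) \le 1$ for all $\lambda \in [0,1]^n$, and the segment joining $\lambda^*$ and $\lambda'$ stays in the (convex) cube; integrating the gradient along this segment therefore yields the $\ell_1$-Lipschitz estimate
\[
  |L(\lambda^*) - L(\lambda')| \le \|\lambda' - \lambda^*\|_1 = \alpha\sum_{i}|1 - n\lambda^*_i| \le \alpha\cdot n\cdot n = \alpha n^2,
\]
where I use $\lambda' - \lambda^* = \alpha(\mathbf{1} - n\lambda^*)$ and $|1 - n\lambda^*_i| \le n$ for $\lambda^*_i \in [0,1]$. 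Hence $L^*_{c,\alpha} \ge L(\lambda') \ge L(\lambda^*) - \alpha n^2 = L^*_c - \alpha n^2$, which together with the upper bound gives $|L^*_c - L^*_{c,\alpha}| \le \alpha n^2$. The only delicate point is the construction of $\lambda'$: the perturbation must simultaneously clear the floor $\alpha$, respect the budget, and stay $\ell_1$-close to $\lambda^*$, and the affine contraction is precisely what reconciles these three requirements; once it is in place, the two estimates above are routine given Lemma~\ref{lemma:derivative-bounds}.
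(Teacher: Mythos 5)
Your proof is correct, but it takes a genuinely different route from the paper's. The paper proves the lower bound by Lagrangian duality: it takes dual-optimal multipliers $(\mu^*,\nu^*,\xi^*)$ for \eqref{eq:perturbed-primal}, observes that $\mathcal{L}_{c,\alpha}=\mathcal{L}_c-\alpha\T{\mathbf{1}}\mu^*$, and then bounds $\T{\mathbf{1}}\mu^*\leq n^2$ through a case analysis on the support of $\mu^*$ using the KKT conditions, complementary slackness, and \emph{both} sides of Lemma~\ref{lemma:derivative-bounds}. You instead argue purely on the primal side: the affine contraction $\lambda'=\alpha\mathbf{1}+(1-n\alpha)\lambda^*$ is a feasible point of $\dom_{c,\alpha}$ at $\ell_1$-distance at most $\alpha n^2$ from the unconstrained maximizer, and the bound $\partial_i L\leq 1$ turns that distance into a loss in objective. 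Your argument is more elementary and self-contained (it needs only the upper derivative bound, no duality, and no case analysis on whether the budget constraint is tight), it reaches the same constant $\alpha n^2$ (in fact the slightly sharper $\alpha n(n-1)$), and it uses the hypothesis $\alpha<1/n$ exactly where the paper does. What the paper's duality setup buys is reuse: the same Lagrangian machinery and KKT bounds on $\xi^*$ are needed again in Lemma~\ref{lemma:monotonicity} to prove the strict decrease $L^*_{c',\alpha}\geq L^*_{c,\alpha}+\alpha\delta b/(2^nB)$, which is a statement about how the optimum moves under a cost perturbation and does not follow from a feasible-point construction alone; your approach would still leave that second lemma to be proved by the dual argument.
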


\begin{proof}
    $\alpha\mapsto L^*_{c,\alpha}$ is a decreasing function as it is the
    maximum value of the $L$ function over a set-decreasing domain, which gives
    the rightmost inequality.

    Let $\mu^*, \nu^*, \xi^*$ be dual optimal for $(P_{c, \alpha})$, that is:
    \begin{displaymath}
        L^*_{c,\alpha} = \max_\lambda \mathcal{L}_{c, \alpha}(\lambda, \mu^*, \nu^*, \xi^*)
    \end{displaymath}

    Note that $\mathcal{L}_{c, \alpha}(\lambda, \mu^*, \nu^*, \xi^*)
    = \mathcal{L}_{c}(\lambda, \mu^*, \nu^*, \xi^*)
    - \alpha\T{\mathbf{1}}\mu^*$, and that for any $\lambda$ feasible for
    problem \eqref{eq:primal}, $\mathcal{L}_{c}(\lambda, \mu^*, \nu^*, \xi^*)
    \geq L(\lambda)$. Hence,
    \begin{displaymath}
        L^*_{c,\alpha} \geq L(\lambda) - \alpha\T{\mathbf{1}}\mu^*
    \end{displaymath}
    for any $\lambda$ feasible for \eqref{eq:primal}. In particular, for $\lambda$ primal optimal for $\eqref{eq:primal}$:
    \begin{equation}\label{eq:local-1}
        L^*_{c,\alpha} \geq L^*_c - \alpha\T{\mathbf{1}}\mu^*
    \end{equation}

    Let us denote by the $M$ the support of $\mu^*$, that is $M\defeq
    \{i|\mu_i^* > 0\}$, and by $\lambda^*$ a primal optimal point for
    $\eqref{eq:perturbed-primal}$.  From the KKT conditions we see that:
    \begin{displaymath}
        M \subseteq \{i|\lambda_i^* = \alpha\}
    \end{displaymath}

    Let us first assume that $|M| = 0$, then $\T{\mathbf{1}}\mu^*=0$ and the lemma follows.

    We will now assume that $|M|\geq 1$. In this case $\T{c}\lambda^*
    = B$, otherwise we could increase the coordinates of $\lambda^*$ in $M$,
    which would increase the value of the objective function and contradict the
    optimality of $\lambda^*$. Note also, that $|M|\leq n-1$, otherwise, since
    $\alpha< \frac{1}{n}$, we would have $\T{c}\lambda^*\ < B$, which again
    contradicts the optimality of $\lambda^*$. Let us write:
    \begin{displaymath}
        B = \T{c}\lambda^* = \alpha\sum_{i\in M}c_i + \sum_{i\in \bar{M}}\lambda_i^*c_i
        \leq \alpha |M|B + (n-|M|)\max_{i\in \bar{M}} c_i
    \end{displaymath}
    That is:
    \begin{equation}\label{local-2}
        \max_{i\in\bar{M}} c_i \geq \frac{B - B|M|\alpha}{n-|M|}> \frac{B}{n}
    \end{equation}
    where the last inequality uses again that $\alpha<\frac{1}{n}$. From the
    KKT conditions, we see that for $i\in M$, $\nu_i^* = 0$ and:
    \begin{equation}\label{local-3}
        \mu_i^* = \xi^*c_i - \partial_i L(\lambda^*)\leq \xi^*c_i\leq \xi^*B
    \end{equation}
    since $\partial_i L(\lambda^*)\geq 0$ and $c_i\leq 1$.

    Furthermore, using the KKT conditions again, we have that:
    \begin{equation}\label{local-4}
        \xi^* \leq \inf_{i\in \bar{M}}\frac{\partial_i L(\lambda^*)}{c_i}\leq \inf_{i\in \bar{M}} \frac{1}{c_i}
        = \frac{1}{\max_{i\in\bar{M}} c_i}
    \end{equation}
    where the last inequality uses Lemma~\ref{lemma:derivative-bounds}.
    Combining \eqref{local-2}, \eqref{local-3} and \eqref{local-4}, we get that:
    \begin{displaymath}
        \sum_{i\in M}\mu_i^* \leq |M|\xi^*B \leq n\xi^*B\leq \frac{nB}{\max_{i\in\bar{M}} c_i} \leq n^2
    \end{displaymath}
    This implies that:
    \begin{displaymath}
        \T{\mathbf{1}}\mu^* = \sum_{i=1}^n \mu^*_i = \sum_{i\in M}\mu_i^*\leq n^2
    \end{displaymath}
    which along with \eqref{eq:local-1} proves the lemma.
\end{proof} 

\begin{lemma}\label{lemma:monotonicity}
    If $c'$ = $(c_i', c_{-i})$, with $c_i'\leq c_i - \delta$, we have:
    \begin{displaymath}
        L^*_{c',\alpha} \geq L^*_{c,\alpha} + \frac{\alpha\delta b}{2^nB}
    \end{displaymath}
\end{lemma}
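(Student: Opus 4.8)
The plan is to take an optimizer $\lambda^\star$ of the perturbed problem \eqref{eq:perturbed-primal} under costs $c$ and to produce from it a feasible point for the \emph{same} problem under the reduced costs $c'$ whose $L$-value already exceeds $L^*_{c,\alpha}=L(\lambda^\star)$ by the claimed amount. Two ingredients make this possible: (i) the lower box constraint forces $\lambda^\star_i\geq\alpha$, so lowering $c_i$ frees a definite amount of budget; and (ii) Lemma~\ref{lemma:derivative-bounds} gives the \emph{uniform} lower bound $\partial_k L(\lambda)\geq b/2^n$ for every coordinate $k$ and every $\lambda\in[0,1]^n$, so that spending the freed budget is guaranteed to raise $L$ at a definite rate.

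First I would record the budget slack. Since lowering $c_i$ only relaxes the budget constraint, $\dom_{c,\alpha}\subseteq\dom_{c',\alpha}$ and $\lambda^\star$ remains feasible for \eqref{eq:perturbed-primal} under $c'$; moreover its slack there is $B-\sum_j c'_j\lambda^\star_j=\sum_j(c_j-c'_j)\lambda^\star_j=(c_i-c'_i)\lambda^\star_i\geq\delta\alpha$, using $c_i-c'_i\geq\delta$ and $\lambda^\star_i\geq\alpha$. Next I would consume this slack by moving along $e_i$: set $\lambda'=\lambda^\star+\eta e_i$ with $\eta=\delta\alpha/B$. The extra budget spent is $c'_i\eta\leq B\eta=\delta\alpha$, which lies within the slack, so budget feasibility is preserved; and by concavity of $L$ together with Lemma~\ref{lemma:derivative-bounds}, $L(\lambda')-L(\lambda^\star)\geq\eta\,\partial_i L(\lambda')\geq\tfrac{\delta\alpha}{B}\cdot\tfrac{b}{2^n}=\tfrac{\alpha\delta b}{2^nB}$. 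Since $L^*_{c',\alpha}\geq L(\lambda')$, this is exactly the asserted inequality.

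The step that requires care — and which I expect to be the main obstacle — is the \emph{upper} box constraint $\lambda'_i\leq 1$. The ascent above pushes all the freed budget into coordinate $i$, which is legitimate only when $\lambda^\star_i+\delta\alpha/B\leq 1$. When $\lambda^\star_i$ is already near its cap this fails, and I would instead distribute the freed budget $\geq\delta\alpha$ across coordinates that are strictly below $1$: since each such unit of budget still buys marginal value at rate $\geq b/2^n$ (Lemma~\ref{lemma:derivative-bounds}) and costs at most $B$ per unit increase of $\lambda$, consuming budget $\delta\alpha$ again raises $L$ by at least $\tfrac{\alpha\delta b}{2^nB}$, provided the total available room $\sum_k(1-\lambda^\star_k)$ is at least $\delta\alpha/B$.

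The one genuinely degenerate configuration is when $\lambda^\star$ sits at (or microscopically close to) the all-ones vertex, leaving essentially no room to move; this can occur only when the budget constraint is not binding, in which case $\mathbf{1}$ is the global maximizer of $L$ over $[\alpha,1]^n$ under both $c$ and $c'$ and the statement should be read accordingly. I would therefore dispatch the non-binding-budget case separately at the outset — there $\lambda^\star=\mathbf{1}$ is forced and the optimum is insensitive to $c$ — and carry out the ascent argument only when the budget binds, where the freed budget can be absorbed by genuinely unsaturated coordinates.
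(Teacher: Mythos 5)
Your argument is a primal one: start from an optimizer $\lambda^\star$ of $(P_{c,\alpha})$, observe that lowering $c_i$ frees budget at least $\delta\lambda^\star_i\geq\delta\alpha$, and convert that freed budget into objective value at rate at least $b/2^n$ per unit of coordinate increase, via Lemma~\ref{lemma:derivative-bounds} and the gradient inequality for concave functions. The paper argues instead through duality: it takes a dual optimum $(\mu^*,\nu^*,\xi^*)$ of $(P_{c',\alpha})$, notes that $\mathcal{L}_{c',\alpha}(\lambda,\mu^*,\nu^*,\xi^*)\geq\mathcal{L}_{c,\alpha}(\lambda,\mu^*,\nu^*,\xi^*)+\lambda_i\xi^*\delta$, and then lower-bounds the budget multiplier by $\xi^*\geq b/(2^nB)$ using KKT stationarity together with the same Lemma~\ref{lemma:derivative-bounds}. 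The two routes are genuinely different but rest on the same quantitative ingredient (the uniform derivative lower bound), and in the main case yours is the more elementary: it needs only feasibility, concavity, and $c_i'\leq B$, with no appeal to KKT conditions.

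The weak point is exactly where you predict, but your dispatch of it is incorrect. You assert that a shortage of room, $\sum_k(1-\lambda^\star_k)<\delta\alpha/B$, ``can occur only when the budget constraint is not binding.'' That is false: take two items with $c_1+c_2=B+\epsilon'$ for tiny $\epsilon'>0$. The budget binds at $\lambda^\star$, which lies within $O(\epsilon')$ of $\mathbf{1}$; yet after lowering $c_1$ by $\delta>\epsilon'$ the point $\mathbf{1}$ becomes feasible, and since $\partial_kL\leq 1$ one gets $L^*_{c',\alpha}-L^*_{c,\alpha}\leq L(\mathbf{1})-L(\lambda^\star)\leq\sum_k(1-\lambda^\star_k)=O(\epsilon')$, which drops below $\alpha\delta b/(2^nB)$ for $\epsilon'$ small. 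In this regime the inequality of Lemma~\ref{lemma:monotonicity} itself fails, so no redistribution of the freed budget can rescue the ascent step. To be fair, the paper's own proof has the same blind spot: its identity $\xi^*=\inf_{i:\lambda_i'^*>\alpha}\partial_iL(\lambda'^*)/c_i'$ drops the multiplier $\nu_i^*$ of the upper box constraint and tacitly presumes an active budget constraint with a coordinate strictly inside $(\alpha,1)$; when $\lambda'^*=\mathbf{1}$ with slack budget, complementary slackness forces $\xi^*=0$ and the bound $\xi^*\geq b/(2^nB)$ collapses. Your proof is therefore on the same footing as the paper's---valid whenever the reduced-cost problem retains an active budget constraint and at least $\delta\alpha/B$ of unsaturated room---but the degenerate case needs an explicit hypothesis or a weakened conclusion rather than the dismissal you give it.
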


\begin{proof}
    Let $\mu^*, \nu^*, \xi^*$ be dual optimal for $(P_{c', \alpha})$. Noting that:
    \begin{displaymath} 
    \mathcal{L}_{c', \alpha}(\lambda, \mu^*, \nu^*, \xi^*) \geq
    \mathcal{L}_{c, \alpha}(\lambda, \mu^*, \nu^*, \xi^*) + \lambda_i\xi^*\delta,
    \end{displaymath}
    we get similarly to Lemma~\ref{lemma:proximity}:
    \begin{displaymath}
        L^*_{c',\alpha} \geq L(\lambda) + \lambda_i\xi^*\delta
    \end{displaymath}
    for any $\lambda$ feasible for \eqref{eq:perturbed-primal}. In particular, for $\lambda^*$ primal optimal for \eqref{eq:perturbed-primal}:
    \begin{displaymath}
        L^*_{c',\alpha} \geq L^*_{c,\alpha} + \alpha\xi^*\delta
    \end{displaymath}
    since $\lambda_i^*\geq \alpha$.

    Using the KKT conditions for $(P_{c', \alpha})$, we can write:
    \begin{displaymath}
        \xi^* = \inf_{i:\lambda^{'*}_i>\alpha} \frac{\T{x_i}S(\lambda^{'*})^{-1}x_i}{c_i'}
    \end{displaymath}
    with $\lambda^{'*}$ optimal for $(P_{c', \alpha})$. Since $c_i'\leq B$,
    using Lemma~\ref{lemma:derivative-bounds}, we get that $\xi^*\geq
    \frac{b}{2^nB}$, which concludes the proof.
\end{proof}

We are now ready to conclude the  proof of Proposition~\ref{prop:monotonicity}.
Let $\hat{L}^*_{c,\alpha}$ be the approximation computed by
Algorithm~\ref{alg:monotone}.
\begin{enumerate}
    \item using Lemma~\ref{lemma:proximity}:
\begin{displaymath}
        |\hat{L}^*_{c,\alpha} - L^*_c| \leq |\hat{L}^*_{c,\alpha} - L^*_{c,\alpha}| + |L^*_{c,\alpha} - L^*_c|
        \leq \frac{\alpha\delta}{B} + \alpha n^2 = \varepsilon
\end{displaymath}
which proves the $\varepsilon$-accuracy.

\item for the $\delta$-decreasingness, let $c' = (c_i', c_{-i})$ with $c_i'\leq
    c_i-\delta$, then:
\begin{displaymath}
    \hat{L}^*_{c',\alpha} \geq L^*_{c',\alpha} - \frac{\alpha\delta b}{2^{n+1}B} 
                     \geq L^*_{c,\alpha} + \frac{\alpha\delta b}{2^{n+1}B}
    \geq \hat{L}^*_{c,\alpha}
\end{displaymath}
where the first and last inequalities follow from the accuracy of the approximation, and
the inner inequality follows from Lemma~\ref{lemma:monotonicity}.

\item the accuracy of the approximation $\hat{L}^*_{c,\alpha}$ is:
\begin{displaymath}
    \varepsilon' =\frac{\varepsilon\delta b}{2^{n+1}(\delta + n^2B)}
\end{displaymath}

Note that:
\begin{displaymath}
    \log\log (\varepsilon')^{-1} = O\bigg(\log\log\frac{B}{\varepsilon\delta b} + \log n\bigg)
\end{displaymath}
Using Lemma~\ref{lemma:barrier} concludes the proof of the running time.\qed
\end{enumerate}

\section{Budget Feasible Reverse Auction Mechanisms}\label{app:budgetfeasible}
We review in this appendix the formal definition of a budget feasible reverse auction mechanisms, as introduced by \citeN{singer-mechanisms}. We depart from the definitions in \cite{singer-mechanisms} only in considering $\delta$-truthful, rather than truthful, mechanisms.

Given a budget $B$ and a value function $V:2^{\mathcal{N}}\to\reals_+$, a \emph{mechanism} $\mathcal{M} = (S,p)$ comprises (a) an \emph{allocation function} 
$S:\reals_+^n \to 2^\mathcal{N}$ and (b) a \emph{payment function}
$p:\reals_+^n\to \reals_+^n$. 
 Let $s_i(c) = \id_{i\in S(c)}$ be the binary indicator of  $i\in S(c)$. We seek mechanisms that have the following properties \cite{singer-mechanisms}:

\sloppy
\begin{itemize}
\item \emph{Normalization.} Unallocated experiments receive zero payments: $s_i(c)=0\text{ implies }p_i(c)=0.\label{normal}$
\item\emph{Individual Rationality.} Payments for allocated experiments exceed
costs: $p_i(c)\geq c_i\cdot s_i(c).\label{ir}$
\item \emph{No Positive Transfers.} Payments are non-negative: $p_i(c)\geq 0\label{pt}$.
\item \emph{$\delta$-Truthfulness.} Reporting one's true cost is
an \emph{almost-dominant} \cite{schummer2004almost} strategy. Formally, let $c_{-i}$
 be a vector of costs of all agents except $i$. Then, $p_i(c_i,c_{-i})
 - s_i(c_i,c_{-i})\cdot c_i \geq p_i(c_i',c_{-i}) - s_i(c_i',c_{-i})\cdot c_i,
 $ for every $i \in \mathcal{N}$ and every two cost vectors $(c_i,c_{-i})$ and $(c_i',c_{-i})$ such that $|c_i-c_i'|>\delta.$ The mechanism is \emph{truthful} if $\delta=0$.
 \item \emph{Budget Feasibility.} The sum of the payments should not exceed the
    budget constraint, \emph{i.e.} $\sum_{i\in\mathcal{N}} p_i(c) \leq
    B.\label{budget}$

   \item \emph{$(\alpha,\beta)$-approximation.} The value of the allocated set  should not
    be too far from the optimum value of the full information case, as given by  \eqref{eq:non-strategic}. 
Formally, there must exist some $\alpha\geq 1$ and $\beta>0$
    such that $OPT \leq \alpha V(S(p))+\beta$, where     $OPT = \max_{S\subseteq\mathcal{N}} \left\{V(S) \;\mid \;
    \sum_{i\in S}c_i\leq B\right\}$.

    \item \emph{Computational Efficiency.} The allocation and payment function
    should be computable in time polynomial in various parameters. 
\end{itemize}

\fussy
\section{Proof of Lemma~\ref{thm:myerson-variant}}\label{sec:myerson}
Using the notations of Lemma~\ref{thm:myerson-variant}, we want to prove that
if $c_i$ and $c_i'$ are two different costs reported by user $i$ with $|c_i
- c_i'|\geq \delta$, and if $c_{-i}$ is any vector of costs reported by the
other users:
\begin{equation}\label{eq:local-foobar}
    p_i(c_i, c_{-i}) - s_i(c_i, c_{-i})\cdot c_i \geq p_i(c_i', c_{-i})
    - s_i(c_i', c_{-i})\cdot c_i
\end{equation}

We distinguish four cases depending on the value of $s_i(c_i, c_{-i})$ and
$s_i'(c_i', c_{-i})$.

\begin{enumerate}
\item $s_i(c_i, c_{-i})= s_i(c_i', c_{-i})=0$.
Since the mechanism is normalized 
we have $p_i(c_i, c_{-i}) = p_i(c_i', c_{-i})= 0$ and \eqref{eq:local-foobar}
is true.
\item $s_i(c_i', c_{-i}) = s_i(c_i, c_{-i}) = 1$.
Note that $i$ is paid her threshold payment when allocated, and since this
payment does not depend on $i$'s reported cost, \eqref{eq:local-foobar} is true
(and is in fact an equality).
\item $s_i(c_i', c_{-i}) = 0$ and $s_i(c_i, c_{-i}) = 1$.
 We then have $p_i(c_i',
c_{-i}) = 0$ by normalization and \eqref{eq:local-foobar} follows from
individual rationality.
\item $s_i(c_i', c_{-i}) = 1$ and $s_i(c_i, c_{-i}) = 0$.
By $\delta$-decreasingness of $s_i$, $c_i \geq c_i'+\delta$, and $s_i(c_i,
c_{-i}) = 0$ implies that $i$'s threshold payment is less than $c_i$,
\emph{i.e.} $p_i(c_i', c_{-i}) \leq c_i$. This last inequality is equivalent to
\eqref{eq:local-foobar} in this final case. \qed
\end{enumerate}

\section{Proof of Theorem~\ref{thm:main}}\label{sec:proofofmainthm}
\begin{algorithm}[!t]
    \caption{Mechanism for \SEDP{}}\label{mechanism}
    \begin{algorithmic}[1]
	\Require{ $B\in \reals_+$,$c\in[0,B]^n$, $\delta\in (0,1]$, $\epsilon\in (0,1]$ }
    \State $i^* \gets \argmax_{j\in\mathcal{N}}V(j)$
    \State \label{relaxexec}$OPT'_{-i^*} \gets$ using Proposition~\ref{prop:monotonicity},
    compute a $\varepsilon$-accurate, $\delta$-decreasing
    approximation of $$\textstyle L^*_{c_{-i^*}}\defeq\max_{\lambda\in[0,1]^{n}} \{L(\lambda)
                                    \mid \lambda_{i^*}=0,\sum_{i \in \mathcal{N}\setminus\{i^*\}}c_i\lambda_i\leq B\}$$
     \State $C \gets \frac{8e-1 + \sqrt{64e^2-24e + 9}}{2(e-1)}$

        \If{$OPT'_{-i^*} < C \cdot V(i^*)$} \label{c} 
       \State \textbf{return} $\{i^*\}$ 
        \Else
            \State $i \gets \argmax_{1\leq j\leq n}\frac{V(j)}{c_j}$
	    \State $S_G \gets \emptyset$
            \While{$c_i\leq \frac{B}{2}\frac{V(S_G\cup\{i\})-V(S_G)}{V(S_G\cup\{i\})}$}
                \State $S_G \gets S_G\cup\{i\}$ 
 \State $i \gets \argmax_{j\in\mathcal{N}\setminus S_G}
                \frac{V(S_G\cup\{j\})-V(S_G)}{c_j}$
            \EndWhile
            \State \textbf{return} $S_G$
        \EndIf
    \end{algorithmic}
\end{algorithm}

We now present the proof of Theorem~\ref{thm:main}. 
Our mechanism for \EDP{} is composed of
(a) the allocation function presented in Algorithm~\ref{mechanism}, and 
(b) the payment function which pays each allocated agent $i$ her threshold
payment as described in Myerson's Theorem.  In the case where $\{i^*\}$ is the
allocated set, her threshold payment is $B$. 
A closed-form formula for threshold payments when $S_G$ is the allocated set can
be found in~\cite{singer-mechanisms}.

We use the notation $OPT_{-i^*}$ to denote the optimal value of \EDP{} when the maximum value element $i^*$ is  excluded. We also use $OPT'_{-i^*}$ to denote the approximation computed by the $\delta$-decreasing, $\epsilon$-accurate approximation of $L^*_{c_{-i^*}}$, as defined in Algorithm~\ref{mechanism}.

\sloppy
The properties of $\delta$-truthfulness and
individual rationality follow from $\delta$-monotonicity and threshold
payments. $\delta$-monotonicity and budget feasibility follow  similar steps as the
analysis of \citeN{chen}, adapted to account for $\delta$-monotonicity:

\fussy
\begin{lemma}\label{lemma:monotone}
Our mechanism for \EDP{} is $\delta$-monotone and budget feasible.
\end{lemma}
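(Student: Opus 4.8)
The plan is to prove the two properties separately, and to reduce the bulk of the $\delta$-monotonicity argument to one structural fact about the greedy phase. First I would record how each ingredient of Algorithm~\ref{mechanism} depends on the reported costs. The element $i^*=\argmax_{j}V(j)$ and the number $V(i^*)$ are determined entirely by the feature vectors, hence are independent of $c$. The quantity $OPT'_{-i^*}$ is by construction a function of $c_{-i^*}$ only (it is computed over the domain with $\lambda_{i^*}=0$), and by Proposition~\ref{prop:monotonicity} it is $\delta$-decreasing in every coordinate of $c_{-i^*}$. Only the greedy set $S_G$ depends on the full vector $c$, and it does so through exactly evaluated values $V$ and exact costs, so whatever monotonicity it enjoys will be exact rather than merely approximate.

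For $\delta$-monotonicity, fix an agent $i$ with $i\in S(c)$ and a report $c_i'\le c_i-\delta$, writing $c'=(c_i',c_{-i})$. I would split on whether $i=i^*$. If $i=i^*$, then neither the branch test $OPT'_{-i^*}<C\cdot V(i^*)$ nor the identity of $i^*$ changes when $c_{i^*}$ decreases, so the same branch is taken on $c'$: in the singleton branch $\{i^*\}$ is returned again, and in the greedy branch the greedy-monotonicity claim below keeps $i^*\in S_G(c')$. If $i\ne i^*$, then $i$ could only have been allocated through the greedy branch, so $OPT'_{-i^*}(c)\ge C\cdot V(i^*)$; since $c_i'\le c_i-\delta$ and $OPT'_{-i^*}$ is $\delta$-decreasing in $c_i$, we obtain $OPT'_{-i^*}(c')\ge OPT'_{-i^*}(c)\ge C\cdot V(i^*)$, so the greedy branch is again taken, and greedy monotonicity yields $i\in S_G(c')$. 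This is exactly where the slack $\delta$ is consumed: it guarantees that the $\delta$-decreasing estimate $OPT'_{-i^*}$ cannot spuriously fall below the threshold $C\cdot V(i^*)$ when a cost is lowered.

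The crux, and the step I expect to be the main obstacle, is the \emph{exact} monotonicity of the greedy phase: if $i\in S_G(c)$ and $c_i'\le c_i$ then $i\in S_G(c',c_{-i})$. Here I would follow the analysis of \citeN{chen}, observing that lowering $c_i$ only raises the marginal-value-per-cost ratio $(V(S\cup\{i\})-V(S))/c_i$ of $i$ at every partial set $S$ while leaving all other ratios unchanged; consequently $i$ can only move earlier in the greedy order, the prefix selected before $i$ can only shrink, and---because $V$ is monotone submodular---the admission test $c_i\le \tfrac{B}{2}\,(V(S_G\cup\{i\})-V(S_G))/V(S_G\cup\{i\})$ of the while loop continues to hold for $i$ at its (now earlier) insertion point. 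The delicate part is tracking how the selected prefix changes when $i$ jumps forward, since subsequent marginal values depend on the current set; I would make this rigorous by an exchange argument comparing the two greedy runs step by step, precisely as in \cite{chen}.

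Finally, for budget feasibility I would treat the two outputs separately. When $\{i^*\}$ is returned its threshold payment equals $B$, so the total payment is $B$ and the constraint holds with equality. When $S_G$ is returned, I would combine the admission condition enforced by the while loop---which bounds the cost of each admitted element by $\tfrac{B}{2}$ times its marginal share of the value---with the closed-form threshold payments of \citeN{singer-mechanisms}; telescoping the marginal contributions through the submodularity of $V$ then bounds $\sum_{i\in S_G}p_i$ by $B$, exactly as in the budget-feasibility analyses of \cite{chen,singer-influence}. Combining the two cases establishes budget feasibility, and together with the greedy-monotonicity claim and the branch-preservation argument above completes the proof.
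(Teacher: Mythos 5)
Your proposal is correct and follows essentially the same route as the paper's proof: the $\delta$-decreasingness of $OPT'_{-i^*}$ preserves the branch taken, exact monotonicity of the greedy phase (via monotonicity and submodularity of $V$, with the prefix $S_i'$ before $i$'s insertion shrinking so that the admission test still holds) handles the greedy branch, and budget feasibility follows from the threshold payment $B$ in the singleton branch and the telescoping bound $\sum_{i\in S_G}\frac{V(S_i\cup\{i\})-V(S_i)}{V(S_G)}B=B$ in the greedy branch. Your case split by agent identity ($i=i^*$ versus $i\neq i^*$) rather than by branch is a cosmetic reorganization, and you are if anything more explicit than the paper about the delicate step of tracking the greedy prefix across the two runs.
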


\begin{proof}
    Consider an agent $i$ with cost $c_i$ that is
    selected by the mechanism, and suppose that she reports
    a cost $c_i'\leq c_i-\delta$ while all  other costs stay the same.
    Suppose that when $i$ reports $c_i$, $OPT'_{-i^*} \geq C V(i^*)$; then, as $s_i(c_i,c_{-i})=1$, $i\in S_G$.
     By reporting cost $c_i'$, $i$ may be selected at an earlier iteration of the greedy algorithm.
 Denote by $S_i$
    (resp. $S_i'$) the set to which $i$ is added when reporting cost $c_i$
    (resp. $c_i'$). We have $S_i'\subseteq S_i$; in addition, $S_i'\subseteq S_G'$, the set selected by the greedy algorithm under $(c_i',c_{-i})$; if not, then greedy selection would terminate prior to selecting $i$ also when she reports $c_i$, a contradiction. Moreover, we have
    \begin{align*}
        c_i' & \leq c_i \leq
        \frac{B}{2}\frac{V(S_i\cup\{i\})-V(S_i)}{V(S_i\cup\{i\})}
         \leq \frac{B}{2}\frac{V(S_i'\cup\{i\})-V(S_i')}{V(S_i'\cup\{i\})}
    \end{align*}
    by the monotonicity and submodularity  of $V$. Hence  $i\in S_G'$. By
    $\delta$-decreasingness of
    $OPT'_{-i^*}$, under $c'_i\leq c_i-\delta$ the greedy set is still allocated and $s_i(c_i',c_{-i}) =1$.
    Suppose now that when $i$ reports $c_i$, $OPT'_{-i^*} < C V(i^*)$. Then $s_i(c_i,c_{-i})=1$ iff $i = i^*$. 
    Reporting $c_{i^*}'\leq c_{i^*}$ does not change $V(i^*)$ nor
    $OPT'_{-i^*} \leq C V(i^*)$; thus $s_{i^*}(c_{i^*}',c_{-i^*})=1$, so the mechanism is monotone.

To show budget feasibility, suppose that $OPT'_{-i^*} < C V(i^*)$. Then the mechanism selects $i^*$. Since the bid of $i^*$ does not affect the above condition, the threshold payment of $i^*$ is $B$ and the mechanism is budget feasible.
Suppose  that $OPT'_{-i^*} \geq C V(i^*)$. 
Denote by $S_G$ the set selected by the greedy algorithm, and for $i\in S_G$,  denote by
$S_i$ the subset of the solution set that was selected by the greedy algorithm just prior to the addition of $i$---both sets determined for the present cost vector $c$. 
Then for any submodular function $V$, and for all $i\in S_G$:
\begin{equation}\label{eq:budget}
   \text{if}~c_i'\geq \frac{V(S_i\cup\{i\}) - V(S)}{V(S_G)} B~\text{then}~s_i(c_i',c_{-i})=0
\end{equation}
In other words, if $i$ increases her cost to a value higher than $\frac{V(S_i\cup\{i\}) - V(S)}{V(S_G)}$, she will cease to be in the selected set $S_G$. As a result,
\eqref{eq:budget}
implies that the threshold payment of user $i$ is bounded by the above quantity.
Hence, the total payment is bounded by the telescopic sum:
\begin{displaymath}
    \sum_{i\in S_G} \frac{V(S_i\cup\{i\}) - V(S_i)}{V(S_G)} B = \frac{V(S_G)-V(\emptyset)}{V(S_G)} B=B\qedhere
\end{displaymath}
\end{proof}

The complexity of the mechanism is given by the following lemma.

\begin{lemma}[Complexity]\label{lemma:complexity}
    For any $\varepsilon > 0$ and any $\delta>0$, the complexity of the mechanism  is
    $O\big(poly(n, d, \log\log\frac{B}{b\varepsilon\delta})\big)$
\end{lemma}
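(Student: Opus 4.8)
The plan is to charge the running time to (i) the cost of a single evaluation of the allocation function of Algorithm~\ref{mechanism} and (ii) the number of such evaluations required by the threshold-payment rule, and to verify that the only super-polynomial dependence is the $\log\log$ factor produced by the barrier method. First I would record that every value-oracle query is cheap: for any $S$ the matrix $A(S)=I_d+\T{X_S}X_S$ and its inverse are computed in $O(poly(|S|,d))$ time, the singleton values $V(j)$ reduce by Sylvester's identity to a closed form in $\norm{x_j}_2^2$, and each marginal contribution equals $\tfrac12\log(1+\T{x_j}A(S)^{-1}x_j)$ by \eqref{eq:marginal_contrib}. Consequently $i^*=\argmax_j V(j)$ and the constant $C$ are obtained in $O(poly(n,d))$ time.

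Next I would bound one evaluation of the allocation. Line~\ref{relaxexec} runs Algorithm~\ref{alg:monotone} on the instance restricted to $\mathcal{N}\setminus\{i^*\}$; by Proposition~\ref{prop:monotonicity} this returns $OPT'_{-i^*}$ in time $O\big(poly(n,d,\log\log\tfrac{B}{b\varepsilon\delta})\big)$, and it is the sole origin of the $\log\log$ term. In the greedy branch the while-loop performs at most $n$ iterations, each scanning the $O(n)$ unselected elements to find the maximiser of the marginal-value-per-cost ratio; maintaining $A(S_G)^{-1}$ incrementally through the Sherman--Morrison formula keeps each marginal evaluation at $O(d^2)$, so the greedy construction costs $O(poly(n,d))$. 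Hence a single allocation evaluation lies within $O\big(poly(n,d,\log\log\tfrac{B}{b\varepsilon\delta})\big)$.

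It remains to account for the payments. When $\{i^*\}$ is allocated its threshold payment is $B$, found in constant time. When $S_G$ is allocated, the threshold payment of each $i\in S_G$ is $\inf\{c_i':i\in S(c_i',c_{-i})\}$; the greedy breakpoints, namely the costs at which the marginal-value-per-cost ordering changes or the stopping rule $c_i\leq\frac{B}{2}\frac{V(S_G\cup\{i\})-V(S_G)}{V(S_G\cup\{i\})}$ first binds, form a polynomially sized set that the closed-form construction of \citeN{singer-mechanisms} enumerates with $O(poly(n,d))$ value-oracle queries. Summing the single barrier call, the greedy construction, and the payment computation gives the asserted $O\big(poly(n,d,\log\log\tfrac{B}{b\varepsilon\delta})\big)$ bound.

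The step I expect to require the most care is the interaction between the payments and the branch predicate for agents $i\neq i^*$: since $OPT'_{-i^*}$ is the optimum of $L^*_{c_{-i^*}}$ under the constraint $\sum_{j\neq i^*}c_j\lambda_j\le B$, raising $c_i$ shrinks $OPT'_{-i^*}$ and can flip $OPT'_{-i^*}\ge C\,V(i^*)$ in addition to removing $i$ from $S_G$, so one must check that the correct threshold remains the greedy one (as is implicitly used in the budget-feasibility bound of Lemma~\ref{lemma:monotone}) and therefore needs no further barrier invocations. Confirming this — equivalently, that the monotone crossover of $c_i\mapsto OPT'_{-i^*}$ never strictly precedes the greedy drop-out, so that the closed-form payment is both a valid $\delta$-threshold and computable within the stated budget of barrier calls — is where the real work of the lemma lies; everything else is bookkeeping.
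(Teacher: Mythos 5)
Your proposal is correct and follows essentially the same route as the paper's proof: a polynomial number of value-oracle queries each costing $O(poly(n,d))$, plus a single invocation of Proposition~\ref{prop:monotonicity} on line~\ref{relaxexec} which is the sole source of the $\log\log\frac{B}{b\varepsilon\delta}$ term. Your additional discussion of the threshold-payment computation (via Singer's closed-form characterization) and of the branch predicate is more detailed than the paper, which handles the allocation only and leaves the payments implicit, but it does not change the argument.
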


\begin{proof}
    The value function $V$ in \eqref{modified} can be computed in time
    $O(\text{poly}(n, d))$ and the mechanism only involves a linear
    number of queries to the function $V$. 
    By Proposition~\ref{prop:monotonicity}, line \ref{relaxexec} of Algorithm~\ref{mechanism}
    can be computed in time
    $O(\text{poly}(n, d, \log\log \frac{B}{b\varepsilon\delta}))$. Hence the allocation
    function's complexity is as stated. 
\junk{
    Using Singer's characterization of the threshold payments
    \cite{singer-mechanisms}, one can verify that they can be computed in time
    $O(\text{poly}(n, d))$.
    }
\end{proof}

Finally, we prove the approximation ratio of the mechanism.
We use the following lemma from \cite{chen} which bounds $OPT$ in terms of
the value of $S_G$, as computed in Algorithm \ref{mechanism}, and $i^*$, the
element of maximum value.

\sloppy
\begin{lemma}[\cite{chen}]\label{lemma:greedy-bound}
Let $S_G$ be the set computed in Algorithm \ref{mechanism} and let 
$i^*=\argmax_{i\in\mathcal{N}} V(\{i\})$. We have:
\begin{displaymath}
OPT \leq \frac{e}{e-1}\big( 3 V(S_G) + 2 V(i^*)\big).
\end{displaymath}
\end{lemma}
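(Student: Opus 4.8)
The plan is to give a direct analysis of the density-greedy loop of Algorithm~\ref{mechanism}, exploiting its specific marginal-value stopping rule. Write the elements selected by the loop, in order, as $i_1,\dots,i_k$, and let $S_j=\{i_1,\dots,i_j\}$, so that $S_0=\emptyset$ and $S_k=S_G$. Denote by $g_j \defeq V(S_j)-V(S_{j-1})$ the marginal gain at step $j$ and by $\rho_j \defeq g_j/c_{i_j}$ its density. Let $i_{k+1}$ be the candidate that terminates the loop, i.e.\ the maximizer of $(V(S_G\cup\{j\})-V(S_G))/c_j$ over $j\in\mathcal{N}\setminus S_G$ that fails the test $c_{i_{k+1}} > \tfrac{B}{2}(V(S_G\cup\{i_{k+1}\})-V(S_G))/V(S_G\cup\{i_{k+1}\})$, and write $g_{k+1}=V(S_G\cup\{i_{k+1}\})-V(S_G)$. (If instead the loop terminates by exhausting $\mathcal{N}$, then $S_G=\mathcal{N}$ and $OPT\le V(\mathcal{N})=V(S_G)$, so the claim is immediate; assume henceforth that $i_{k+1}$ exists.) Finally, fix an optimal feasible set $O$, so that $V(O)=OPT$ and $\sum_{e\in O}c_e\le B$.

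First I would establish the standard density-greedy inequality: for every step $j\in\{1,\dots,k+1\}$,
\[
OPT-V(S_{j-1}) \;\le\; B\,\rho_j \;=\; \frac{B\,g_j}{c_{i_j}}.
\]
This follows by chaining three properties proved in Appendix~\ref{app:properties}: monotonicity gives $OPT=V(O)\le V(O\cup S_{j-1})$; submodularity (subadditivity of marginals) gives $V(O\cup S_{j-1})-V(S_{j-1})\le \sum_{e\in O\setminus S_{j-1}}\big(V(S_{j-1}\cup\{e\})-V(S_{j-1})\big)$; and the greedy max-density choice at step $j$ bounds each summand by $\rho_j c_e$, so the whole sum is at most $\rho_j\sum_{e\in O}c_e\le \rho_j B$. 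The crucial point is that $i_j$ is the argmax of the density over all of $\mathcal{N}\setminus S_{j-1}$, which contains $O\setminus S_{j-1}$; this is exactly why the inequality remains valid at the terminating candidate $i_{k+1}$, even though $i_{k+1}$ is never actually added.

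The heart of the argument is to apply this inequality at $j=k+1$ together with the failed stopping test. Rearranging the test gives $B g_{k+1}/c_{i_{k+1}} < 2\big(V(S_G)+g_{k+1}\big)$, and substituting into the density-greedy inequality yields $OPT-V(S_G) < 2\big(V(S_G)+g_{k+1}\big)$, i.e.\ $OPT < 3V(S_G)+2g_{k+1}$. It remains to bound the overflow gain: by submodularity $g_{k+1}=V(S_G\cup\{i_{k+1}\})-V(S_G)\le V(\{i_{k+1}\})$, and by definition of $i^*$ as the maximum-value singleton, $V(\{i_{k+1}\})\le V(i^*)$. Hence $OPT\le 3V(S_G)+2V(i^*)$.

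Finally, since $\tfrac{e}{e-1}\ge 1$ and $3V(S_G)+2V(i^*)\ge 0$, the displayed bound $OPT\le \tfrac{e}{e-1}\big(3V(S_G)+2V(i^*)\big)$ follows a fortiori; indeed this direct argument, tailored to the marginal-value stopping rule, already proves the sharper inequality without the $\tfrac{e}{e-1}$ factor, which is retained only for consistency with the more general submodular bound of \cite{chen}. I expect the only real subtlety to lie precisely in the interaction with the stopping rule: a textbook $(1-1/e)$ telescoping over all greedy steps would instead need the greedy to exhaust the budget ($\sum_j c_{i_j}\ge B$), a condition the marginal-value rule does not guarantee, so it is the single-step argument at $i_{k+1}$ — rather than a product over all steps — that makes the proof go through cleanly.
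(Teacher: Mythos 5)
Your proof is correct, but note that the paper itself never proves this lemma: it is imported verbatim from \citeN{chen}, and the paper's ``proof'' is the citation. Your argument is therefore genuinely different from what the paper relies on, and it is in fact stronger. The route in \cite{chen} reaches the $\frac{e}{e-1}$ factor by comparing the proportional-share greedy $S_G$ against the budget-exhausting density greedy (with its overflow item) and invoking the classical $(1-1/e)$ guarantee for the latter; your single application of the density inequality $OPT-V(S_{j-1})\le B\rho_j$ at the first rejected candidate $i_{k+1}$, combined with the negation of the stopping test, yields $OPT\le 3V(S_G)+2V(i^*)$ outright, with no $\frac{e}{e-1}$ loss. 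The steps check out: the density inequality only needs $i_{k+1}$ to maximize marginal value per cost over $\mathcal{N}\setminus S_G\supseteq O\setminus S_G$, not to be selected; monotonicity, submodularity and $V(\emptyset)=0$ are established in Appendix~\ref{app:properties}; and your diagnosis of why a telescoping $(1-1/e)$ analysis fails here (the proportional-share rule need not exhaust the budget) is exactly the right point. Two small polish items: the rearrangement of the failed test divides by $g_{k+1}$, so the degenerate case $g_{k+1}=0$ should be dispatched separately (there $\rho_{k+1}=0$, hence $OPT\le V(S_G)$ immediately); and it is worth remarking that your sharper inequality, if propagated through the proof of Theorem~\ref{thm:main} with a re-optimized constant $C$, would improve the paper's final constant $\approx 12.98$ --- the paper retains the weaker form only because it quotes \cite{chen} rather than reproving the bound.
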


\fussy
Using Proposition~\ref{prop:relaxation} and Lemma~\ref{lemma:greedy-bound} we can complete the proof of
Theorem~\ref{thm:main} by showing that, for any $\varepsilon > 0$, if
$OPT_{-i}'$, the optimal value of $L$ when $i^*$ is excluded from
$\mathcal{N}$, has been computed to a precision $\varepsilon$, then the set
$S^*$ allocated by the mechanism is such that:
\begin{equation} \label{approxbound}
OPT
\leq \frac{10e\!-\!3 + \sqrt{64e^2\!-\!24e\!+\!9}}{2(e\!-\!1)} V(S^*)\!
+ \! \varepsilon .
\end{equation}
To see this, let $L^*_{c_{-i^*}}$ be the  maximum value of $L$ subject to
$\lambda_{i^*}=0$, $\sum_{i\in \mathcal{N}\setminus{i^*}}c_i\leq B$. From line
\ref{relaxexec} of Algorithm~\ref{mechanism}, we have
$L^*_{c_{-i^*}}-\varepsilon\leq OPT_{-i^*}' \leq L^*_{c_{-i^*}}+\varepsilon$.

If the condition on line \ref{c} of the algorithm holds then, from the lower bound in Proposition~\ref{prop:relaxation},
\begin{displaymath}
    V(i^*) \geq \frac{1}{C}L^*_{c_{-i^*}}-\frac{\varepsilon}{C} \geq
    \frac{1}{C}OPT_{-i^*} -\frac{\varepsilon}{C}.
\end{displaymath}
Also, $OPT \leq OPT_{-i^*} + V(i^*)$,
hence,
\begin{equation}\label{eq:bound1}
    OPT\leq (1+C)V(i^*) + \varepsilon.
\end{equation}
If the condition on line \ref{c} does not hold, by observing that $L^*_{c_{-i^*}}\leq L^*_c$ and
the upper bound of Proposition~\ref{prop:relaxation}, we get
\begin{displaymath}
    V(i^*)\leq \frac{1}{C}L^*_{c_{-i^*}} + \frac{\varepsilon}{C}
    \leq \frac{1}{C} \big(2 OPT + 2 V(i^*)\big) + \frac{\varepsilon}{C}.
\end{displaymath}
Applying Lemma~\ref{lemma:greedy-bound},
\begin{displaymath}
    V(i^*) \leq \frac{1}{C}\left(\frac{2e}{e-1}\big(3 V(S_G)
    + 2 V(i^*)\big) + 2 V(i^*)\right) + \frac{\varepsilon}{C}.
\end{displaymath}
Note that $C$ satifies $C(e-1) -6e  +2 > 0$, hence
\begin{align*}
    V(i^*) \leq \frac{6e}{C(e-1)- 6e + 2} V(S_G) 
    + \frac{(e-1)\varepsilon}{C(e-1)- 6e + 2}.
\end{align*}
Finally, using Lemma~\ref{lemma:greedy-bound} again, we get
\begin{equation}\label{eq:bound2}
    OPT \leq 
    \frac{3e}{e-1}\left( 1 + \frac{4e}{C (e-1) -6e  +2}\right) V(S_G)
    + \frac{2e\varepsilon}{C(e-1)- 6e + 2}.
\end{equation}
Our choice of $C$, namely,
\begin{equation}\label{eq:constant}
    C =  \frac{8e-1 + \sqrt{64e^2-24e + 9}}{2(e-1)},
\end{equation}
 is precisely to minimize the maximum among the coefficients of $V_{i^*}$  and $V(S_G)$  in \eqref{eq:bound1}
and \eqref{eq:bound2}, respectively. Indeed, consider:
\begin{displaymath}
    \max\left(1+C,\frac{3e}{e-1}\left( 1 + \frac{4e}{C (e-1) -6e  +2}
            \right)\right).
\end{displaymath}
This function has two minima, only one of those is such that $C(e-1) -6e
+2 \geq 0$. This minimum is precisely \eqref{eq:constant}.
For this minimum, $\frac{2e\varepsilon}{C(e-1)- 6e + 2}\leq \varepsilon.$
Placing the expression of $C$ in \eqref{eq:bound1} and \eqref{eq:bound2}
gives the approximation ratio in \eqref{approxbound}, and concludes the proof
of Theorem~\ref{thm:main}.\hspace*{\stretch{1}}\qed

\section{Proof of Theorem \ref{thm:lowerbound}}\label{proofoflowerbound}

Suppose, for contradiction, that such a mechanism exists. From Myerson's Theorem \cite{myerson}, a single parameter auction is truthful if and only if the allocation function is monotone and agents are paid theshold payments. Consider two
experiments with dimension $d=2$, such that $x_1 = e_1=[1 ,0]$, $x_2=e_2=[0,1]$
and $c_1=c_2=B/2+\epsilon$. Then, one of the two experiments, say, $x_1$, must
be in the set selected by the mechanism, otherwise the ratio is unbounded,
a contradiction. If $x_1$ lowers its value to $B/2-\epsilon$, by monotonicity
it remains in the solution; by  threshold payment, it is paid at least
$B/2+\epsilon$. So $x_2$ is not included in the solution by budget feasibility
and individual rationality: hence, the selected set attains a value $\log2$,
while the optimal value is $2\log 2$.\hspace*{\stretch{1}}\qed

\section{Extensions}\label{sec:ext}
\subsection{Strategic Experimental Design with Non-Homotropic Prior}\label{sec:bed}


In the general case where the prior distribution of the experimenter on the
model $\beta$ in \eqref{model} is not homotropic and has a generic covariance
matrix $R$, the value function takes the general form given by
\eqref{dcrit}.

Let us denote by $\lambda$ (resp. $\Lambda$) the smallest (resp. largest)
eigenvalue of $R$, applying the mechanism described in
Algorithm~\ref{mechanism} and adapting the analysis of the approximation ratio
\emph{mutatis mutandis}, we get the following result which extends
Theorem~\ref{thm:main}. 

\begin{theorem}
 For any $\delta\in(0,1]$, and any $\epsilon\in(0,1]$, there exists a $\delta$-truthful, individually rational
 and budget feasible mechanism for the objective function $V$ given by
 \eqref{dcrit} that runs in time
 $O(\text{poly}(n, d, \log\log\frac{B\Lambda}{\varepsilon\delta b\lambda}))$
 and allocates a set $S^*$ such that:
 \begin{displaymath}
    OPT \leq 
    \bigg(\frac{6e-2}{e-1}\frac{1/\lambda}{\log(1+1/\lambda)}+ 4.66\bigg)V(S^*)
    + \varepsilon.
\end{displaymath}
\end{theorem}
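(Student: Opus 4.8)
The plan is to reduce the general objective \eqref{dcrit} to the homotropic case already treated in Theorem~\ref{thm:main} by a change of variables, and then to re-run the pipeline of Sections~\ref{sec:approximation}--\ref{sec:main}, carefully tracking how the modified feature norms propagate the constant $\kappa\defeq\frac{1/\lambda}{\log(1+1/\lambda)}$ through each estimate. Writing $\tilde{x}_i\defeq R^{-1/2}x_i$, a direct computation gives
\begin{displaymath}
V(S)=\tfrac12\log\det(R+\T{X_S}X_S)-\tfrac12\log\det R=\tfrac12\log\det\Big(I_d+\textstyle\sum_{i\in S}\tilde{x}_i\T{\tilde{x}_i}\Big),
\end{displaymath}
so, up to the harmless global factor $\tfrac12$ (which scales $V$ uniformly and affects neither the approximation ratio nor monotonicity/submodularity), $V$ has exactly the homotropic form \eqref{modified}, but with transformed features whose squared norms satisfy $\|\tilde{x}_i\|_2^2=\T{x_i}R^{-1}x_i\in[b/\Lambda,\,1/\lambda]$, since the eigenvalues of $R^{-1}$ lie in $[1/\Lambda,1/\lambda]$. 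Thus the only substantive difference from the homotropic analysis is that the uniform upper bound $\|x_i\|_2^2\le 1$ is replaced by $\|\tilde{x}_i\|_2^2\le 1/\lambda$ and the lower bound $b$ by $b/\Lambda$.

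First I would revisit the relaxation-ratio estimate of Lemma~\ref{lemma:relaxation-ratio}. The only place $\|x_i\|_2^2\le1$ is used there is to pass from $\T{\tilde{x}_i}A(S)^{-1}\tilde{x}_i$ to a linear lower bound on $\log(1+\T{\tilde{x}_i}A(S)^{-1}\tilde{x}_i)$. Since now $\T{\tilde{x}_i}A(S)^{-1}\tilde{x}_i\le\|\tilde{x}_i\|_2^2\le1/\lambda$, concavity of $t\mapsto\log(1+t)$ on $[0,1/\lambda]$ gives $\log(1+t)\ge \frac{\log(1+1/\lambda)}{1/\lambda}\,t=\kappa^{-1}t$ rather than $\log(1+t)\ge t$. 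Propagating this through the derivative-ratio argument replaces the factor $\tfrac12$ by $\tfrac{1}{2\kappa}$, so that $\tfrac{1}{2\kappa}L(\lambda)\le F(\lambda)\le L(\lambda)$. Lemma~\ref{lemma:rounding} and the bound $F(\bar{\lambda})\le OPT+\max_iV(i)$ use only submodularity and carry over unchanged. Consequently Proposition~\ref{prop:relaxation} becomes $OPT\le L^*_c\le 2\kappa\,OPT+2\kappa\max_{i}V(i)$. (Note that at $R=I_d$ one has $\lambda=1$ and $\kappa=1$, recovering the factor $\tfrac12$.)

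Next I would re-examine the almost-monotone solver. In Lemma~\ref{lemma:derivative-bounds} the upper bound on $\partial_iL$ becomes $1/\lambda$, and repeating the Sherman--Morrison induction with $\T{x_k}A_{k-1}^{-1}x_k\le1/\lambda$ (so that $\frac{a}{1+a}\le\frac{1/\lambda}{1+1/\lambda}=\frac{1}{\lambda+1}$) turns the per-step shrinkage $\tfrac12$ into $\frac{\lambda}{\lambda+1}$ and yields $\partial_iL(\lambda)\ge (b/\Lambda)(1+1/\lambda)^{-n}$. These quantities enter the accuracy $\varepsilon'$ of the barrier method only logarithmically, so Proposition~\ref{prop:monotonicity} carries over with $b$ replaced by $b\lambda/\Lambda$; tracing the substitution through the $\log\log$ term gives the running time $O(\mathrm{poly}(n,d,\log\log\frac{B\Lambda}{\varepsilon\delta b\lambda}))$. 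The $\delta$-monotonicity, budget feasibility and payment analysis (Lemma~\ref{lemma:monotone}) are purely combinatorial and are unaffected.

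Finally I would re-run the approximation analysis of Appendix~\ref{sec:proofofmainthm} with $2OPT+2V(i^*)$ replaced by $2\kappa\,OPT+2\kappa V(i^*)$, the greedy bound of Lemma~\ref{lemma:greedy-bound} being unchanged. In the branch where the test on line~\ref{c} holds one still gets $OPT\le(1+C)V(i^*)+\varepsilon$, while in the other branch, combining the new Proposition with Lemma~\ref{lemma:greedy-bound} produces a coefficient on $V(S_G)$ of $\frac{3e}{e-1}\big(1+\frac{4\kappa e}{C(e-1)-6\kappa e+2\kappa}\big)$. The essential new feature is that the denominator $C(e-1)-6\kappa e+2\kappa$ must be positive, which forces $C>\frac{(6e-2)\kappa}{e-1}$; writing $C=\frac{(6e-2)\kappa}{e-1}+D$ collapses that denominator to $D(e-1)$ and makes both branch coefficients explicit in $D$. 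Choosing $D$ to minimize the larger of the two coefficients produces the leading term $\frac{6e-2}{e-1}\kappa$ together with the additive constant $\approx 4.66$, and at $R=I_d$ (where $\kappa=1$) recovers the constant $\simeq 12.98$ of Theorem~\ref{thm:main}. The main obstacle is precisely this last bookkeeping step: the factor $\kappa$ now multiplies $OPT$ \emph{inside} the quantity being bounded, so it cannot simply be pulled out, and one must check that the balancing choice of $C$ keeps the Case-2 denominator positive for every $\lambda\in(0,1]$ while simultaneously optimizing the additive constant.
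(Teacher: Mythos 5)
Your proposal is correct and follows exactly the route the paper itself indicates (the paper offers no written proof beyond ``apply Algorithm~\ref{mechanism} and adapt the analysis \emph{mutatis mutandis}''): the whitening $\tilde{x}_i=R^{-1/2}x_i$ reduces \eqref{dcrit} to the homotropic objective with $\|\tilde{x}_i\|_2^2\in[b/\Lambda,1/\lambda]$, the chord bound on $\log(1+t)$ over $[0,1/\lambda]$ turns the $\tfrac12$ of Lemma~\ref{lemma:relaxation-ratio} into $\tfrac{1}{2\kappa}$, and the rebalanced choice $C=\frac{(6e-2)\kappa}{e-1}+D$ with $D\approx 3.66$ yields the stated ratio and running time, consistently recovering $\simeq 12.98$ at $R=I_d$. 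Your identification of the one delicate point --- that $\kappa$ multiplies $OPT$ inside the branch-2 bound, so the denominator $C(e-1)-6\kappa e+2\kappa$ must be kept positive while the additive constant is optimized uniformly in $\kappa$ --- is exactly the bookkeeping the paper elides.
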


\subsection{Non-Bayesian Setting}

In the non-bayesian setting, \emph{i.e.} when the experimenter has no prior
distribution on the model, the covariance matrix $R$ is the zero matrix. In this case,
the ridge regression estimation procedure \eqref{ridge} reduces to simple least squares (\emph{i.e.}, linear regression),
and the $D$-optimality criterion reduces to the entropy of $\hat{\beta}$, given by:
\begin{equation}\label{eq:d-optimal}
V(S) = \log\det(X_S^TX_S)
\end{equation}
A natural question which arises is whether it is possible to design
a deterministic mechanism in this setting. Since \eqref{eq:d-optimal} may take
arbitrarily small negative values, to define a meaningful approximation one
would consider  the (equivalent) maximization of $V(S) = \det\T{X_S}X_S$.
However, the following lower bound implies that such an optimization goal
cannot be attained under the constraints of  truthfulness, budget feasibility,
and individual rationality.

\begin{lemma}
For any $M>1$, there is no $M$-approximate, truthful, budget feasible,
individually rational mechanism for a budget feasible reverse auction with
value function $V(S) = \det{\T{X_S}X_S}$.  
\end{lemma}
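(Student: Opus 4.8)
The plan is to mirror the argument already used for Theorem~\ref{thm:lowerbound}, but to exploit the fact that the bare determinant $V(S)=\det(\T{X_S}X_S)=\det\big(\sum_{i\in S}x_i\T{x_i}\big)$ \emph{vanishes on every rank-deficient set}. This is the crucial difference from the $\log\det$ objective: there a single experiment already has positive value, and the construction only yields a ratio of $2$; here a set that fails to span $\reals^2$ has value exactly $0$, so forcing the mechanism to drop even one needed experiment sends the ratio to infinity, which is what lets us rule out an $M$-approximation for \emph{every} finite $M>1$. As in the other lower bounds, I would invoke Myerson's theorem~\cite{myerson}: a truthful single-parameter mechanism must have a monotone allocation and pay each selected agent her threshold (critical) payment, so it suffices to derive a contradiction between monotonicity, threshold payments, and budget feasibility.

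Concretely, I would work in dimension $d=2$ with exactly two experiments $x_1=e_1$ and $x_2=e_2$, and set the reported costs to $c_1=c_2=B/2-\epsilon$ for a small $\epsilon\in(0,B/2)$. The first step is to observe that among all budget-feasible sets, only $\{x_1,x_2\}$ (cost $B-2\epsilon\le B$) has positive value: $V(\{x_1,x_2\})=\det(I_2)=1$, whereas $V(\emptyset)=V(\{x_1\})=V(\{x_2\})=0$ since each proper subset is rank-deficient. Hence $OPT=1$, and since the mechanism is $M$-approximate with $M<\infty$ it must output a set of value at least $OPT/M>0$, which can only be $\{x_1,x_2\}$; in particular both agents are selected at the base instance. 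The second step is the threshold-payment bound: holding $c_2=B/2-\epsilon$ fixed and raising agent $1$'s report to any $c_1'\le B/2+\epsilon$ keeps $\{x_1,x_2\}$ feasible (cost $\le B$) and still the unique value-positive set, so the $M$-approximate mechanism must keep selecting it; thus $x_1$ remains allocated for all reports up to $B/2+\epsilon$, and its threshold payment satisfies $p_1\ge B/2+\epsilon$. By the symmetric argument (holding $c_1=B/2-\epsilon$ and varying agent $2$), $p_2\ge B/2+\epsilon$. Since both are allocated at the base instance, the total payment is $p_1+p_2\ge B+2\epsilon>B$, contradicting budget feasibility. This contradiction holds for any finite $M$, so no $M$-approximate mechanism can exist for any $M>1$.

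The main obstacle, and the step I would write most carefully, is the threshold-payment computation: I must argue that each agent's critical value is at least $B/2+\epsilon$ \emph{simultaneously} for both agents. The point to get right is that when I perturb a single agent's report upward (with the other's report frozen at $B/2-\epsilon$), the pair $\{x_1,x_2\}$ stays both budget-feasible and the \emph{unique} positive-value feasible set, so monotone $M$-approximation genuinely forces continued selection up to cost $B/2+\epsilon$. Everything else is routine, and it is worth flagging explicitly in the write-up why the vanishing determinant of any proper subset is what upgrades the bound from the ratio-$2$ obstruction of Theorem~\ref{thm:lowerbound} to the unbounded inapproximability claimed here.
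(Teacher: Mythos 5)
Your argument is correct, and it is worth noting that the paper's own proof environment for this lemma is in fact left empty in the source; the only argument the authors wrote down (in a commented-out draft) is a different and more elaborate construction. That draft uses four points in $\reals^2$ (two copies of $e_1$ at cost $B/2+\epsilon$ plus two cheap ``helper'' points $\delta e_1,\delta e_2$ with $\delta$ chosen as a function of $M$) and follows the pattern of Theorem~\ref{thm:lowerbound}: start with costs just above $B/2$, argue one expensive point must be selected, lower its cost, and use monotonicity plus threshold payments to crowd out the other. The helper points are needed there only to make $OPT$ positive at the initial cost vector, since with two orthogonal points each costing more than $B/2$ every feasible set is rank-deficient and the approximation guarantee is vacuous. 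Your proof sidesteps this entirely by starting at costs $B/2-\epsilon$, where $\{x_1,x_2\}$ is feasible and is the \emph{unique} feasible set of positive value; the $M$-approximation guarantee (applied at every perturbed cost vector $(c_1',c_2)$ with $c_1'\le B/2+\epsilon$) then pins down the allocation without any appeal to monotonicity, and Myerson's threshold-payment characterization \cite{myerson} forces $p_1,p_2\ge B/2+\epsilon$ simultaneously, contradicting budget feasibility directly. This buys a cleaner instance (two agents, no $M$-dependent parameter $\delta$), drops the reliance on individual rationality, and makes transparent exactly where the bare determinant differs from the $\log\det$ objective of Theorem~\ref{thm:lowerbound}: a rank-deficient set has value exactly zero rather than merely half the optimum, which is what upgrades the ratio-$2$ obstruction to inapproximability for every finite $M$. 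The one point to state explicitly in a final write-up is that the paper's definition of threshold payments reads $\inf$ where the critical value $\sup\{c_i':i\in S(c_i',c_{-i})\}$ is meant; your argument (like the paper's own proof of Theorem~\ref{thm:lowerbound}) correctly uses the latter.
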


\begin{proof}
From Myerson's Theorem \cite{myerson}, a single parameter auction is truthful if and only if the allocation function is monotone and agents are paid theshold payments. Given $M>1$, consider $n=4$ experiments of dimension $d=2$. For $e_1,e_2$ the standard basis vectors in $\reals^2$, let $x_1 = e_1$, $x_2 = e_1$, and $x_3=\delta e_1$, $x_4=\delta e_2$, where $0<\delta<1/(M-1) $. Moreover, assume that $c_1=c_2=0.5+\epsilon$, while $c_3=c_4=\epsilon$, for some small $\epsilon>0$. Suppose, for the sake of contradiction, that there exists a mechanism with approximation ratio $M$. Then, it must include in the solution $S$ at least one of $x_1$ or $x_2$: if not, then $V(S)\leq \delta^2$, while $OPT = (1+\delta)\delta$, a contradiction. Suppose thus that the solution contains $x_1$. By the monotonicity property, if the cost of experiment $x_1$ reduces to $B/2-3\epsilon$, $x_1$ will still be in the solution. By threshold payments, experiment $x_1$ receives in this case a payment that is at least $B/2+\epsilon$. By individual rationality and budget feasibility, $x_2$ cannot be included in the solution, so $V(S)$ is at most $(1+\delta)\delta$. However, the optimal solution includes all experiments, and yields $OPT=(1+\delta)^2$, a contradiction. 

\end{proof}


\subsection{Beyond Linear Models}
Selecting experiments that maximize the information gain in the Bayesian setup
leads to a natural generalization to other learning examples beyond linear
regression. In particular, consider the following variant of the standard PAC learning setup \cite{valiant}: assume that the features $x_i$, $i\in \mathcal{N}$ take values in some generic set $\Omega$, called the \emph{query space}. 
Measurements $y_i\in\reals$ are given by 
 \begin{equation}\label{eq:hypothesis-model}
    y_i = h(x_i) + \varepsilon_i
\end{equation}
where $h\in \mathcal{H}$ for some subset $\mathcal{H}$ of all possible mappings
$h:\Omega\to\reals$, called the \emph{hypothesis space}. As before, we assume that the experimenter has a prior distribution on the
hypothesis $h\in \mathcal{H}$; we also assume that  $\varepsilon_i$
are random variables in $\reals$, not necessarily identically distributed, that
are independent \emph{conditioned on $h$}. As before, the features $x_i$ are public, and the goal of the experimenter is to (a) retrieve measurements $y_i$ and (b) estimate $h$ as accurately as possible.

This model is quite broad, and
captures many classic machine learning tasks; we give a few concrete examples below:
\begin{enumerate}
\item\textbf{Generalized Linear Regression.} In this case, $\Omega=\reals^d$, $\mathcal{H}$ is the set of linear maps $\{h(x) = \T{\beta}x \text{ s.t. } \beta\in \reals^d\}$, and $\varepsilon_i$ are independent zero-mean variables (not necessarily identically distributed). 
\item\textbf{Learning Binary Functions with Bernoulli Noise.} When learning a binary function under noise, the experimenter wishes to determine a binary function $h$ by testing its output on differrent inputs; however, the output may be corrupted with probability $p$. Formally, $\Omega = \{0,1\}^d$, $\mathcal{H}$ is some subset of binary functions $h:\Omega\to\{0,1\}$, and $$\varepsilon_i =\begin{cases}0, &\text{w.~prob.}\;1-p\\\bar{h}(x_i)-h(x_i), &\text{w.~prob.}\;p\end{cases}$$
\item\textbf{Logistic Regression.} Logistic regression aims to learn a hyperplane separating $+1$--labeled values from $-1$--labeled values; again, values can be corrupted, and the probability that a label is flipped drops with the distance from the hyperplane. Formally, $\Omega=\reals^d$,  $\mathcal{H}$ is the set of maps $\{h(x) = \mathop{\mathrm{sign}}(\beta^T x) \text{ for some } \beta\in\reals^d\}$, and $\varepsilon_i$ are independent conditioned on $\beta$ such that $$\varepsilon_i=\begin{cases} -2\cdot\id_{\beta^Tx>0}, & \text{w.~prob.} \frac{1}{1+e^{\beta^Tx}}\;\\ +2\cdot\id_{\beta^Tx<0},&\text{w.~prob.}\frac{e^{\beta^Tx}}{1+e^{\beta^Tx}}\;\end{cases}$$
\end{enumerate}

We can again define the information gain as an objective to maximize:
\begin{align}\label{general}
V(S) = \entropy(h) -\entropy(h\mid y_S),\quad S\subseteq\mathcal{N} 
\end{align}
This is a monotone set function, and it clearly satisfies $V(\emptyset)=0$.
 In general, the information gain is not a submodular function. However, when the errors $\epsilon_i$ are independent conditioned on $h$, the following lemma holds: \begin{lemma}
The value function given by the information gain \eqref{general} is submodular.
\end{lemma}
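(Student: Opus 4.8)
The plan is to reduce submodularity of the information gain to a statement about conditional mutual information, and then to exploit the conditional independence of the measurements given $h$. First I would rewrite the objective as $V(S) = \entropy(h) - \entropy(h\mid y_S) = \mutual(h; y_S)$, the mutual information between the hypothesis and the collected measurements. By the chain rule for mutual information, the marginal value of adding experiment $i$ to a set $S$ with $i\notin S$ is
$$V(S\cup\{i\}) - V(S) = \mutual(h; y_{S\cup\{i\}}) - \mutual(h; y_S) = \mutual(h; y_i\mid y_S).$$
Since a set function is submodular iff its marginal contributions are non-increasing, it therefore suffices to show that for all $S\subseteq T\subseteq\mathcal{N}$ with $i\notin T$,
$$\mutual(h; y_i\mid y_S) \geq \mutual(h; y_i\mid y_T).$$

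The key structural fact I would invoke is that, because the noise variables $\varepsilon_j$ are independent conditioned on $h$ and the features $x_j$ are public (hence deterministic), the measurements $\{y_j\}_{j}$ are mutually independent conditioned on $h$; in particular $y_i$ is independent of $y_{T\setminus S}$ given the pair $(h, y_S)$, so that $\mutual(y_{T\setminus S}; y_i \mid y_S, h) = 0$. Writing $U\defeq y_{T\setminus S}$ (so $y_T = (y_S, U)$), I would then apply the chain rule for conditional mutual information to $\mutual(h, U; y_i\mid y_S)$ in two different orders:
$$\mutual(h, U; y_i\mid y_S) = \mutual(h; y_i\mid y_S) + \mutual(U; y_i\mid y_S, h) = \mutual(U; y_i\mid y_S) + \mutual(h; y_i\mid y_S, U).$$
The middle term $\mutual(U; y_i\mid y_S, h)$ vanishes by the conditional independence just noted, and $\mutual(h; y_i\mid y_S, U) = \mutual(h; y_i\mid y_T)$ since $(y_S, U) = y_T$. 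Rearranging yields
$$\mutual(h; y_i\mid y_S) - \mutual(h; y_i\mid y_T) = \mutual(y_{T\setminus S}; y_i\mid y_S) \geq 0,$$
where nonnegativity is simply the nonnegativity of conditional mutual information. This establishes the required monotonicity of marginal contributions, and hence submodularity.

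The only delicate point — and the main obstacle — is justifying the conditional independence $y_i \perp y_{T\setminus S}\mid (h, y_S)$ rigorously from the modelling assumption that the $\varepsilon_i$ are independent conditioned on $h$; this is precisely where that hypothesis is essential, and one should note that without it the information gain is in general not submodular. I would deliberately phrase the entire argument in terms of mutual information rather than differential or discrete entropy directly: this lets a single proof cover both the continuous cases (e.g.\ generalized linear and logistic regression, where $h$ and the $y_i$ are real-valued) and the discrete cases (learning binary functions with Bernoulli noise), and it sidesteps any subtlety about finiteness of entropies, since every quantity appearing is a genuine mutual information, which is always well-defined and nonnegative.
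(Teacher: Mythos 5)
Your proof is correct, but it takes a genuinely different route from the paper's. The paper uses the chain rule to write $V(S) = \entropy(y_S) - \entropy(y_S\mid h)$ and then invokes conditional independence to split the second term as $\sum_{i\in S}\entropy(y_i\mid h)$, so that $V$ becomes a submodular function (joint entropy) plus a modular one; submodularity of joint entropy is cited as a known fact. You instead verify the diminishing-returns inequality directly: you identify the marginal contribution as $\mutual(h; y_i\mid y_S)$ and show it is non-increasing in $S$ by expanding $\mutual(h, y_{T\setminus S}; y_i\mid y_S)$ via the chain rule in two orders and using conditional independence to kill the term $\mutual(y_{T\setminus S}; y_i\mid y_S, h)$, leaving a difference equal to the nonnegative quantity $\mutual(y_{T\setminus S}; y_i\mid y_S)$. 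The two arguments rest on the same underlying facts (submodularity of joint entropy is itself proved by ``conditioning reduces entropy,'' i.e., nonnegativity of conditional mutual information), but yours is more self-contained, and your decision to phrase everything in terms of mutual information is a genuine advantage: the paper's decomposition manipulates individual (differential) entropies, which in the continuous examples could be negative or require integrability caveats, whereas every quantity in your argument is a bona fide mutual information and hence well-defined and nonnegative. One point worth making explicit in a polished write-up is the step from ``the $\varepsilon_j$ are independent given $h$'' to ``$y_i \perp y_{T\setminus S} \mid (h, y_S)$'': since the $x_j$ are deterministic and $y_j$ is a function of $(h,\varepsilon_j)$, the full collection $\{y_j\}$ is mutually independent given $h$, from which the needed conditional independence of disjoint sub-collections given $h$ and a third disjoint sub-collection follows; you flag this correctly as the one delicate point.
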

\begin{proof}
A more general statement for graphical models is shown in~\cite{krause2005near}; in short, using the chain rule for the conditional entropy we get:
\begin{equation}\label{eq:chain-rule}
    V(S) = H(y_S) - H(y_S \mid h) 
    = H(y_S) - \sum_{i\in S} H(y_i \mid h)
\end{equation}
where the second equality comes from the independence of the $y_i$'s
conditioned on $h$. Recall that the joint entropy of a set of random
variables is a submodular function. Thus, the value function is written in
\eqref{eq:chain-rule} as the sum of a submodular function and a modular function.
\end{proof}

This lemma implies that learning an \emph{arbitrary hypothesis, under an
arbitrary prior} when noise is conditionally independent leads to a submodular
value function. Hence, we can apply the previously known results by \citeN{singer-mechanisms} and \citeN{chen} to get
the following corollary:
\begin{corollary}
    For Bayesian experimental design with an objective given by the
    information gain \eqref{general}, there exists a  randomized, polynomial-time, budget
    feasible, individually rational, and universally truthful mechanism with
    a $7.91$ approximation ratio, in expectation.

    In cases where maximizing \eqref{general} can be done in polynomial time in
    the full-information setup, there exists a deterministic,  polynomial-time, budget feasible,
    individually rational, and truthful mechanism for Bayesian experimental
    design with an $8.34$ approximation ratio. 
\end{corollary}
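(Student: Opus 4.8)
The plan is to read the corollary as a pure black-box application of the budget-feasible submodular maximization mechanisms of \citeN{singer-mechanisms} and \citeN{chen}, once we have verified that the information-gain objective \eqref{general} belongs to the class those mechanisms are designed for. That class is precisely the normalized, monotone, submodular set functions accessed through a value-query oracle. Normalization and monotonicity of $V$ were already recorded immediately after \eqref{general} (namely $V(\emptyset)=0$ and $V$ monotone), and submodularity is exactly the content of the preceding lemma, which holds whenever the noise terms $\varepsilon_i$ are independent conditioned on $h$. So the first step is simply to assemble these three facts into the single statement that $V$ is a normalized, monotone, submodular objective, and to note that nothing about the specific hypothesis space $\mathcal{H}$ or the prior is needed beyond them.

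For the randomized part, I would invoke the result of \citeN{chen} directly: for any normalized monotone submodular $V$ supplied through a value oracle, their mechanism is universally truthful, budget feasible, individually rational, runs in polynomial time in the value-query model, and achieves a $7.91$-approximation in expectation. Since $V$ in \eqref{general} meets the hypotheses, the mechanism applies verbatim and yields the first claim. The only point worth spelling out is what ``polynomial time'' means here: the mechanism issues a polynomial number of value queries plus polynomial additional work, so it is polynomial provided each query $V(S)$ can be answered efficiently, as in the Gaussian case where $V(S)=\tfrac{1}{2}\log\det(R+\T{X_S}X_S)-\tfrac{1}{2}\log\det R$.

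For the deterministic part, I would invoke the second mechanism of \citeN{chen}, which attains an $8.34$-approximation and is exactly truthful, but which additionally assumes an oracle returning the full-information optimum $\max\{V(S):\sum_{i\in S}c_i\leq B\}$. Under the hypothesis stated in the corollary, that this maximization is solvable in polynomial time, the oracle is implementable efficiently; substituting it into the mechanism gives a deterministic, polynomial-time, budget-feasible, individually rational, truthful mechanism with approximation ratio $8.34$. This matches the remark in the related-work section that the deterministic mechanism is polynomial-time precisely when the underlying full-information problem is.

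The main ``obstacle'' is not an inequality but a verification: confirming that the cited mechanisms are genuinely stated for \emph{arbitrary} normalized monotone submodular objectives, so that the preceding lemma supplies everything they require, and being careful about the meaning of polynomial time. In full generality the entropy differences defining \eqref{general} need not be efficiently computable, so the polynomial-time guarantee should be understood in the value-query model for the randomized mechanism, and as conditional on poly-time full-information optimization for the deterministic one; both qualifications are already implicit in how the corollary is phrased.
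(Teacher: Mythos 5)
Your proposal is correct and matches the paper's approach exactly: the paper treats this corollary as an immediate black-box consequence of the cited mechanisms of \citeN{singer-mechanisms} and \citeN{chen}, justified by the preceding lemma establishing submodularity together with the monotonicity and normalization of the information gain noted just before it. Your added care about the value-query model and the conditional nature of the polynomial-time claim for the deterministic mechanism is consistent with the paper's own remark immediately following the corollary.
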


Note however that, in many scenarios covered by this model (including the last
two examples above), even computing the entropy under a given set might be
a hard task---\emph{i.e.}, the value query model may not apply. Hence,
identifying learning tasks in the above class for which truthful or universally
truthful constant approximation mechanisms exist, or studying these problems in
the context of stronger query models such as the demand model
\cite{dobz2011-mechanisms,bei2012budget} remains an interesting open question. 


\section{Non-Truthfulness of the Maximum Operator}\label{sec:non-monotonicity}
We give a counterxample of the truthfulness of the maximum mechanism whose
allocation rule is defined in \eqref{eq:max-algorithm} when the value function
$V$ is as defined in \eqref{obj}. We denote by $(e_1, e_2, e_3)$ the canonical
basis of $\reals^3$ and define the following feature vectors: $x_1=e_1$,
$x_2=\frac{1}{\sqrt{2}}\cos\frac{\pi}{5}e_2
+ \frac{1}{\sqrt{2}}\sin\frac{\pi}{5}e_3$, $x_3=\frac{1}{\sqrt{2}}e_2$ and $x_4
= \frac{1}{2}e_3$, with associated costs $c_1 = \frac{5}{2}$, $c_2=c_3 = 1$ and
$c_4=\frac{2}{3}$. We also assume that the budget of the auctioneer is
$B=\frac{5}{2}$.

Note that $V(x_i) = \log(1+\|x_i\|^2)$, so $x_1$ is the point of maximum value.
Let us now compute the output of the greedy heuristic. We have:
\begin{equation}\label{eq:local-bazinga}
    \frac{V(x_1)}{c_1} \simeq 0.277,\;
    \frac{V(x_2)}{c_2}= \frac{V(x_3)}{c_3} \simeq 0.405,\;
    \frac{V(x_4)}{c_4} \simeq 0.335
\end{equation}
so the greedy heuristic will start by selecting $x_2$ or $x_3$. Without loss of
generality, we can assume that it selected $x_2$. From the Sherman-Morrison
formula we get:
\begin{displaymath}
    V(\{x_i, x_j\}) - V(x_i) = \log\bigg(1+ \|x_j\|^2
    - \frac{\ip{x_i}{x_j}^2}{1+\|x_i\|^2}\bigg)
\end{displaymath}
In particular, when $x_i$ and $x_j$ are orthogonal $V(\{x_i, x_j\}) = V(x_j)$.
This allows us to compute:
\begin{displaymath}
    \frac{V(\{x_2,x_3\})-V(x_2)}{c_3}=\log\bigg(1+\frac{1}{2}
    - \frac{1}{6}\cos^2\frac{\pi}{5}\bigg)\simeq 0.329
\end{displaymath}
\begin{displaymath}
    \frac{V(\{x_2,x_4\})-V(x_2)}{c_4}=\frac{3}{2}\log\bigg(1+\frac{1}{4}
    - \frac{1}{12}\sin^2\frac{\pi}{5}\bigg)\simeq 0.299
\end{displaymath}
Note that at this point $x_1$ cannot be selected without exceding the budget.
Hence, the greedy heuristic will add $x_3$ to the greedy solution and returns
the set $\{x_2, x_3\}$ with value:
\begin{displaymath}
    V(\{x_2, x_3\}) = V(x_2) + V(\{x_2, x_3\}) - V(x_2)\simeq 0.734
\end{displaymath}
In contrast, $V(x_1) \simeq 0.693$ so the mechanism will allocate to $\{x_2,
x_3\}$.

Let us now assume that user $3$ reduces her cost. It follows from
\eqref{eq:local-bazinga} that the greedy heuristic will start by adding her to
the greedy solution. Furthermore:
\begin{displaymath}
    \frac{V(\{x_3,x_2\})-V(x_3)}{c_2}=\log\bigg(1+\frac{1}{2}
    - \frac{1}{6}\cos^2\frac{\pi}{5}\bigg)\simeq 0.329
\end{displaymath}
\begin{displaymath}
    \frac{V(\{x_3,x_4\})-V(x_3)}{c_4}
    =\frac{3}{2}\log\bigg(1+\frac{1}{4}\bigg)\simeq 0.334
\end{displaymath}
Hence, the greedy solution will be $\{x_3, x_4\}$ with value:
\begin{displaymath}
    V(\{x_3, x_4\}) = V(x_3) + V(\{x_3, x_4\}) - V(x_3)\simeq 0.628
\end{displaymath}
As a consequence the mechanism will allocate to user $1$ in this case. By
reducing her cost, user 3, who was previously allocated, is now rejected by the
mechanism. This contradicts the monotonicity of the allocation rule, hence its
truthfulness by Myerson's theorem \cite{myerson}, which states that a single parameter auction is truthful if and only if the allocation function is monotone and agents are paid theshold payments.

\end{document}